\newtheorem{theo}{Theorem}
\newtheorem{cor}[theo]{Corollary}
\newtheorem{lemm}[theo]{Lemma}
\newtheorem{obs}[theo]{Observation}
\newtheorem{conj}[theo]{Conjecture}
\title{Large induced forests in planar graphs with girth 4 or 5}
\author[a]{François Dross}
\author[b]{Mickael Montassier}
\author[c]{Alexandre Pinlou}
\affil[a]{{\small ENS de Lyon, LIRMM}}
\affil[b]{{\small Université Montpellier 2, LIRMM}}
\affil[c]{{\small Université Montpellier 3, LIRMM\medskip}}
\affil[ ]{{\small 161 rue Ada, 34095 Montpellier Cedex 5, France}}
\affil[
]{\href{mailto:francois.dross@ens-lyon.fr,mickael.montassier@lirmm.fr,alexandre.pinlou@lirmm.fr}{\small{francois.dross@ens-lyon.fr,\{mickael.montassier,alexandre.pinlou\}@lirmm.fr}}}
\begin{document}
\definecolor{yqyqyq}{rgb}{0.5019607843137255,0.5019607843137255,0.5019607843137255}
\definecolor{uuuuuu}{rgb}{0.26666666666666666,0.26666666666666666,0.26666666666666666}

\maketitle

\begin{abstract}
We give here some new lower bounds on the order of a largest induced
forest in planar graphs with girth $4$ and $5$. In particular we prove
that a triangle-free planar graph of order $n$ admits an induced
forest of order at least $\frac{6n + 7}{11}$, improving the lower
bound of Salavatipour [M.~R.~Salavatipour, Large induced forests in
  triangle-free planar graphs, {\em Graphs and Combinatorics},
  22:113--126, 2006]. We also prove that a planar graph of order $n$
and girth at least $5$ admits an induced forest of order at least
$\frac{44n+50}{69}$.
\end{abstract}
\section{Introduction}

Let $G$ be a graph. A \emph{decycling set} or \emph{feedback vertex set} $S$ of $G$ is a subset of the vertices of $G$ such that removing the vertices of $S$ from $G$ yields an acyclic graph. Thus $S$ is a decycling set of $G$ if and only if the graph induced by $V(G) \backslash S$ in $G$ is an induced forest of $G$. The {\sc feedback vertex set decision problem} (which consists of, given a graph $G$ and an integer $k$, deciding whether there is a decycling set of $G$ of size $k$) is known to be NP-complete, even restricted to the case of planar graphs, bipartite graphs or perfect graphs \cite{Karp}. It is thus legitimate to seek bounds for the size of a decycling set or an induced forest. The smallest size of a decycling set of $G$ is called the \emph{decycling number} of $G$, and the highest order of an induced forest of $G$ is called the \emph{forest number} of $G$, denoted respectively by $\phi(G)$ and $a(G)$. Note that the sum of the decycling number and the forest number of $G$ is equal to the order of $G$ (i.e. $|V(G)| = a(G) + \phi(G)$).

Mainly, the community focuses on the following challenging conjecture due to Albertson and Berman \cite{AlbertsonBerman}:

\begin{conj}[\emph{Albertson and Berman \cite{AlbertsonBerman}}] \label{alb}
Every planar graph of order $n$ admits an induced forest of order at least $\frac{n}{2}$.
\end{conj}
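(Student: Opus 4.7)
The plan is to attack this conjecture within the standard discharging framework for planar graphs, combined with induction on the order $n$. I would consider a minimum counterexample: a planar graph $G$ of order $n$ with $a(G) < n/2$, chosen so that $n$ is minimum. The goal is then to exhibit an unavoidable set of local ``reducible'' configurations, each of which would allow one to delete or contract a small substructure, apply the induction hypothesis to the smaller planar graph $G'$ to obtain an induced forest $F'$ of order at least $|V(G')|/2$, and then lift $F'$ to an induced forest of $G$ of order at least $n/2$. Concretely, vertices of degree at most $1$ can be added directly to any induced forest of $G - v$; a vertex of degree $2$ can either be absorbed into the forest or used to connect existing tree components without creating a cycle; more delicate moves at $3$- and $4$-vertices exploit the fact that any cycle they can create must use at least two of their neighbors.

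The second pillar is the use of Euler's formula. Since $|E(G)| \le 3n - 6$ for any planar graph, the average degree is strictly less than $6$, which guarantees many low-degree vertices. Assigning initial charges $d(v) - 4$ to vertices and $\ell(f) - 4$ to faces (or an analogous scheme), so that the total charge is negative by Euler's formula, and then redistributing according to carefully chosen rules, one aims to derive a contradiction from the hypothetical absence of all the reducible configurations identified in the first step. This is essentially the template used in the girth $\ge 4$ and girth $\ge 5$ theorems of the paper, where the higher girth both enriches the set of available reductions and strengthens Euler's formula to $|E(G)| \le \frac{g}{g-2}(n-2)$.

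The hard part, and the reason the conjecture remains open, is the sheer size of the gap between the conjectured bound $n/2$ and what local methods can currently reach. Borodin's acyclic $5$-coloring theorem yields $a(G) \ge 2n/5$ for free (any two classes of an acyclic coloring induce a forest, so the pigeonhole principle gives two classes of total size at least $2n/5$), and this has been essentially the state of the art in the general planar case. Closing the remaining gap to $n/2$ seems to require reducible configurations that are either very large or that allow substantial flexibility in how the lifted forest is chosen, and the extremal examples (such as triangulations built from octahedra) are tight enough that local charge redistribution alone appears inadequate. A realistic proof proposal must therefore either couple the discharging with a global structural decomposition, or sidestep the triangle-dense obstruction entirely by forbidding triangles, which is precisely the direction the present paper takes in proving the weaker bounds $\frac{6n+7}{11}$ and $\frac{44n+50}{69}$ under girth assumptions.
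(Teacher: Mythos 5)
You have correctly recognized that the statement you were asked to address is Conjecture~\ref{alb}, an \emph{open} conjecture that the paper itself does not prove. The paper merely states it, attributes it to Albertson and Berman, notes its tightness via disjoint unions of $K_4$, and records that the best known general lower bound remains Borodin's $2n/5$ obtained from acyclic $5$-colorability. Your response is therefore not a proof attempt in the usual sense but an informed survey of why the conjecture is hard, and you explicitly say so (``the reason the conjecture remains open''). Given that no proof exists to compare against, this is the right posture.

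On the substance, your sketch of the would-be method --- minimal counterexample, a library of reducible configurations whose removal shrinks the graph and whose lifting restores the $n/2$ ratio, and discharging against Euler's formula to show the configurations are unavoidable --- is indeed exactly the template the authors deploy for their provable results (Theorems~\ref{main} and~\ref{bmain} and the long chains of lemmas in Sections~\ref{proofmain} and~\ref{bproofmain}). You also correctly identify the two obstructions to pushing this template to $n/2$ in the general planar case: the density ($m$ up to $3n-6$ rather than $2n-4$ or $\frac{5}{3}(n-2)$) weakens the Euler side, and triangle-dense tight examples such as $K_4$ and the octahedron leave essentially no slack for local charge redistribution. One small addendum worth noting: the lifting step for the $an-bm$ bounds in the paper crucially charges deleted \emph{edges} as well as deleted vertices (this is Observation~\ref{abg} with its $(\alpha,\beta,\gamma)$ triples), and it is precisely the edge term $-bm$ that buys the slack absent in a pure $a(G)\ge cn$ formulation; if one were to attempt the full conjecture this way one would likewise have to prove a two-parameter statement and then optimize over the polytope of admissible $(a,b)$, exactly as Theorems~\ref{genmain} and~\ref{bgenmain} do. To be clear, none of this constitutes a gap on your part: you neither claimed nor needed to close it, since the statement remains an unproved conjecture in the literature and in this paper.
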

Conjecture~\ref{alb}, if true, would be tight (for $n \ge 3$ multiple of $4$) because of the disjoint union of the complete graph on four vertices (Akiyama and Watanabe \cite{Akiyama} gave examples showing that the conjecture differs from the optimal by at most one half for all $n$), and would imply that every planar graph has an independent set on at least a quarter of its vertices, the only known proof of which relies on the Four-Color Theorem. 

The best known lower bound to date for the forest number of a planar graph is due to Borodin and is a consequence of the acyclic $5$-colorability of planar graphs \cite{Borodin}. We recall that an acyclic coloring is a proper vertex coloring such that the graph induced by the vertices of any two color classes is a forest. From this result we obtain the following theorem:

\begin{theo}[\emph{Borodin \cite{Borodin}}]
Every planar graph of order $n$ admits an induced forest of order at least $\frac{2n}{5}$.
\end{theo}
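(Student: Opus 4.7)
The plan is to derive the bound as an immediate corollary of Borodin's theorem on acyclic $5$-colorability of planar graphs, which the excerpt has already announced as the key input. Recall that an acyclic $k$-coloring is a proper vertex coloring in which the subgraph induced by any two color classes is a forest (equivalently, every cycle uses at least three colors). Borodin proved that every planar graph admits such a coloring with $5$ colors; I would take this deep result as a black box.

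Given a planar graph $G$ of order $n$, I would fix an acyclic $5$-coloring and write $C_1,\ldots,C_5$ for its color classes, with sizes $n_1,\ldots,n_5$ satisfying $n_1+\cdots+n_5=n$. For every pair $\{i,j\}$ with $1\le i<j\le 5$, the induced subgraph $G[C_i\cup C_j]$ is, by hypothesis, a forest of order $n_i+n_j$. The task then reduces to exhibiting some pair whose combined size is at least $2n/5$.

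This is handled by a one-line averaging argument. Summing $n_i+n_j$ over all $\binom{5}{2}=10$ unordered pairs, each class $C_i$ appears in exactly $4$ pairs, so the total equals $4n$. Consequently the average pair size is $4n/10 = 2n/5$, and a pigeonhole step yields indices $i\neq j$ with $n_i+n_j \ge 2n/5$. The induced forest $G[C_i\cup C_j]$ then has the required order.

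The only genuine obstacle hides in the appeal to Borodin's theorem itself; the extraction above is a routine pigeonhole calculation and needs no reworking. I would therefore not attempt to reprove acyclic $5$-colorability here. It is worth noting that this indirect coloring-based strategy is precisely what the present paper must eventually discard in order to push beyond $2n/5$ on graphs of girth $4$ or $5$, since the sharper bounds announced in the abstract ($6n/11$ and $44n/69$) appear to require direct discharging rather than a coloring input.
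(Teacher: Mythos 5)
Your argument is correct and is exactly the derivation the paper has in mind: the paper states the theorem as an immediate consequence of Borodin's acyclic $5$-colorability (without spelling out the pigeonhole step), and your averaging over the $\binom{5}{2}$ pairs of color classes is the standard way to extract the $\frac{2n}{5}$ bound from that result.
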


Hosono \cite{Hosono} showed the following theorem as a consequence of the acyclic 3-colorability of outerplanar graphs and showed that the bound is tight.

\begin{theo}[\emph{Hosono \cite{Hosono}}] \label{hos}
Every outerplanar graph of order $n$ admits an induced forest of order at least $\frac{2n}{3}$.
\end{theo}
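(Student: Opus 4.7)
The plan is to exploit the acyclic $3$-colorability of outerplanar graphs, which is precisely the fact the author invokes as the source of the theorem. Recall that an acyclic coloring is a proper vertex coloring in which every pair of color classes induces a forest; so once such a coloring on three colors is in hand, each of the three $2$-color subgraphs is an induced forest of~$G$.

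First I would state and cite (or very briefly reprove) the fact that every outerplanar graph admits an acyclic proper $3$-coloring. The quickest route is induction on $n=|V(G)|$: an outerplanar graph has a vertex $v$ of degree at most~$2$; delete $v$, acyclically $3$-color $G-v$, and reinsert $v$ using one of the (at most two) colors not appearing at its neighbors in such a way that no bichromatic cycle is created, which is immediate because $v$ has degree at most $2$ and so lies on no new cycle in any $2$-color subgraph.

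Given such a coloring, let $V_1,V_2,V_3$ be the color classes and set $n_i=|V_i|$, so $n_1+n_2+n_3=n$. The three induced subgraphs $G[V_1\cup V_2]$, $G[V_1\cup V_3]$, $G[V_2\cup V_3]$ are each forests, and their orders $n_1+n_2$, $n_1+n_3$, $n_2+n_3$ sum to $2n$. By pigeonhole, at least one of them has order at least $\lceil 2n/3\rceil\ge 2n/3$, which yields the desired induced forest.

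The only nontrivial ingredient is the acyclic $3$-colorability step, and that is exactly what is being cited from the surrounding text; the pigeonhole argument afterwards is routine. Consequently I expect no real obstacle beyond writing the induction cleanly (in particular, handling the degree-$1$ and degree-$2$ cases when reinserting $v$, and verifying that the reinserted color cannot complete a bichromatic cycle because every cycle through $v$ must use both of its incident edges and hence both of its neighbors, which already receive distinct colors by construction).
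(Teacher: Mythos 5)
Your overall strategy—acyclic $3$-colorability followed by pigeonhole over the three bichromatic classes—is exactly the route the paper indicates (it explicitly says Hosono derived the bound from acyclic $3$-colorability of outerplanar graphs), and the pigeonhole half is fine.

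However, the inductive sketch you give for acyclic $3$-colorability has a real gap in the degree-$2$ case. If $v$ has two neighbours $u,w$ that received the \emph{same} colour, say colour $1$, then any colour $c\in\{2,3\}$ you give $v$ is a candidate for a bichromatic cycle: if $G-v$ already contains a path from $u$ to $w$ using only colours $1$ and $c$ (which is perfectly possible, since the $\{1,c\}$ subgraph is merely a forest, not edgeless), then $v\,u\cdots w\,v$ is a bichromatic cycle through $v$. So the step "which is immediate because $v$ has degree at most~$2$ and so lies on no new cycle in any $2$-colour subgraph" is false as stated; $v$ certainly can lie on a new cycle. The standard repair is to first extend $G$ to a maximal outerplanar graph $G'$ (a triangulation of a polygon); an acyclic $3$-colouring of $G'$ restricts to one of $G$, and in $G'$ one can always pick the degree-$2$ vertex $v$ to be an ear, i.e.\ with adjacent neighbours $u,w$. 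Then $c(u)\neq c(w)$ by properness, and assigning $v$ the third colour makes every cycle through $v$ trichromatic, which closes the argument cleanly. With that fix your proof is correct and matches the cited source.
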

The tightness of the bound is shown by the example in Figure~\ref{bababibelba}.

\begin{figure}[h]
\begin{center}
\begin{tikzpicture}[line cap=round,line join=round,>=triangle 45,x=1.0cm,y=1.0cm]
\clip(-0.25,-0.25) rectangle (8.25,1.25);
\draw (0.0,-0.0)-- (0.0,1.0);
\draw (1.0,0.0)-- (1.0,1.0);
\draw (2.0,0.0)-- (2.0,1.0);
\draw (3.0,0.0)-- (3.0,1.0);
\draw (7.0,0.0)-- (7.0,1.0);
\draw (8.0,0.0)-- (8.0,1.0);
\draw (0.0,1.0)-- (1.0,1.0);
\draw (1.0,1.0)-- (2.0,1.0);
\draw (2.0,1.0)-- (3.0,1.0);
\draw (0.0,-0.0)-- (1.0,-0.0);
\draw (1.0,-0.0)-- (2.0,0.0);
\draw (2.0,0.0)-- (3.0,0.0);
\draw (0.0,1.0)-- (1.0,-0.0);
\draw (1.0,1.0)-- (2.0,0.0);
\draw (2.0,1.0)-- (3.0,0.0);
\draw [dash pattern=on 1pt off 1pt] (3.0,1.0)-- (7.0,1.0);
\draw [dash pattern=on 1pt off 1pt] (3.0,0.0)-- (7.0,0.0);
\draw (7.0,1.0)-- (8.0,1.0);
\draw (7.0,0.0)-- (8.0,0.0);
\draw (7.0,1.0)-- (8.0,0.0);
\begin{scriptsize}
\draw [fill=black] (0.0,-0.0) circle (1.5pt);
\draw [fill=black] (0.0,1.0) circle (1.5pt);
\draw [fill=black] (1.0,-0.0) circle (1.5pt);
\draw [fill=black] (1.0,0.0) circle (1.5pt);
\draw [fill=black] (1.0,1.0) circle (1.5pt);
\draw [fill=black] (2.0,0.0) circle (1.5pt);
\draw [fill=black] (2.0,1.0) circle (1.5pt);
\draw [fill=black] (3.0,0.0) circle (1.5pt);
\draw [fill=black] (3.0,1.0) circle (1.5pt);
\draw [fill=black] (7.0,0.0) circle (1.5pt);
\draw [fill=black] (7.0,1.0) circle (1.5pt);
\draw [fill=black] (8.0,0.0) circle (1.5pt);
\draw [fill=black] (8.0,1.0) circle (1.5pt);
\end{scriptsize}
\end{tikzpicture}
\end{center}
\caption{Example to prove the tightness of Theorem~\ref{hos}. \label{bababibelba}}
\end{figure}
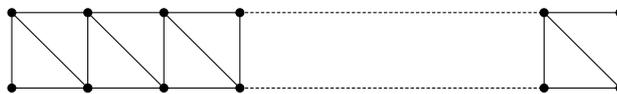

Other results were deduced from results on acyclic coloring, for other classes of graphs. Fertin et al. \cite{Fertin} gave such results for several classes of graphs, stated in Table~\ref{matable}.

\begin{table}[h]
\begin{centering}
\begin{tabular}{|c|c|c|}
        \hline
        Family $\cal{F}$ & \multicolumn{2}{c|}{Forest number:} \\
 
         &Lower bound & Upper bound \\ 
        \hline && \\
        Planar & $\frac{2n}{5}$ & $\lceil\frac{n}{2}\rceil$ \\ && \\
        Planar with girth $5$, $6$ & $\frac{n}{2}$ & $\frac{7n}{10} + 2$ \\ && \\
        Planar with girth $\ge 7$ & $\frac{2n}{3}$ & $\frac{5n}{6} + 1$ \\ && \\
        \hline
\end{tabular}
   \caption{\label{matable} Bounds on the forest number for some families $\cal{F}$ of graphs~\cite{Fertin}.}
\end{centering}
\end{table}
\bigskip 

Akiyama and Watanabe \cite{Akiyama}, and Albertson and Rhaas \cite{Albertson} independently raised the following conjecture:
\begin{conj}[\emph{Akiyama and Watanabe \cite{Akiyama}, and Albertson and Rhaas \cite{Albertson}}] \label{aki}
Every bipartite planar graph of order $n$ admits an induced forest of order at least $\frac{5n}{8}$.
\end{conj}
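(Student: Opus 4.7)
The plan is to attack Conjecture~\ref{aki} by the discharging method, in the same spirit as the main theorems of this paper on planar graphs of girth $4$ and $5$. Let $G$ be a bipartite planar counterexample of minimum order $n$, so that $a(G) < 5n/8$ while $a(G') \ge 5|V(G')|/8$ for every smaller bipartite planar $G'$. Without loss of generality I may take $G$ edge-maximal among such counterexamples, so that every face is bounded by a cycle of even length at least $4$. The argument then splits into a structural part, ruling out small local configurations, and a global part based on Euler's formula.

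For the structural part, for each candidate configuration $H$ on $k$ vertices I would show the following reduction: if $H \subseteq G$, then deleting the relevant vertices yields a smaller bipartite planar graph $G'$; the minimality hypothesis supplies an induced forest $F'$ of $G'$ with $|F'| \ge 5(n-k)/8$; and $F'$ extends to an induced forest of $G$ by reinserting at least $\lceil 5k/8 \rceil$ of the removed vertices, contradicting the choice of $G$. Natural candidates are degree-$1$ vertices, adjacent degree-$2$ vertices, short $2$-threads along a $4$-face, and clusters of low-degree vertices inside short faces. Bipartiteness must be used in the extension step to avoid closing short cycles when reinserting vertices; this is precisely where the factor $5/8$, as opposed to the $6/11$ of the paper's triangle-free theorem, has to come from.

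For the global part, assign each vertex $v$ initial charge $\deg(v)-4$ and each face $f$ initial charge $|f|-4$, so that by Euler's formula for bipartite planar graphs the total initial charge is $-8$. Using the absence of the reducible configurations listed in the structural part, I would design discharging rules sending charge from long faces and from high-degree vertices to low-degree vertices and to $4$-faces, and then verify that every vertex and every face ends with nonnegative charge, producing the contradiction $-8 \ge 0$.

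The hard step is the calibration between the two parts. The target $5/8$ leaves very little slack: any reduction on $k$ vertices must save $\lceil 5k/8 \rceil$ of them, which for $k=1,2,3$ forces extremely rigid reinsertion guarantees. The reducible configurations that yield $6/11$ for triangle-free planar graphs are demonstrably too weak here, so closing the gap almost certainly requires new reducible configurations that genuinely exploit bipartiteness, most plausibly ones organised around $4$-faces, together with a discharging scheme that treats $4$-faces differently from longer faces. Designing those configurations, and checking that they jointly make the discharging balance close, is where I expect any proof of the conjecture to stand or fall.
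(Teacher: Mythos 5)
The statement you were asked about is Conjecture~\ref{aki}, an open conjecture due to Akiyama--Watanabe and Albertson--Rhaas; the paper does not prove it and does not claim to. The authors record it as motivation, note that it would be tight for disjoint unions of cubes, and then prove \emph{weaker} bounds for the superclass of triangle-free planar graphs, namely $a(G)\ge\frac{6n+7}{11}$ (Corollary~\ref{comain}), where $\frac{6}{11}<\frac{5}{8}$. So there is no ``paper's own proof'' to compare against.

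Your proposal is not a proof either, and you say so yourself. What you give is the standard architecture (minimal counterexample, reducible configurations in the style of Observation~\ref{abg}, then a discharging or Euler-formula contradiction), which is indeed the method the paper uses for Theorems~\ref{genmain} and~\ref{bgenmain}. But the proposal contains no actual reducible configurations, no verified reinsertion guarantees, and no discharging rules whose balance has been checked; your final paragraph explicitly flags the calibration step as unsolved. That calibration is the entire content of the open problem: the paper's own lemmas (Lemmas~\ref{2co}--\ref{deg3deg3deg4}), even though $G$ bipartite would satisfy all their hypotheses, are by the authors' own account only strong enough to reach $\frac{6}{11}$, and no one has exhibited a set of configurations using bipartiteness that pushes the ratio to $\frac{5}{8}$. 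So the gap is not a small oversight in a chain of deductions; it is that the proposal stops precisely where a proof would have to begin. Concretely, you would need (i) reducible configurations whose deletion of $k$ vertices permits reinsertion of at least $\lceil 5k/8\rceil$ of them while preserving bipartiteness and planarity of the reduced graph, and (ii) a verification that the absence of all such configurations contradicts Euler's formula; neither list exists in the proposal, and the reinsertion requirement $\lceil 5k/8\rceil$ is, as you note, much more demanding than the $a\alpha-b\beta\le\gamma$ inequalities the paper actually discharges against. Until those two items are produced and checked, this remains an outline of a plausible attack on an open problem, not a proof.
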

This conjecture, if true, would be tight for $n$ multiple of $8$: for example if $G$ is the disjoint union of $k$ cubes, then we have $a(G) = 5k$ and $G$ has order $8k$ (see Figure~\ref{bababa}). Motivated by Conjecture~\ref{aki}, Alon~\cite{Alon2003} proved the following theorem using probabilistic methods:

\begin{figure}[h]
\begin{center}
\definecolor{cqcqcq}{rgb}{0.65,0.65,0.65}
\begin{tikzpicture}[line cap=round,line join=round,>=triangle 45,x=0.5cm,y=0.5cm]
\draw (0.6879722606657073,6.03171952230884)-- (5.999242097886516,6.031719522308842);
\draw [color=cqcqcq] (5.999242097886516,6.031719522308842)-- (5.999242097886509,0.7204496850880373);
\draw [color=cqcqcq] (5.999242097886509,0.7204496850880373)-- (0.6879722606657072,0.7204496850880311);
\draw (0.6879722606657072,0.7204496850880311)-- (0.6879722606657073,6.03171952230884);
\draw (0.6879722606657073,6.03171952230884)-- (2.243607179276111,4.476084603698435);
\draw [color=cqcqcq] (2.243607179276111,4.476084603698435)-- (4.4436071792761105,4.476084603698435);
\draw [color=cqcqcq] (4.4436071792761105,4.476084603698435)-- (4.443607179276109,2.276084603698438);
\draw [color=cqcqcq] (4.443607179276109,2.276084603698438)-- (2.243607179276111,2.2760846036984357);
\draw [color=cqcqcq] (2.243607179276111,2.2760846036984357)-- (2.243607179276111,4.476084603698435);
\draw [color=cqcqcq] (2.243607179276111,2.2760846036984357)-- (0.6879722606657072,0.7204496850880311);
\draw [color=cqcqcq] (4.443607179276109,2.276084603698438)-- (5.999242097886509,0.7204496850880373);
\draw [color=cqcqcq] (4.4436071792761105,4.476084603698435)-- (5.999242097886516,6.031719522308842);
\begin{scriptsize}
\draw [fill=black] (2.243607179276111,4.476084603698435) circle (1.5pt);
\draw [color=cqcqcq][fill=cqcqcq] (2.243607179276111,2.2760846036984357) circle (1.5pt);
\draw [color=cqcqcq][fill=cqcqcq] (4.4436071792761105,4.476084603698435) circle (1.5pt);
\draw [fill=black] (4.443607179276109,2.276084603698438) circle (1.5pt);
\draw [fill=black] (0.6879722606657073,6.03171952230884) circle (1.5pt);
\draw [fill=black] (5.999242097886516,6.031719522308842) circle (1.5pt);
\draw [color=cqcqcq][fill=cqcqcq] (5.999242097886509,0.7204496850880373) circle (1.5pt);
\draw [fill=black] (0.6879722606657072,0.7204496850880311) circle (1.5pt);
\end{scriptsize}
\end{tikzpicture}
\end{center}
\caption{The cube admits an induced forest on five of its vertices, but no induced forest on six or more of its vertices. \label{bababa}}
\end{figure}

\begin{theo}[\emph{Alon~\cite{Alon2003}}]
There exist some $b > 0$ and $b' > 0$ such that:

\begin{itemize}
\item For every bipartite graph $G$ with $n$ vertices and average degree at most $d$ ($\ge 1$), $a(G) \ge (\frac{1}{2} + e^{-bd^2})n$.

\item For every $d \ge 1$ and all sufficiently large $n$ there exists a bipartite graph with $n$ vertices and average degree at most $d$ such that  $a(G) \le (\frac{1}{2} + e^{-b'\sqrt{d}})n$.
\end{itemize}
\end{theo}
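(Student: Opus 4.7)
The plan is to establish the two parts separately by the probabilistic method, in the style in which both are typically proved.

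For the lower bound, I would pick a bipartition $(A,B)$ of $G$ with $|A|\ge n/2$, so that $A$ is already an induced forest of size at least $n/2$; the task reduces to augmenting $A$ with an exponentially small (in $d^{2}$) fraction of $B$. I would include each vertex of $B$ independently into a random set $T$ with probability $p=p(d)$ and then apply the alteration method: destroy every cycle of $G[A\cup T]$ by deleting at most one $T$-vertex per cycle. Since $G$ is bipartite, every cycle of $G[A\cup T]$ has even length $2k\ge 4$ and meets $T$ in exactly $k$ vertices, hence survives the random selection with probability $p^{k}$. The number of $4$-cycles of $G$ is controlled by $\sum_{a\in A}\binom{\deg(a)}{2}^{2}$ and is $O(d^{2}n)$ by two applications of Cauchy--Schwarz, so the expected gain in $|T|$ minus the expected number of destroyed vertices has the form $p|B|-O(p^{2}d^{2}n)+\cdots$, which is optimized at a scale $p$ exponentially small in $d^{2}$.

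The hard part is that the number of \emph{long} cycles is not a priori controlled by the average degree. I would circumvent this by first restricting the random selection to the vertices of $B$ of degree at most a threshold $D=D(d)$; the vertices of larger degree form an $O(1/D)$-fraction by Markov. Inside this restricted subfamily the number of closed walks of every length is dominated by powers of $D$, and the expected number of surviving cycles is then bounded by a convergent geometric series $\sum_{k\ge 2}(pD)^{k}$ as soon as $pD$ is a small absolute constant. Balancing $D$ against $p$ forces the optimum to sit at $p\asymp e^{-bd^{2}}$ and yields the advertised gain.

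For the upper bound I would exhibit a random bipartite graph on $|A|=|B|=n/2$ in which each of the $(n/2)^{2}$ potential edges is present independently with probability $q=2d/n$, so that the expected average degree is exactly $d$. Fixing a target $k=(1/2+\varepsilon)n$ with $\varepsilon=e^{-b'\sqrt{d}}$, I would union-bound over induced subsets $S\subseteq V$ of size $k$: for each $S$ with $|S\cap A|=a'$ and $|S\cap B|=b'$, the induced subgraph is random bipartite with $a'b'$ potential edges, and is a forest iff a $\mathrm{Bin}(a'b',q)$ random variable is at most $a'+b'-1$. A Chernoff estimate on the binomial, multiplied by $\binom{n}{k}$ and summed over the two parameters, must be made to tend to zero; the resulting inequality pins $\varepsilon$ at the claimed scale.

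The main obstacle here is extracting the \emph{square-root} dependence rather than a power of $d$ in the exponent. A naive Chernoff bound on the binomial already yields an exponent linear in $d$, which would give only $\varepsilon\sim e^{-b'd}$; to sharpen it to $e^{-b'\sqrt{d}}$ one must also optimize over the split $(a',b')$ of the induced subgraph, using that forcing $S$ to be unbalanced loses $\Theta(n)$ in the prefactor $\binom{n}{k}$ but gains only $\Theta(\sqrt{d}\cdot n)$ in the Chernoff exponent. I would carry out both optimizations in parallel and verify that the resulting two-parameter Chernoff estimate indeed matches the stated form $(1/2+e^{-b'\sqrt{d}})n$.
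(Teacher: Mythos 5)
The paper does not prove this statement: it is Alon's theorem, quoted from \cite{Alon2003} as background, so there is no internal proof to compare your sketch against. Judged on its own, the sketch has genuine gaps in both halves.

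For the lower bound, restricting the random choice to $B$-vertices of degree at most $D$ does not make ``the number of closed walks of every length dominated by powers of $D$'': a cycle of $G[A\cup T]$ alternates between $B$ and $A$, and the $A$-side degrees remain unbounded, so the enumeration is \emph{not} controlled by $D$ alone. You would also have to remove high-degree $A$-vertices, and doing so at threshold $t$ costs roughly $dn/t$ vertices while the random selection only gains on the order of $n/(td)$, which is a net loss for every $d\ge 1$; the thresholds do not balance the way your sketch claims. The preliminary $4$-cycle estimate is also off: the number of $4$-cycles is $\sum_{\{b,b'\}}\binom{\mathrm{codeg}(b,b')}{2}$, not $\sum_{a\in A}\binom{\deg a}{2}^{2}$, and it is not $O(d^{2}n)$ under an average-degree hypothesis alone (e.g.\ $K_{d,n-d}$ has average degree $\Theta(d)$ but $\Theta(d^{2}n^{2})$ four-cycles); Cauchy--Schwarz gives a lower, not an upper, bound on $\sum_a\deg(a)^2$.

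For the upper bound, the model $G(n/2,n/2,q)$ with $q=2d/n$ does not have the required property, and no amount of optimizing the Chernoff/union bound will fix that, because the conclusion is simply false for this graph. Take $S=A\cup B'$ with $|B'|=\varepsilon n$. The expected number of $2\ell$-cycles inside $G[A\cup B']$ is about $(2\varepsilon d^{2})^{\ell}/(2\ell)$, which sums to $o(1)$ once $\varepsilon<1/(2d^{2})$; equivalently, the ``square'' graph $H$ on $B$ (adjacency $=$ common neighbour in $A$) has average degree $\Theta(d^{2})$, so it has an independent set $B'$ of size $\Theta(n/d^{2})$, and then $A\cup B'$ is an induced forest. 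Hence $a(G)\ge(1/2+\Omega(1/d^{2}))n$ for this random bipartite graph, which is polynomially, not exponentially, close to $n/2$, and in particular far exceeds $(1/2+e^{-b'\sqrt{d}})n$. The actual extremal construction in \cite{Alon2003} has to be chosen so that every moderately large subset of $B$ has many codegree collisions with all of $A$; a plain sparse $G(n/2,n/2,q)$ lacks this structure.
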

The lower bound was later improved by Colon et al. \cite{conlonessays} to $a(G) \ge (1/2 + e^{-b''d})n$ for a constant $b''$. 

Conjecture~\ref{aki} also led to some research for lower bounds of the forest number of triangle-free planar graphs (as a superclass of bipartite planar graphs). Alon et al.~\cite{Alon} proved the following theorems and corollary: 

\begin{theo}[\emph{Alon et al.~\cite{Alon}}]\label{bAlon}
Every triangle-free graph of order $n$ and size $m$ admits an induced forest of order at least $n - \frac{m}{4}$.
\end{theo}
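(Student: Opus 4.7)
The plan is to prove the theorem by induction on the number of edges $m$ of the triangle-free graph $G$. The base case $m = 0$ is trivial: then $a(G) = n = n - 0/4$. For the inductive step, I would look for a local reduction to a smaller triangle-free graph, losing at most $1/4$ of a vertex per edge removed, so that the budget is preserved (or improved).

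First I would handle the low-minimum-degree cases. If $G$ has a vertex $v$ of degree at most $1$, I delete $v$, apply the inductive hypothesis to the triangle-free graph $G - v$, and add $v$ back to any induced forest: this gains one whole vertex against a budget of at most $1/4$, leaving slack. For a vertex $v$ of degree exactly $2$ with neighbors $u, w$ (necessarily non-adjacent, by triangle-freeness), the natural move is the reduction $G' := (G - v) + uw$, which has $n - 1$ vertices and $m - 1$ edges. When $u$ and $w$ have no common neighbor in $G - v$, the graph $G'$ is still triangle-free, and any induced forest $F'$ of $G'$ lifts to an induced forest of $G$ either by subdividing the edge $uw$ into the path $u v w$ (if $u, w \in F'$, which preserves the forest property) or by attaching $v$ as a pendant, in both cases gaining a vertex. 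When $u$ and $w$ do share a common neighbor $x \ne v$, the reduction creates a triangle $u w x$ in $G'$, but now $u v w x$ is a $4$-cycle in $G$; I would handle this subcase by either deleting both $v$ and $x$ (leveraging a high degree of $x$ to absorb the budget) or, when $v$ and $x$ are twins with $d(x) = 2$, by identifying them into a single vertex of a smaller triangle-free graph.

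The main obstacle lies in the remaining case where $G$ has minimum degree at least $3$, so that $m \ge 3n/2$ and the bound $a(G) \ge n - m/4$ forces $a(G) \ge 5n/8$; this is essentially tight, as the cube $Q_3$ achieves equality. Here I would exploit the structural inequality $d(u) + d(w) \le n$, valid for every edge $uw$ of a triangle-free graph because the open neighborhoods of adjacent vertices are disjoint. Working with a maximum induced forest $F$ and its complement $D := V(G) \setminus F$, the maximality of $F$ forces each $u \in D$ to have at least two neighbors in a common tree of $F$, and triangle-freeness forces those neighbors to be pairwise non-adjacent; consequently each $u \in D$ closes a cycle of length at least $4$ through $F$. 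My plan is then to assign a unit charge to each $u \in D$, redistribute it along the edges of the associated cycle, and use the inequality $d(u) + d(w) \le n$ together with a careful global accounting to show that each edge of $G$ receives total charge at most $1/4$, giving $|D| \le m/4$ and hence $a(G) = n - |D| \ge n - m/4$. The sharp control of the discharging in this zero-slack regime is the hard part of the argument.
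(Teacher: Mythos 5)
This theorem is stated in the paper as a cited result of Alon, Mubayi and Thomas, with no proof given, so there is no in-paper argument to compare against; I evaluate your attempt on its own.

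The plan is reasonable in outline, but two of the steps do not actually close. For the degree-$2$ vertex $v$ with neighbors $u,w$ sharing a common neighbor $x\neq v$, neither of your proposed moves yields the required gain. If you delete $\{v,x\}$, you cannot in general put $v$ back into a forest $F'$ of $G-\{v,x\}$: both $u$ and $w$ may lie in $F'$ and be joined by a path there, so $F'+v$ contains a cycle, and the gain is $0$ while the budget calculation demands a gain of at least $1$ whenever $d(x)\le 5$. The twin identification has the same defect: if the contracted vertex $\tilde v$ sits in $F''$ with both $u$ and $w$ as forest neighbors, then splitting $\tilde v$ into $v$ and $x$ creates the $4$-cycle $uvwx$, and including only one of $v,x$ gains nothing, again short of what the budget $(n-1)-(m-2)/4 = n-m/4-\tfrac12$ requires. (Deleting the triple $\{v,x,u\}$ and re-inserting $v,x$ \emph{does} repair the twin subcase, since then $v$ and $x$ have a single remaining neighbor $w$ and can always both be added; but that is not the reduction you describe.)

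The more serious gap is the $\delta(G)\ge 3$ case, which you rightly identify as the crux but leave as a sketch. The discharging rule you outline --- each vertex $u$ of the decycling set sends unit charge spread over a closing cycle of length $\ge 4$ --- has no mechanism to prevent several such vertices from charging the same edge, so there is no a priori bound of $1/4$ per edge. The inequality $d(u)+d(w)\le n$ controls degree sums along an edge, but says nothing about how many of the associated cycles pass through a fixed forest edge, and you give no rule that ties the two together. As you note, equality is attained on $Q_3$, so the discharging would have to be exactly balanced with zero slack, which is a strong structural claim that your sketch does not establish. In its present form the argument is a plan, not a proof.
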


\begin{cor}[\emph{Alon et al.~\cite{Alon}}]
Every triangle-free cubic graph of order $n$ admits an induced forest of order at least $\frac{5n}{8}$.
\end{cor}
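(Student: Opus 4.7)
The plan is to derive the corollary as a one-line consequence of Theorem~\ref{bAlon}. Since the hypothesis of the corollary is that $G$ is triangle-free and cubic, the only thing I need to do before invoking Theorem~\ref{bAlon} is compute the size $m$ of $G$ in terms of its order $n$.

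By the handshake lemma, $\sum_{v \in V(G)} \deg(v) = 2m$, and since $G$ is $3$-regular every term in the sum equals $3$, giving $2m = 3n$, i.e.\ $m = \tfrac{3n}{2}$. (Implicitly this forces $n$ to be even; otherwise no cubic graph on $n$ vertices exists and the statement is vacuous.)

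Now I apply Theorem~\ref{bAlon}: since $G$ is triangle-free, it admits an induced forest of order at least
\[
n - \frac{m}{4} \;=\; n - \frac{1}{4}\cdot \frac{3n}{2} \;=\; n - \frac{3n}{8} \;=\; \frac{5n}{8},
\]
which is exactly the desired bound.

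There is really no obstacle here: the corollary is a direct numerical specialization of Theorem~\ref{bAlon} to the cubic case, and the only step beyond citing the theorem is the handshake computation $m = 3n/2$. The statement is notable not for its proof difficulty but because it matches the extremal bound in Conjecture~\ref{aki} for the cubic case.
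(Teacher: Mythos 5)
Your proof is correct and is exactly the intended derivation: the corollary is stated in the paper as an immediate consequence of Theorem~\ref{bAlon}, and substituting $m = 3n/2$ (handshake lemma for a cubic graph) into $n - m/4$ gives $5n/8$. Nothing more is needed.
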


\begin{theo}[\emph{Alon et al.~\cite{Alon}}]
Every connected graph with maximum degree $\Delta$, order $n$, and size $m$ admits an induced forest of order at least $\alpha(G) + \frac{n-\alpha(G)}{(\Delta - 1)^2}$.
\end{theo}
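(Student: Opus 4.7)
Let $I$ be a maximum independent set of $G$, so that $|I| = \alpha(G)$, and put $S := V(G) \setminus I$, with $|S| = n - \alpha(G)$. Since $I$ already induces a (trivial) forest of order $\alpha(G)$ consisting of isolated vertices, it suffices to exhibit a subset $T \subseteq S$ of size at least $|S|/(\Delta - 1)^2$ such that $G[I \cup T]$ remains acyclic; this will give $a(G) \geq |I| + |T| \geq \alpha(G) + (n - \alpha(G))/(\Delta - 1)^2$.

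\textbf{Conflict graph.} By the maximality of $I$, every $v \in S$ has at least one neighbor in $I$ (otherwise $I \cup \{v\}$ would be larger), hence $d_{G[S]}(v) \leq \Delta - 1$. Introduce an auxiliary \emph{conflict graph} $H$ on the vertex set $S$ by joining $u, v \in S$ whenever $uv \in E(G)$ or $u$ and $v$ have a common neighbor in $I$. I claim that any $H$-independent set $T \subseteq S$ yields an induced forest $I \cup T$ of $G$: on any putative cycle of $G[I \cup T]$ at least two vertices must lie in $T$ (since $I$ is independent), and two consecutive non-$I$ vertices on such a cycle are either directly joined by an edge of $G$ or separated by a single $I$-vertex (as the $I$-vertices of the cycle are pairwise non-adjacent); in either case one obtains an $H$-edge inside $T$, contradicting independence.

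\textbf{Extracting $T$ and main obstacle.} It remains to show that $H$ admits an independent set of size at least $|S|/(\Delta - 1)^2$. For $v \in S$ with $d_I(v) = k$ and $d_{G[S]}(v) = \ell$ (so $k \geq 1$ and $k + \ell \leq \Delta$) one gets the crude bound $d_H(v) \leq \ell + k(\Delta - 1)$, which is as large as $\Delta(\Delta - 1)$ in the extreme case $k = \Delta$, $\ell = 0$. Sharpening this to $d_H(v) \leq (\Delta - 1)^2 - 1$, so that Tur\'an's theorem $\alpha(H) \geq |S|/(\Delta(H) + 1)$ delivers a $T$ of the desired size, is the heart of the argument. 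The idea is that when $k$ is close to $\Delta$, the sets $N(w) \setminus \{v\}$ for $w \in N(v) \cap I$ cannot be pairwise disjoint without violating either the degree bound $\Delta$ or the maximality of $I$, so the $H$-neighbors counted in the $k(\Delta - 1)$ term collapse substantially; a complementary Caro-Wei-style weighted estimate then redistributes the remaining surplus, gaining exactly the factor needed to reach $(\Delta - 1)^2$. The main obstacle is precisely this overlap-and-weighting analysis; once it is carried out, the conclusion is immediate.
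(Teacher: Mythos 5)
The paper does not prove this theorem---it is a cited result of Alon, Mubayi, and Thomas---so there is no in-paper proof to compare against; I assess your attempt on its own terms. Your reduction to the conflict graph $H$ is sound: an $H$-independent $T \subseteq S$ does make $G[I \cup T]$ acyclic. But the degree bound $d_H(v) \le (\Delta-1)^2 - 1$ that you defer is not merely unproved, it is false in general: your own estimate $d_H(v) \le \ell + k(\Delta-1)$ with $k + \ell \le \Delta$ tops out at $\Delta(\Delta-1)$ when $k = \Delta$, $\ell = 0$, and this value is attained whenever all of $v$'s neighbours lie in $I$ and $v$ is on no $4$-cycle; maximality of $I$ forces no overlap among the sets $N(w)\setminus\{v\}$. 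Tur\'an's bound then yields only $\alpha(H) \ge |S|/(\Delta^2 - \Delta + 1)$, strictly weaker than the required $|S|/(\Delta-1)^2$ for every $\Delta \ge 2$, and a Caro--Wei average cannot rescue it, since the $H$-degrees can all equal $\Delta(\Delta-1)$ simultaneously.

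The deeper problem is that $H$-independence is a sufficient but far from necessary condition for $G[I \cup T]$ to be a forest, and the slack it throws away is of exactly the order the theorem needs. Take the Heawood graph (the $(3,6)$-cage): $n = 14$, $\Delta = 3$, girth $6$, and the two bipartition classes are the only maximum independent sets. With $I$ one class and $S$ the other, $G[S]$ is empty and the absence of $4$-cycles forces every pair in $S$ to share exactly one $I$-neighbour, so $H \cong K_7$ and $\alpha(H) = 1$. Your method then certifies only $a(G) \ge 8$, whereas the theorem demands $a(G) \ge 7 + 7/4$, i.e.\ $a(G) \ge 9$. So no refinement of the degree or weighting analysis inside $H$ can close the gap: one must allow $T$ to contain $H$-adjacent pairs (two vertices of $T$ sharing a single common $I$-neighbour create a path of length two, not a cycle) and control cycles of $G[I \cup T]$ directly, which requires a genuinely different argument from the one you outline.
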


Theorem~\ref{bAlon} is tight because of the union of cycles of length $4$.

In a planar graph with girth at least $g$, order $n$ and size $m$ with at least a cycle, the number of faces is at most $2m/g$ (since all the faces' boundaries have length at least $g$). Then, by Euler's formula, $2m/g \ge m - n + 2$, and thus $m \le (g/(g-2)) (n - 2)$. In particular, triangle-free planar graphs of order $n \ge 3$ have size at most $2n-4$. 

As a consequence of Theorem~\ref{bAlon}, for $G$ a triangle-free planar graph of order $n$, $a(G) \ge n/2$. This lower bound was improved for $n \ge 1$ by Salavatipour \cite{Salavatipour}.

\begin{theo}[\emph{Salavatipour~\cite{Salavatipour}}]\label{salavat}
Every triangle-free planar graph of order $n$ and size $m$ admits an induced forest of order at least $\frac{29n-6m}{32}$ and thus at least $\frac{17n + 24}{32}$.
\end{theo}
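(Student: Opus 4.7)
The plan is to prove the stronger two-parameter bound $a(G) \ge (29n - 6m)/32$ by induction on $n + m$; the consequence $a(G) \ge (17n + 24)/32$ then follows immediately from the edge bound $m \le 2n - 4$ for triangle-free planar graphs recalled in the text, since $29n - 6(2n - 4) = 17n + 24$.

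For the inductive step I would take a minimum counterexample $G$ (fewest vertices, then fewest edges). A disconnected $G$ is handled by applying the bound component-wise and summing, so we may assume $G$ connected. For a vertex $v$ of degree $d \le 1$, delete $v$: induction on $G - v$ supplies an induced forest of size at least $(29(n-1) - 6(m-d))/32$, and $v$ can always be added back because it has at most one neighbor in any subgraph, producing a forest of size strictly larger than $(29n - 6m)/32$ (one checks this is the case for both $d=0$ and $d=1$). Hence $\delta(G) \ge 2$.

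The core of the argument is a list of reducible configurations — small subgraphs $H$ such that a well-chosen reduction $G \leadsto G'$ yields a smaller triangle-free planar graph from which an induced forest can be extended back to $G$ with the required size. The book-keeping is uniform: if the reduction changes $n$ by $\Delta n$ and $m$ by $\Delta m$, and the extension always places $k$ vertices of $H$ in the forest, then the inductive step succeeds whenever $29 \Delta n - 6 \Delta m \le 32 k$. Natural candidates at girth $4$ include a degree-$2$ vertex (whose two neighbors are automatically non-adjacent, enabling identification reductions), adjacent pairs of degree-$2$ vertices, certain degree-$3$ vertices with light neighborhoods, and $4$-faces whose boundary vertices are all of small degree; for each one verifies by a short case analysis that the extension produces at least $k$ new forest vertices with room for the required inequality.

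The main obstacle — and where the specific constant $29/32$ gets pinned — is the unavoidability argument: by discharging, one must show that every triangle-free planar graph with $\delta \ge 2$ contains at least one of the chosen configurations. Starting from the initial charge $d(v) - 4$ at each vertex and $\ell(f) - 4$ at each face, Euler's formula with girth $\ge 4$ gives total charge $-8$; one then designs rules sending charge from large faces and high-degree vertices toward light vertices, and verifies that if $G$ avoids every reducible configuration, the final charge is non-negative everywhere, contradicting the total $-8$. The delicate part is to calibrate the configuration list so that each reduction leaves enough arithmetic slack in $29 \Delta n - 6 \Delta m \le 32 k$ while simultaneously covering every local obstruction the discharging analysis must dispose of; this simultaneous tuning is precisely what fixes the denominator at $32$.
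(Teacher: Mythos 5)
First, note that Theorem~\ref{salavat} is not proved in this paper at all: it is quoted from Salavatipour's 2006 article, and the present paper's actual work is the strengthening in Theorem~\ref{main}/\ref{genmain}. So there is no ``paper's own proof'' of this statement to compare against; the relevant internal comparison is with the machinery the authors use in Section~\ref{proofmain}, which is the same general framework you describe. Your bookkeeping device --- a reduction changing $(n,m)$ by $(\Delta n, \Delta m)$ with $k$ vertices placed back, succeeding whenever $29\Delta n - 6\Delta m \le 32k$ --- is exactly Observation~\ref{abg} with $(a,b) = (29/32, 6/32)$, and the deduction of $\frac{17n+24}{32}$ from $m \le 2n-4$ is correct. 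Your opening reductions (disconnected graph, degree $\le 1$) also check out arithmetically.

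The genuine gap is that the rest is a description of what a proof would look like, not a proof. The entire content of a result of this type lives in two places you leave unspecified: (i) the concrete list of reducible configurations together with, for each one, the explicit reduced graph $G'$, a verification that $G'$ stays triangle-free and planar, the extension argument showing $k$ vertices can always be returned to the forest without creating a cycle, and the check $29\Delta n - 6\Delta m \le 32k$; and (ii) the discharging rules together with the case analysis showing every triangle-free planar graph of minimum degree $\ge 2$ (or $\ge 3$, after further reductions) contains one of those configurations. You name ``natural candidates'' but verify none, and you explicitly defer the calibration (``this simultaneous tuning is precisely what fixes the denominator at~$32$''). Since the constants $29$ and $6$ are not derivable from the framework alone --- they emerge only from a configuration list and discharging that make the inequalities close simultaneously --- the proposal does not establish the stated bound. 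Compare with the paper's treatment of its own Theorem~\ref{genmain}: there, Lemmas~\ref{2co}--\ref{deg3deg3deg4} carry out (i) configuration by configuration, Table~\ref{abgtab} records the arithmetic, and the final Euler-formula count replaces an explicit discharging; all of that concrete work is what your sketch omits.
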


In $2010$, Kowalik et al. \cite{Kowalik} proposed that for triangle-free planar graphs of order $n$ and size $m$, $a(G) \ge \frac{119n -24m -24}{128} \ge \frac{71n +72}{128}$. However, it seems that the proof has a flaw. We give here an infinite family of counter-examples for $a(G) \ge \frac{119n -24m -24}{128}$ (see Section~\ref{proofmain}). We propose an improvement of Theorem~\ref{salavat}, which thus leads to the best known bound to our knowledge (see Section~\ref{proofmain}):

\begin{theo} \label{main}
Every triangle-free planar graph of order $n$ and size $m$ admits an induced forest of order at least $\max\{\frac{38n-7m}{44},n-\frac{m}4\}$.
\end{theo}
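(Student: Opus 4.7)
The plan is to establish the first lower bound $a(G) \ge \frac{38n - 7m}{44}$; the second bound $a(G) \ge n - m/4$ is exactly Theorem~\ref{bAlon}, so once the first is proved the $\max$ formulation follows for free. A direct computation shows $n - \frac{m}{4} \ge \frac{38n - 7m}{44}$ if and only if $m \le \frac{3n}{2}$, so in the sparse regime the second bound already dominates, and we may restrict attention to $m > 3n/2$ when attacking the first bound.

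I would proceed by minimum counterexample: fix a triangle-free planar graph $G$ with $44\,a(G) < 38n - 7m$ minimizing $n + m$, and exhibit a reducible configuration. The basic reductions are the following. A vertex $v$ of degree at most $1$ is reducible by deletion, since the inductive hypothesis gives an induced forest of $G - v$ of order at least $\frac{38(n-1) - 7(m - d(v))}{44}$, and adding $v$ back as an isolated vertex or a leaf increases this by $1$, which comfortably exceeds the required $\frac{38 - 7 d(v)}{44}$. A degree-$2$ vertex $v$ with neighbours $u, w$ is reducible by \emph{suppression}, i.e.\ deleting $v$ and inserting the edge $uw$, provided the resulting graph is still triangle-free: any induced forest $F'$ of $G - v + uw$ extends to an induced forest of $G$ of order $|F'| + 1$ by replacing the (possibly new) edge $uw$ inside $G[F']$ by the path $u v w$. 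Triangle-freeness of $G$ already forces $uw \notin E(G)$, so the only obstruction is a common neighbour of $u$ and $w$ other than $v$, i.e.\ $v$ lying on a $4$-cycle. Further reducible configurations—adjacent low-degree vertices on $4$-cycles, degree-$3$ vertices sharing several $4$-faces, and similar small gadgets—will have to be identified so that the counting step below can close.

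The core of the argument is then a discharging phase. Assign initial charges $\mu(v) = d(v) - 4$ to each vertex and $\mu(f) = |f| - 4$ to each face; Euler's formula combined with $\sum_f |f| = 2m$ and $|f| \ge 4$ yields a total charge of $-8$. The idea is to design discharging rules sending charge from long faces and high-degree vertices toward $4$-faces and low-degree vertices, calibrated so that, in the absence of every reducible configuration, each vertex and each face ends with non-negative charge, contradicting the global deficit of $-8$. The main difficulty will lie in the $4$-faces: triangle-freeness permits them freely, so they are the natural site of every hard case, and the precise denominator $44$ in the bound reflects the tightest arithmetic achievable at a $4$-face incident to several degree-$3$ vertices. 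The density hypothesis $m > 3n/2$ should play an essential role in forcing enough structural tightness via Euler for the discharge to close, and the compatibility between the two bounds in the $\max$ must be used to keep the sparse case out of the discharging entirely.
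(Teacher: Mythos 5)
Your high-level plan — take $a(G)\ge n - m/4$ from Theorem~\ref{bAlon} for free, and prove $a(G)\ge(38n-7m)/44$ by a minimum counterexample combined with reductions and an Euler-type counting — is the same scheme as the paper's, and your charge assignment $\mu(v)=d(v)-4$, $\mu(f)=|f|-4$ (total $-8$) is a harmless renormalisation of the paper's $(2d-6)/(l-6)$ count. However there are two real problems with the proposal as written.

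First, the remark that you ``may restrict attention to $m>3n/2$'' cannot be used inside the minimum-counterexample argument. The reduced graphs $H^*$ in the reductions are strictly sparser than $G$ (vertices and many incident edges are removed), so they can fall below the density threshold; the inductive hypothesis for the bound $(38n-7m)/44$ would then be unavailable precisely where you need it. The paper sidesteps this by proving the linear bound $a\,n - b\,m$ \emph{unconditionally}, for every $(a,b)$ in a small polygon of feasible parameters; both $(a,b)=(38/44,7/44)$ and $(a,b)=(1,1/4)$ lie in that polygon, so both bounds in the max hold simultaneously for every triangle-free planar graph with no density hypothesis, and the induction closes cleanly.

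Second, and more seriously, the proposal names only the trivial reducible configurations (degree at most $1$; degree-$2$ suppression off a $4$-cycle). The actual content of the paper is a long chain of structural lemmas — no $3$-vertex adjacent to a $5$-vertex, no $4$-face with any $3$-vertex, no $4$-cycle with two $3$-vertices, no $5$-face with only $3$-vertices, no $3$-vertex with both a $3$- and a $4$-neighbour, etc. — culminating in the inequality $4n_4+5n_5 \ge 4k_4 + 2k_5$ that closes the Euler count. Several of these lemmas delete a constellation of up to ten vertices and twenty-odd edges at once and then check that the triple $(\alpha,\beta,\gamma)$ satisfies $a\alpha - b\beta \le \gamma$ for the target $(a,b)$. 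That delicate casework is the entire difficulty, and the constraint set (in particular the binding face $8a-12b\le 5$, which pins down $7/44$) emerges only from that analysis. Writing ``further reducible configurations\ldots\ will have to be identified so that the counting step below can close'' correctly names the remaining task, but as a proof the proposal stops exactly where the hard part begins.
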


Hence by Euler's formula the following corollary holds:

\begin{cor} \label{comain}
Every triangle-free planar graph of order $n \ge 1$ admits an induced forest of order at least $\frac{6n + 7}{11}$.
\end{cor}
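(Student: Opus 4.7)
The plan is to obtain Corollary~\ref{comain} as a one-line consequence of Theorem~\ref{main}, using the Euler-type edge bound for triangle-free planar graphs recalled just before Theorem~\ref{salavat}. Specialized to girth $g=4$, that bound says that any planar triangle-free graph of order $n$ containing at least one cycle has $m \le 2n-4$.

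First I would dispose of the acyclic case. If $G$ has no cycle, then $G$ itself is a forest and $a(G)=n$; the inequality $n \ge \tfrac{6n+7}{11}$ is equivalent to $5n \ge 7$, hence holds for all $n \ge 2$, with the very small remaining instance handled directly.

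In the main case, $G$ contains a cycle, so $m \le 2n-4$ (which in particular forces $n \ge 4$). The term $\tfrac{38n-7m}{44}$ inside the maximum in Theorem~\ref{main} is strictly decreasing in $m$, so it suffices to substitute the worst admissible edge count $m=2n-4$:
\[
a(G) \;\ge\; \frac{38n-7m}{44} \;\ge\; \frac{38n - 7(2n-4)}{44} \;=\; \frac{24n+28}{44} \;=\; \frac{6n+7}{11}.
\]
Only the first term of the maximum is needed here; the second term $n - m/4$ dominates only in the sparser range $m \le 3n/2$, in which case the same substitution still yields a lower bound exceeding $\tfrac{6n+7}{11}$, so it produces nothing new for the present corollary.

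I do not anticipate any substantive obstacle: Theorem~\ref{main} does all of the real work, and the corollary reduces to the algebraic substitution displayed above. The only care is the elementary case split separating forests from graphs containing a cycle, which ensures that the Euler-type inequality is invoked precisely in the regime where it is valid.
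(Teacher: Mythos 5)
Your derivation is correct and is precisely what the paper intends: Corollary~\ref{comain} is deduced from Theorem~\ref{main} by plugging the Euler bound $m \le 2n-4$ into the term $\frac{38n-7m}{44}$, and the paper states exactly this ("Hence by Euler's formula the following corollary holds"). One small caveat worth flagging: the corollary as printed asserts the bound for $n \ge 1$, but for $n=1$ it reads $a(G) \ge 13/11 > 1$, which a single vertex cannot satisfy; so the "very small remaining instance" you defer cannot in fact be handled directly, and the correct range is $n \ge 2$ (the acyclic case gives $n \ge 7/5$, and the Euler substitution applies once the graph contains a cycle, forcing $n \ge 4$). This is an imprecision in the paper's statement rather than in your argument.
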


Kowalik et al.~\cite{Kowalik} made the following conjecture on planar graph of girth at least $5$:
\begin{conj}[\emph{Kowalik et al.~\cite{Kowalik}}] \label{conj:kowalik}
Every planar graph with girth at least $5$ and order $n$ admits an induced forest of order at least $7n/10$.
\end{conj}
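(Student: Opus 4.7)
Conjecture~\ref{conj:kowalik} is still open, and the paper itself only reaches the weaker bound $a(G) \ge (44n+50)/69$ announced in the abstract. I do not have a complete proof; what I would try, in the spirit of Salavatipour's argument and of Theorem~\ref{main}, is the following.

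First, translate the target into a linear inequality in $n$ and $m$. For a planar graph $G$ of girth at least $5$ with $n \ge 3$ and at least one cycle, Euler's formula gives $m \le 5(n-2)/3$. Aiming at $a(G) \ge n - \lambda m$, one gets $a(G) \ge 7n/10 + 3/5$ as soon as $\lambda \le 9/50$. Note that $(44n+50)/69 = n - (5/23)m$ on the tight instance $m = 5(n-2)/3$, so the paper's bound corresponds to $\lambda = 5/23 \approx 0.2174$, while the conjecture corresponds to $\lambda = 9/50 = 0.18$; the gap to close is small but not negligible. The plan is to prove the strengthened inequality $a(G) \ge n - \tfrac{9}{50} m$.

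Second, attack it by minimal counterexample plus discharging. Let $G$ minimize $n+m$ among girth-$\ge 5$ planar graphs violating the inequality. The first step is to build a catalogue of \emph{reducible configurations}: local subgraphs $H \subseteq G$ such that a small set $S$ can be removed, an induced forest $F'$ of $G-S$ produced by minimality, and $F'$ extended back through a carefully chosen subset of $S$ without creating a cycle, saving at least $|S| - \tfrac{9}{50}\,|E(S,V\setminus S)|$ vertices beyond the bound inherited from $G-S$. Natural candidates are vertices of degree at most $2$, short threads of $2$-vertices, $5$-faces incident with several $3$-vertices, and pairs of adjacent $5$-faces; each must be ruled out by an explicit case analysis of how $F'$ can touch the boundary of $S$.

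Third, discharge. Assign initial charge $d(v)-4$ to each vertex and $\ell(f)-4$ to each face, so that by Euler the total charge is $-8$. Design rules routing charge from long faces and high-degree vertices toward $5$-faces and $3$-vertices so that, in the absence of all reducible configurations, every element ends with non-negative charge, contradicting the negative total. The main obstacle, and the reason the conjecture is not already a theorem, lies precisely here: the extremal constructions in Table~\ref{matable} that push the upper bound close to $7n/10 + 2$ are built from dense clusters of $5$-faces meeting at $3$-vertices, and purely local discharging rules appear to saturate somewhere above $\lambda = 9/50$. Closing the remaining $0.037$ in $\lambda$ will likely require either genuinely new reducible configurations spanning several $5$-faces at once, or a hybrid argument that supplements the linear estimate with a separate invariant (for instance a list-coloring or a second potential function) capable of handling those near-tight substructures without the discharging having to pay for them alone.
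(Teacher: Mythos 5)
You have correctly recognized that this statement is a conjecture attributed to Kowalik et al., and that the paper does not prove it: the paper establishes only the weaker Theorem~\ref{bmain}, $a(G)\ge n-\tfrac{5m}{23}$, which via Euler's formula yields Corollary~\ref{bcomain}, $a(G)\ge\tfrac{44n+50}{69}$, strictly below $\tfrac{7n}{10}$. Your reduction of the target to an inequality of the form $a(G)\ge n-\lambda m$ with $\lambda\le\tfrac{9}{50}$ is arithmetically correct, as is the observation that the gap $\tfrac{5}{23}-\tfrac{9}{50}=\tfrac{43}{1150}$ is exactly what separates the paper's result from the conjecture.

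Your outlined strategy --- a minimal counterexample, reducible configurations certified by extending an induced forest of a smaller graph, and a counting conclusion via Euler's formula --- is precisely the scheme the paper uses for Theorem~\ref{bmain} in Section~\ref{bproofmain}, formalised by Observation~\ref{babg} and the structural Lemmas~\ref{b2co}--\ref{b33333}. One small calibration worth noting: the paper does not set up explicit charge-transfer rules. It finishes with a single Euler count, $\sum_{d\ge 3}(2d-6)n_d+\sum_{l\ge 5}(l-6)k_l\ge 0$, which follows directly from the lemmas (most importantly Lemmas~\ref{b43333} and~\ref{b33333}, forcing every $5$-face to contain at least two $4$-vertices). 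Pushing $\lambda$ below $\tfrac{5}{23}$ would require strictly stronger reducibility lemmas --- roughly, forbidding $5$-faces with fewer than three $4$-vertices, or a multi-face argument --- and the dodecahedron shows that $\lambda=\tfrac{9}{50}$ would be best possible. Since you do not claim a proof and the conjecture is indeed open, your honest assessment is accurate and consistent with the paper; there is no step to verify beyond confirming that your description of what a proof would need matches the paper's framework, which it does.
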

This conjecture, if true, would be tight for $n$ multiple of $20$, as shown by the example of the union of dodecahedron, given by Kowalik et al.~\cite{Kowalik} (see Figure~\ref{bibibi}).
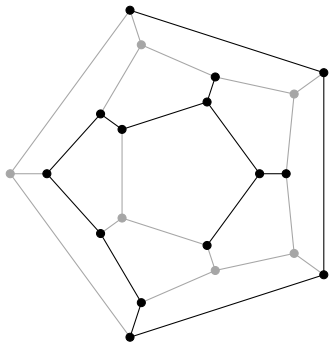
\begin{figure}[h]
\begin{center}
\definecolor{cqcqcq}{rgb}{0.65,0.65,0.65}
\begin{tikzpicture}[line cap=round,line join=round,>=triangle 45,x=1.0cm,y=1.0cm]
\clip(-2.6,-2.5) rectangle (2.3,2.5);
\draw (1.0,-0.0)-- (0.3090169943749475,0.9510565162951535);
\draw (0.3090169943749475,0.9510565162951535)-- (-0.8090169943749473,0.5877852522924731);
\draw [color=cqcqcq] (-0.8090169943749473,0.5877852522924731)-- (-0.8090169943749475,-0.5877852522924731);
\draw (0.30901699437494745,-0.9510565162951536)-- (1.0,-0.0);
\draw [color=cqcqcq] (0.30901699437494745,-0.9510565162951536)-- (-0.8090169943749475,-0.5877852522924731);
\draw [color=cqcqcq] (1.4543870332530722,1.056674031837904)-- (1.8443253811383944,1.3399808248767398);
\draw (1.8443253811383944,1.3399808248767398)-- (-0.7044696092807624,2.1681345189236856);
\draw [color=cqcqcq] (-0.7044696092807624,2.1681345189236856)-- (-0.55552641390555,1.709734498543117);
\draw [color=cqcqcq] (-0.7044696092807624,2.1681345189236856)-- (-2.279711543715264,0.0);
\draw [color=cqcqcq] (-2.279711543715264,0.0)-- (-0.7044696092807627,-2.1681345189236856);
\draw (-0.7044696092807627,-2.1681345189236856)-- (1.8443253811383948,-1.3399808248767402);
\draw (1.8443253811383948,-1.3399808248767402)-- (1.8443253811383944,1.3399808248767398);
\draw [color=cqcqcq] (1.4543870332530722,1.056674031837904)-- (0.41694269437650827,1.2832176664280721);
\draw [color=cqcqcq] (0.41694269437650827,1.2832176664280721)-- (-0.55552641390555,1.709734498543117);
\draw [color=cqcqcq] (-0.55552641390555,1.709734498543117)-- (-1.091570145238658,0.7930721328168736);
\draw (-1.091570145238658,0.7930721328168736)-- (-1.7977212386950445,0.0);
\draw (-1.7977212386950445,0.0)-- (-1.091570145238658,-0.7930721328168735);
\draw (-1.091570145238658,-0.7930721328168735)-- (-0.5555264139055502,-1.709734498543117);
\draw [color=cqcqcq] (-0.5555264139055502,-1.709734498543117)-- (0.4169426943765081,-1.2832176664280721);
\draw [color=cqcqcq] (0.4169426943765081,-1.2832176664280721)-- (1.4543870332530724,-1.0566740318379042);
\draw [color=cqcqcq] (1.4543870332530724,-1.0566740318379042)-- (1.3492549017242996,-0.0);
\draw [color=cqcqcq] (1.3492549017242996,-0.0)-- (1.4543870332530722,1.056674031837904);
\draw (1.3492549017242996,-0.0)-- (1.0,-0.0);
\draw (0.3090169943749475,0.9510565162951535)-- (0.41694269437650827,1.2832176664280721);
\draw (-1.091570145238658,0.7930721328168736)-- (-0.8090169943749473,0.5877852522924731);
\draw [color=cqcqcq] (-1.7977212386950445,0.0)-- (-2.279711543715264,0.0);
\draw [color=cqcqcq] (-1.091570145238658,-0.7930721328168735)-- (-0.8090169943749475,-0.5877852522924731);
\draw (-0.5555264139055502,-1.709734498543117)-- (-0.7044696092807627,-2.1681345189236856);
\draw [color=cqcqcq] (0.4169426943765081,-1.2832176664280721)-- (0.30901699437494745,-0.9510565162951536);
\draw [color=cqcqcq] (1.4543870332530724,-1.0566740318379042)-- (1.8443253811383948,-1.3399808248767402);
\begin{scriptsize}
\draw [fill=black] (0.3090169943749475,0.9510565162951535) circle (1.5pt);
\draw [fill=black] (1.0,-0.0) circle (1.5pt);
\draw [fill=black] (-0.8090169943749473,0.5877852522924731) circle (1.5pt);
\draw [color=cqcqcq][fill=cqcqcq] (-0.8090169943749475,-0.5877852522924731) circle (1.5pt);
\draw [fill=black] (0.30901699437494745,-0.9510565162951536) circle (1.5pt);
\draw [fill=black] (-1.091570145238658,0.7930721328168736) circle (1.5pt);
\draw [fill=black] (0.41694269437650827,1.2832176664280721) circle (1.5pt);
\draw [fill=black] (1.3492549017242996,-0.0) circle (1.5pt);
\draw [color=cqcqcq][fill=cqcqcq] (0.4169426943765081,-1.2832176664280721) circle (1.5pt);
\draw [fill=black] (-1.091570145238658,-0.7930721328168735) circle (1.5pt);
\draw [color=cqcqcq][fill=cqcqcq] (1.4543870332530722,1.056674031837904) circle (1.5pt);
\draw [color=cqcqcq][fill=cqcqcq] (-0.55552641390555,1.709734498543117) circle (1.5pt);
\draw [color=cqcqcq][fill=cqcqcq] (1.4543870332530724,-1.0566740318379042) circle (1.5pt);
\draw [fill=black] (-0.5555264139055502,-1.709734498543117) circle (1.5pt);
\draw [fill=black] (-1.7977212386950445,0.0) circle (1.5pt);
\draw [fill=black] (1.8443253811383944,1.3399808248767398) circle (1.5pt);
\draw [fill=black] (-0.7044696092807624,2.1681345189236856) circle (1.5pt);
\draw [color=cqcqcq][fill=cqcqcq] (-2.279711543715264,0.0) circle (1.5pt);
\draw [fill=black] (-0.7044696092807627,-2.1681345189236856) circle (1.5pt);
\draw [fill=black] (1.8443253811383948,-1.3399808248767402) circle (1.5pt);
\end{scriptsize}
\end{tikzpicture}
\end{center}
\caption{The dodecahedron admits an induced forest on fourteen of its vertices, but no induced forest on fifteen or more of its vertices. \label{bibibi}}
\end{figure}
We prove the following theorem which is a first step toward Conjecture~\ref{conj:kowalik} (see Section~\ref{bproofmain}):

\begin{theo} \label{bmain}
Every planar graph with girth at least $5$, order $n$ and size $m$ admits an induced forest of order at least $n - \frac{5m}{23}$.
\end{theo}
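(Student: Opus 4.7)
The plan is to argue by contradiction on a minimum counterexample. Let $G$ be a counterexample to Theorem~\ref{bmain} that minimizes $|V(G)|+|E(G)|$, so that every planar graph of girth at least $5$ with strictly fewer vertices or edges admits an induced forest of the claimed size.

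Two basic reductions already restrict the degree sequence. If $v \in V(G)$ has $d(v) \le 1$, then by minimality $G-v$ admits an induced forest $F'$ with $|F'| \ge (n-1) - 5(m-d(v))/23$; since $v$ has at most one neighbor, $F' \cup \{v\}$ is a forest in $G$ of size at least $n - 5m/23 + 5d(v)/23 \ge n - 5m/23$, a contradiction, so $\delta(G) \ge 2$. Symmetrically, if $d(v) \ge 5$, the induced forest $F'$ of $G-v$ provided by minimality has size at least $(n-1) - 5(m-d(v))/23 = n - 5m/23 + (5d(v)-23)/23 \ge n - 5m/23$, and since $v \notin F'$ this is already an induced forest of $G$, a contradiction; so $\Delta(G) \le 4$.

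The main effort is to force further local structure on $G$ by listing reducible configurations involving vertices of degree $2$, $3$, and $4$. The naive reduction around a single degree-$2$ vertex $v$ with neighbors $u,w$ produces an induced forest of $G$ of size at least $n - 5m/23 + 10/23$ \emph{unless} every near-optimal forest of $G-v$ both contains $u$ and $w$ and connects them, in which case a single swap leaves us short by $13/23$. To recover this slack one must remove larger local pieces at once. For example, a path $u, v_1, v_2, v_3, v_4, w$ of four consecutive degree-$2$ vertices is reducible: applying induction to $G - \{v_1,\dots,v_4\}$ yields a forest $F'$ of size at least $(n-4) - 5(m-5)/23$, and reinstating all four $v_i$'s after at most one swap to break a possible $u$–$w$ cycle gives a forest of $G$ of size at least $(n-4) - 5(m-5)/23 + 3 = n - 5m/23 + 2/23$. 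I would assemble a short list of such reducible configurations — chains of degree-$2$ vertices, degree-$2$ vertices attached to low-degree vertices around short faces, and similar local patterns — and verify each by a careful swapping argument calibrated to the tight fraction $5/23$.

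A discharging argument then closes the proof. With initial charges $\mu(v) = d(v) - 4$ and $\mu(f) = \ell(f) - 4$, Euler's formula gives total charge $-8$, and the girth condition forces $\mu(f) \ge 1$ for every face. Choosing rules that send appropriate amounts of charge from each face to its incident degree-$2$ and degree-$3$ vertices, one would show that the absence of every reducible configuration leaves each vertex and each face with nonnegative final charge, contradicting the negative total. The main obstacle will be the simultaneous calibration of the reducibility list and the discharging rules: since $5/23$ is tight, each forbidden configuration must save exactly the slack its removal costs, while the list of forbidden configurations must be rich enough for the discharging to yield nonnegativity everywhere — a delicate matching that is the characteristic difficulty of proofs of this type.
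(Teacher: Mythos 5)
Your high-level strategy matches the paper's: take a minimal counterexample, show $2\le d(v)\le 4$, rule out a list of reducible configurations, and close with a charge/counting argument driven by Euler's formula. Your initial calculations for $\delta(G)\ge 2$ and $\Delta(G)\le 4$ are correct and mirror the paper's first steps. Your discharging scheme with $\mu(v)=d(v)-4$ and $\mu(f)=\ell(f)-4$ is also workable: with the rule ``each face sends $1/3$ to each incident $3$-vertex'', an $\ell$-face with $\ell\ge 6$ is automatically safe, and a $5$-face is safe precisely when it carries at most three $3$-vertices.

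The genuine gap is that you never identify the configurations you actually need to forbid, and your sketch points at the wrong ones. Your discussion centers on degree-$2$ vertices (chains, attachments near short faces), but the paper eliminates \emph{all} degree-$2$ vertices early (its analogue of your first reduction, pushed through a case analysis on the neighborhood of a $2$-vertex, yields $\delta(G)\ge 3$), after which the real work is entirely about $3$- and $4$-vertices on $5$-faces. What your discharging plan implicitly requires — and what the paper spends most of its effort proving — is that \emph{every $5$-face has at least two $4$-vertices}. That single structural fact, together with $3\le\delta\le\Delta\le 4$ and girth $\ge 5$, is exactly what makes $4n_4\ge 2k_5$ and hence $-12 = \sum(2d-6)n_d + \sum(\ell-6)k_\ell \ge 2n_4 - k_5 \ge 0$ (or equivalently makes your discharging nonnegative), giving the contradiction. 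Proving that statement is not a matter of a few local swaps; it cascades through a chain of auxiliary lemmas (no separating $5$-cycle with all $3$-vertices, constraints on the third neighbors of the $v_i$ around a $5$-face, and an analysis of ``pentagonal flowers'' of adjacent $5$-faces), each verified against calibrated triples $(\alpha,\beta,\gamma)$ with $\alpha - \tfrac{5}{23}\beta \le \gamma$. You acknowledge that ``the delicate matching \dots is the characteristic difficulty,'' which is exactly right — but that difficulty is the content of the theorem, and your proposal stops before it. As written, it is a correct plan but not a proof: the key reducible configurations and their verifications are missing, and without at least the ``every $5$-face has $\ge 2$ four-vertices'' lemma the discharging does not close.

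One smaller remark: your minimality convention ($|V|+|E|$ minimum) is a fine choice and even slightly more flexible than the paper's (minimum order), but it does not by itself rescue the missing structural analysis. Also, the paper actually proves the more general form $a(G)\ge an-bm$ for all $(a,b)$ in a small polytope and then specializes to $(a,b)=(1,5/23)$; organizing the reducibility conditions as linear inequalities in $(a,b)$ keeps each case-check a one-line arithmetic verification, and you may find that cleaner than recalibrating every reduction directly against $5/23$.
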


Hence by Euler's formula the following corollary holds:

\begin{cor} \label{bcomain}
Every planar graph with girth at least $5$ and order $n\ge 1$ admits an induced forest of order at least $\frac{44n+50}{69}$. 
\end{cor}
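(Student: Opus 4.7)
The plan is to derive the bound directly from Theorem~\ref{bmain} by bounding $m$ in terms of $n$ via Euler's formula, exactly as was done for triangle-free graphs to get Corollary~\ref{comain} from Theorem~\ref{main}. First I would dispose of the degenerate case in which $G$ contains no cycle: then $G$ itself is an induced forest of order $n$, so $a(G)=n$, and a direct comparison shows $n \ge \frac{44n+50}{69}$ as soon as $n \ge 2$, which covers all remaining cases (the case $n=1$ is trivial since one isolated vertex is already an induced forest on all of $G$).

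For the main case, assume $G$ has at least one cycle. Then the face count argument recalled in the introduction — every face in a planar embedding of a graph of girth at least $g$ has boundary length at least $g$, so $2m \ge g f$ and Euler's formula $n - m + f = 2$ yields $m \le \frac{g}{g-2}(n-2)$ — gives, for $g=5$,
\begin{equation*}
m \le \frac{5(n-2)}{3}.
\end{equation*}
Plugging this upper bound into the conclusion of Theorem~\ref{bmain} yields
\begin{equation*}
a(G) \ge n - \frac{5m}{23} \ge n - \frac{5}{23}\cdot\frac{5(n-2)}{3} = n - \frac{25(n-2)}{69} = \frac{44n+50}{69},
\end{equation*}
which is the desired inequality.

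There is no real obstacle here: the only thing to be a little careful about is that the Euler-formula inequality $m \le \frac{5(n-2)}{3}$ requires the existence of a face of length at least $5$ (equivalently, a cycle), which is why the acyclic case must be handled separately. Once Theorem~\ref{bmain} is granted, the corollary is just a one-line substitution.
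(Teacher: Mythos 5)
Your derivation is exactly the paper's (implicit) proof: apply Euler's formula to a graph containing a cycle to get $m \le \frac{5(n-2)}{3}$, substitute into Theorem~\ref{bmain}, and handle the acyclic case separately. The arithmetic is correct, and separating off the acyclic case is precisely the care that the one-line remark in the paper glosses over.

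The only slip is your parenthetical dismissal of $n=1$ as ``trivial.'' For a single isolated vertex we have $a(G)=1$, while $\frac{44\cdot 1 + 50}{69} = \frac{94}{69} > 1$, so the claimed inequality actually \emph{fails} at $n=1$; there is no way to make that case trivial (or true). This is an off-by-one in the corollary's hypothesis as printed in the paper --- it should read $n\ge 2$, where your acyclic-case comparison $n \ge \frac{44n+50}{69}$ indeed holds with equality at $n=2$ --- and not a defect you introduced. For $n\ge 2$ your proof is complete and matches the paper's intent.
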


From Theorem~\ref{bmain} we can deduce, with Euler's formula (which implies that $m \le (g/(g-2)) (n - 2)$), the following corollary:

\begin{cor} \label{bcomainbis}
Every planar graph with girth at least $g \ge 5$ and order $n\ge 1$ admits an induced forest of order at least $n - \frac{(5n-10)g}{23(g-2)}$.
\end{cor}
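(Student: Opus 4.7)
The plan is essentially a direct combination of Theorem~\ref{bmain} with the size bound coming from Euler's formula, so there is no serious obstacle; the only care needed is in handling the trivial acyclic case, since the Euler bound $m \le \frac{g(n-2)}{g-2}$ requires the presence of at least one cycle (this is exactly the argument recalled earlier in the excerpt, where $2m/g \ge m-n+2$ is obtained from the fact that every face has length at least $g$).

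First I would dispose of the case where $G$ has no cycle: then $G$ itself is an induced forest, so $a(G)=n$, which is at least $n-\frac{(5n-10)g}{23(g-2)}$ since the subtracted quantity is nonnegative for $n\ge 2$ and $g\ge 5$, and the case $n=1$ is immediate.

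Next, assuming $G$ contains a cycle, I would invoke Euler's formula in the form already stated in the excerpt: since $G$ is planar with girth at least $g$, every face is bounded by at least $g$ edges, hence $2m \ge g(m-n+2)$, which rearranges to $m \le \frac{g(n-2)}{g-2}$. Because $g\ge 5$, Theorem~\ref{bmain} applies and gives $a(G) \ge n - \frac{5m}{23}$. Plugging in the upper bound on $m$ yields
\[
a(G) \ge n - \frac{5}{23}\cdot\frac{g(n-2)}{g-2} \;=\; n - \frac{(5n-10)g}{23(g-2)},
\]
which is the desired inequality. This concludes the proof.
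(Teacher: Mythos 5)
Your approach is exactly what the paper does: combine Theorem~\ref{bmain} with the Euler bound $m \le \frac{g(n-2)}{g-2}$. You are also right to split off the acyclic case, since that Euler bound requires the graph to have at least one face of finite length, i.e.\ at least one cycle; and when $G$ does contain a cycle one automatically has $n \ge g \ge 5$, so the Euler bound applies and your substitution into Theorem~\ref{bmain} is correct. For $n \ge 2$ your argument is complete.

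However, the step you describe as ``the case $n=1$ is immediate'' is in fact false, and it exposes a defect in the corollary's hypothesis. For $n=1$ the quantity $\frac{(5n-10)g}{23(g-2)} = -\frac{5g}{23(g-2)}$ is strictly negative, so the claimed lower bound $n - \frac{(5n-10)g}{23(g-2)} = 1 + \frac{5g}{23(g-2)}$ exceeds $1$, whereas a single vertex has $a(G)=1$. The corollary as written (``order $n\ge 1$'') therefore fails at $n=1$; the hypothesis should be $n\ge 2$, at which point the subtracted quantity is nonnegative and your acyclic case goes through cleanly. This is an error inherited from the paper's statement rather than from your method, but you should have flagged the $n=1$ counterexample instead of dismissing it as immediate.
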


\begin{table}[h!]
\begin{centering}
\begin{tabular}{|c|c|c|}
        \hline
        Girth higher than & Lower bound for $a(G)$ & $a(G)$ for a graph of this class \\
        \hline && \\

        $4$ & $\frac{6n + 7}{11}$ & $\frac{5n}{8}$ \\ && \\

        $5$ & $\frac{44n+50}{69}$ & $\frac{7n}{10}$ \\ && \\

        $6$ & $\frac{31n + 30}{46}$ & $\frac{23n}{30}$ \\ && \\

        $7$ & $\frac{16n + 14}{23}$ & $\frac{17n}{21}$ \\ && \\
        \hline
\end{tabular}
   \caption{\label{matablebis} Our lower bounds on $a(G)$ for $G$ planar graph of high enough girth, compared to the best possible lower bounds for $a(G)$ on the corresponding classes of graphs.}
\end{centering}
\end{table}

Finally, we summarize lower and upper bounds in Table~\ref{matablebis}. The upper bounds for girth $6$ and $7$ are obtained by the graphs in Figures~\ref{bilibilibi} and~\ref{bolobolobo}. There is no bigger induced forest for any of them since all vertices have degree at most $3$, and thus at least one vertex per two faces have to be removed.

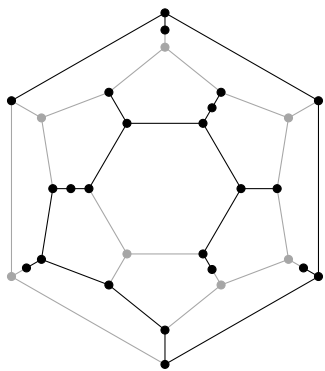
\begin{figure}[h!]
\begin{center}
\definecolor{cqcqcq}{rgb}{0.65,0.65,0.65}
\begin{tikzpicture}[line cap=round,line join=round,>=triangle 45,x=1.0cm,y=1.0cm]
\clip(-2.5,-2.5) rectangle (2.5,2.5);
\draw (-0.5,0.8660254037844386)-- (-1.0,0);
\draw [color=cqcqcq] (-1.0,0)-- (-0.4999999999999999,-0.8660254037844386);
\draw [color=cqcqcq] (-0.4999999999999999,-0.8660254037844386)-- (0.5,-0.8660254037844387);
\draw (0.5,-0.8660254037844387)-- (1.0,0.0);
\draw (0.5,0.8660254037844386)-- (1.0,0.0);
\draw (0.5,0.8660254037844386)-- (-0.5,0.8660254037844386);
\draw (-0.5,0.8660254037844386)-- (-0.7376958531495094,1.2777266981878195);
\draw (-1.0,0)-- (-1.4753917062990187,0);
\draw [color=cqcqcq] (-0.4999999999999999,-0.8660254037844386)-- (-0.7376958531495095,-1.2777266981878197);
\draw (1.0,0.0)-- (1.4753917062990187,0);
\draw (0.5,0.8660254037844386)-- (0.7376958531495094,1.2777266981878195);
\draw [color=cqcqcq] (-0.7376958531495094,1.2777266981878195)-- (0.0,1.8757463842996907);
\draw [color=cqcqcq] (0.0,1.8757463842996907)-- (0.7376958531495094,1.2777266981878195);
\draw (0.0,2.33100325520568)-- (-2.01870803531234,1.16550162760284);
\draw [color=cqcqcq] (-2.01870803531234,1.16550162760284)-- (-1.62444401986034,0.9378731921498452);
\draw [color=cqcqcq] (-1.62444401986034,0.9378731921498452)-- (-0.7376958531495094,1.2777266981878195);
\draw [color=cqcqcq] (-1.62444401986034,0.9378731921498452)-- (-1.4753917062990187,0);
\draw [color=cqcqcq] (-2.01870803531234,1.16550162760284)-- (-2.01870803531234,-1.16550162760284);
\draw (-1.4753917062990187,0)-- (-1.62444401986034,-0.9378731921498451);
\draw [color=cqcqcq] (-2.01870803531234,-1.16550162760284)-- (0.0,-2.3310032552056805);
\draw (0.0,-1.8757463842996902)-- (-0.7376958531495095,-1.2777266981878197);
\draw (-0.7376958531495095,-1.2777266981878197)-- (-1.62444401986034,-0.9378731921498451);
\draw (0.0,-1.8757463842996902)-- (0.0,-2.3310032552056805);
\draw [color=cqcqcq] (0.0,-1.8757463842996902)-- (0.7376958531495095,-1.2777266981878197);
\draw [color=cqcqcq] (0.7376958531495095,-1.2777266981878197)-- (1.62444401986034,-0.9378731921498452);
\draw (0.0,-2.3310032552056805)-- (2.01870803531234,-1.16550162760284);
\draw (2.01870803531234,-1.16550162760284)-- (2.01870803531234,1.16550162760284);
\draw (2.01870803531234,1.16550162760284)-- (0.0,2.33100325520568);
\draw [color=cqcqcq] (2.01870803531234,1.16550162760284)-- (1.62444401986034,0.9378731921498451);
\draw [color=cqcqcq] (1.62444401986034,0.9378731921498451)-- (0.7376958531495094,1.2777266981878195);
\draw [color=cqcqcq] (1.62444401986034,0.9378731921498451)-- (1.4753917062990187,0);
\draw [color=cqcqcq] (1.4753917062990187,0)-- (1.62444401986034,-0.9378731921498452);
\draw [color=cqcqcq] (-2.01870803531234,-1.16550162760284)-- (-1.82157602758634,-1.0516874098763425);
\draw (-1.82157602758634,-1.0516874098763425)-- (-1.62444401986034,-0.9378731921498451);
\draw [color=cqcqcq] (0.0,2.1033748197526854)-- (0.0,1.8757463842996907);
\draw (0.0,2.33100325520568)-- (0.0,2.1033748197526854);
\draw [color=cqcqcq] (1.62444401986034,-0.9378731921498452)-- (1.82157602758634,-1.0516874098763427);
\draw (1.82157602758634,-1.0516874098763427)-- (2.01870803531234,-1.16550162760284);
\draw (0.5,-0.8660254037844387)-- (0.6188479265747547,-1.0718760509861291);
\draw [color=cqcqcq] (0.6188479265747547,-1.0718760509861291)-- (0.7376958531495095,-1.2777266981878197);
\begin{scriptsize}
\draw [fill=black] (1.0,0.0) circle (1.5pt);
\draw [fill=black] (0.5,0.8660254037844386) circle (1.5pt);
\draw [fill=black] (-0.5,0.8660254037844386) circle (1.5pt);
\draw [fill=black] (-1.0,0) circle (1.5pt);
\draw [color=cqcqcq][fill=cqcqcq] (-0.4999999999999999,-0.8660254037844386) circle (1.5pt);
\draw [fill=black] (0.5,-0.8660254037844387) circle (1.5pt);
\draw [fill=black] (1.4753917062990187,0) circle (1.5pt);
\draw [color=cqcqcq][fill=cqcqcq] (1.62444401986034,0.9378731921498451) circle (1.5pt);
\draw [fill=black] (2.01870803531234,1.16550162760284) circle (1.5pt);
\draw [color=cqcqcq][fill=cqcqcq] (1.62444401986034,-0.9378731921498452) circle (1.5pt);
\draw [fill=black] (2.01870803531234,-1.16550162760284) circle (1.5pt);
\draw [fill=black] (0.0,-1.8757463842996902) circle (1.5pt);
\draw [fill=black] (0.0,-2.3310032552056805) circle (1.5pt);
\draw [color=cqcqcq][fill=cqcqcq] (0.7376958531495095,-1.2777266981878197) circle (1.5pt);
\draw [fill=black] (-0.7376958531495095,-1.2777266981878197) circle (1.5pt);
\draw [color=cqcqcq][fill=cqcqcq] (-2.01870803531234,-1.16550162760284) circle (1.5pt);
\draw [fill=black] (-1.62444401986034,-0.9378731921498451) circle (1.5pt);
\draw [color=cqcqcq][fill=cqcqcq] (-1.62444401986034,0.9378731921498452) circle (1.5pt);
\draw [fill=black] (-2.01870803531234,1.16550162760284) circle (1.5pt);
\draw [fill=black] (-1.4753917062990187,0) circle (1.5pt);
\draw [fill=black] (-0.7376958531495094,1.2777266981878195) circle (1.5pt);
\draw [fill=black] (0.0,2.33100325520568) circle (1.5pt);
\draw [color=cqcqcq][fill=cqcqcq] (0.0,1.8757463842996907) circle (1.5pt);
\draw [fill=black] (0.7376958531495094,1.2777266981878195) circle (1.5pt);
\draw [fill=black] (0.0,2.1033748197526854) circle (1.5pt);
\draw [fill=black] (-1.82157602758634,-1.0516874098763425) circle (1.5pt);
\draw [fill=black] (1.82157602758634,-1.0516874098763427) circle (1.5pt);
\draw [fill=black] (0.6188479265747547,-1.0718760509861291) circle (1.5pt);
\draw [fill=black] (-1.2376958531495093,0) circle (1.5pt);
\draw [fill=black] (0.6188479265747546,1.0718760509861291) circle (1.5pt);
\end{scriptsize}
\end{tikzpicture}
\end{center}
\caption{A planar graph of girth $6$ on $30$ vertices that admits an induced forest on $23$ of its vertices, but no induced forest on $24$ or more of its vertices. \label{bilibilibi}}
\end{figure}

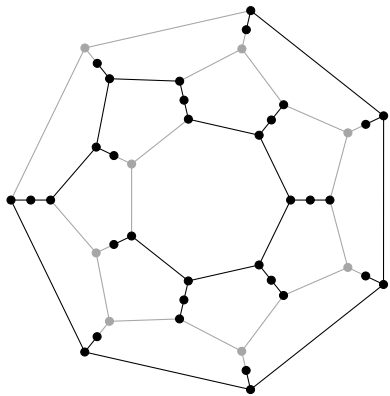
\begin{figure}[h!]
\begin{center}
\definecolor{cqcqcq}{rgb}{0.65,0.65,0.65}
\begin{tikzpicture}[line cap=round,line join=round,>=triangle 45,x=0.55cm,y=0.55cm]
\clip(-5.5,-5.5) rectangle (5.5,5.5);
\draw [color=cqcqcq] (-2.9187046762420463,3.6666867736051882)-- (-2.6238107350496556,3.2962197912523714);
\draw (-2.6238107350496556,3.2962197912523714)-- (-2.328916793857265,2.9257528088995546);
\draw (-0.6533539251791053,2.8664836731530596)-- (-0.548906188840949,2.4082362832272306);
\draw (-0.548906188840949,2.4082362832272306)-- (-0.44445845250279276,1.9499888933014011);
\draw (1.0462651612432865,4.568233585886012)-- (0.9405548235572622,4.106678013856067);
\draw [color=cqcqcq] (0.9405548235572622,4.106678013856067)-- (0.8348444858712378,3.645122441826121);
\draw (1.8334038643972352,2.2983103075984506)-- (1.5403086887963167,1.9308933536626438);
\draw (1.5403086887963167,1.9308933536626438)-- (1.247213513195398,1.5634763997268373);
\draw (4.223620723730235,2.0308758861449157)-- (3.796883420891386,1.8256845219713524);
\draw [color=cqcqcq] (3.796883420891386,1.8256845219713524)-- (3.370146118052537,1.6204931577977892);
\draw (2.94,0.0)-- (2.4699999999999998,0.0);
\draw (2.4699999999999998,0.0)-- (2.0,0.0);
\draw [color=cqcqcq] (3.3684472940068986,-1.6240214773800357)-- (3.794969487006044,-1.829659607221715);
\draw (3.794969487006044,-1.829659607221715)-- (4.221491680005189,-2.0352977370633947);
\draw (1.245575560976225,-1.5647816211531764)-- (1.5382858178056378,-1.932505302124173);
\draw (1.5382858178056378,-1.932505302124173)-- (1.8309960746350507,-2.3002289830951694);
\draw [color=cqcqcq] (0.8310268655193396,-3.6459946901057236)-- (0.9362538054667601,-4.10766070864606);
\draw (0.9362538054667601,-4.10766070864606)-- (1.0414807454141806,-4.569326727186396);
\draw (-0.44650023202190303,-1.9495223883824437)-- (-0.5514277865470503,-2.4076601496523176);
\draw (-0.5514277865470503,-2.4076601496523176)-- (-0.6563553410721974,-2.8657979109221916);
\draw [color=cqcqcq] (-2.331979357502764,-2.923312369158491)-- (-2.627261089047286,-3.2934703362234354);
\draw (-2.627261089047286,-3.2934703362234354)-- (-2.922542820591808,-3.66362830328838);
\draw (-1.8024566839988228,-0.8666890459143738)-- (-2.226034004738546,-1.0703609717042517);
\draw [color=cqcqcq] (-2.226034004738546,-1.0703609717042517)-- (-2.6496113254782694,-1.2740328974941295);
\draw (-3.739503032689509,-0.0)-- (-4.213009338418993,-0.0);
\draw (-4.213009338418993,-0.0)-- (-4.686515644148477,-0.0);
\draw [color=cqcqcq] (-1.801548101210796,0.8685760985796095)-- (-2.224911904995333,1.0726914817458177);
\draw (-2.224911904995333,1.0726914817458177)-- (-2.64827570877987,1.2768068649120259);
\draw [color=cqcqcq] (-2.9187046762420463,3.6666867736051882)-- (1.0462651612432865,4.568233585886012);
\draw (1.0462651612432865,4.568233585886012)-- (4.223620723730235,2.0308758861449157);
\draw (4.223620723730235,2.0308758861449157)-- (4.221491680005189,-2.0352977370633947);
\draw (4.221491680005189,-2.0352977370633947)-- (1.0414807454141806,-4.569326727186396);
\draw (1.0414807454141806,-4.569326727186396)-- (-2.922542820591808,-3.66362830328838);
\draw (-2.922542820591808,-3.66362830328838)-- (-4.686515644148477,-0.0);
\draw [color=cqcqcq] (-4.686515644148477,-0.0)-- (-2.9187046762420463,3.6666867736051882);
\draw (-3.739503032689509,-0.0)-- (-2.64827570877987,1.2768068649120259);
\draw (-2.64827570877987,1.2768068649120259)-- (-2.328916793857265,2.9257528088995546);
\draw (-2.328916793857265,2.9257528088995546)-- (-0.6533539251791053,2.8664836731530596);
\draw [color=cqcqcq] (-0.6533539251791053,2.8664836731530596)-- (0.8348444858712378,3.645122441826121);
\draw (-0.44445845250279276,1.9499888933014011)-- (1.247213513195398,1.5634763997268373);
\draw [color=cqcqcq] (0.8348444858712378,3.645122441826121)-- (1.8334038643972352,2.2983103075984506);
\draw [color=cqcqcq] (1.8334038643972352,2.2983103075984506)-- (3.370146118052537,1.6204931577977892);
\draw [color=cqcqcq] (3.370146118052537,1.6204931577977892)-- (2.94,0.0);
\draw (1.247213513195398,1.5634763997268373)-- (2.0,0.0);
\draw (2.0,0.0)-- (1.245575560976225,-1.5647816211531764);
\draw (1.245575560976225,-1.5647816211531764)-- (-0.44650023202190303,-1.9495223883824437);
\draw (-0.44650023202190303,-1.9495223883824437)-- (-1.8024566839988228,-0.8666890459143738);
\draw [color=cqcqcq] (-1.8024566839988228,-0.8666890459143738)-- (-1.801548101210796,0.8685760985796095);
\draw [color=cqcqcq] (-1.801548101210796,0.8685760985796095)-- (-0.44445845250279276,1.9499888933014011);
\draw [color=cqcqcq] (-3.739503032689509,-0.0)-- (-2.6496113254782694,-1.2740328974941295);
\draw [color=cqcqcq] (-2.6496113254782694,-1.2740328974941295)-- (-2.331979357502764,-2.923312369158491);
\draw [color=cqcqcq] (-2.331979357502764,-2.923312369158491)-- (-0.6563553410721974,-2.8657979109221916);
\draw [color=cqcqcq] (-0.6563553410721974,-2.8657979109221916)-- (0.8310268655193396,-3.6459946901057236);
\draw [color=cqcqcq] (0.8310268655193396,-3.6459946901057236)-- (1.8309960746350507,-2.3002289830951694);
\draw [color=cqcqcq] (1.8309960746350507,-2.3002289830951694)-- (3.3684472940068986,-1.6240214773800357);
\draw [color=cqcqcq] (3.3684472940068986,-1.6240214773800357)-- (2.94,0.0);
\begin{scriptsize}
\draw [fill=black] (2.0,0.0) circle (1.5pt);
\draw [fill=black] (1.247213513195398,1.5634763997268373) circle (1.5pt);
\draw [fill=black] (-0.44445845250279276,1.9499888933014011) circle (1.5pt);
\draw [color=cqcqcq][fill=cqcqcq] (-1.801548101210796,0.8685760985796095) circle (1.5pt);
\draw [fill=black] (-1.8024566839988228,-0.8666890459143738) circle (1.5pt);
\draw [fill=black] (-0.44650023202190303,-1.9495223883824437) circle (1.5pt);
\draw [fill=black] (1.245575560976225,-1.5647816211531764) circle (1.5pt);
\draw [fill=black] (2.94,0.0) circle (1.5pt);
\draw [fill=black] (1.8334038643972352,2.2983103075984506) circle (1.5pt);
\draw [fill=black] (-0.6533539251791053,2.8664836731530596) circle (1.5pt);
\draw [fill=black] (-2.64827570877987,1.2768068649120259) circle (1.5pt);
\draw [color=cqcqcq][fill=cqcqcq] (-2.6496113254782694,-1.2740328974941295) circle (1.5pt);
\draw [fill=black] (-0.6563553410721974,-2.8657979109221916) circle (1.5pt);
\draw [fill=black] (1.8309960746350507,-2.3002289830951694) circle (1.5pt);
\draw [color=cqcqcq][fill=cqcqcq] (3.370146118052537,1.6204931577977892) circle (1.5pt);
\draw [color=cqcqcq][fill=cqcqcq] (0.8348444858712378,3.645122441826121) circle (1.5pt);
\draw [fill=black] (-2.328916793857265,2.9257528088995546) circle (1.5pt);
\draw [fill=black] (-3.739503032689509,-0.0) circle (1.5pt);
\draw [color=cqcqcq][fill=cqcqcq] (-2.331979357502764,-2.923312369158491) circle (1.5pt);
\draw [color=cqcqcq][fill=cqcqcq] (0.8310268655193396,-3.6459946901057236) circle (1.5pt);
\draw [color=cqcqcq][fill=cqcqcq] (3.3684472940068986,-1.6240214773800357) circle (1.5pt);
\draw [fill=black] (4.223620723730235,2.0308758861449157) circle (1.5pt);
\draw [fill=black] (1.0462651612432865,4.568233585886012) circle (1.5pt);
\draw [color=cqcqcq][fill=cqcqcq] (-2.9187046762420463,3.6666867736051882) circle (1.5pt);
\draw [fill=black] (-4.686515644148477,-0.0) circle (1.5pt);
\draw [fill=black] (-2.922542820591808,-3.66362830328838) circle (1.5pt);
\draw [fill=black] (1.0414807454141806,-4.569326727186396) circle (1.5pt);
\draw [fill=black] (4.221491680005189,-2.0352977370633947) circle (1.5pt);
\draw [fill=black] (-2.6238107350496556,3.2962197912523714) circle (1.5pt);
\draw [fill=black] (3.796883420891386,1.8256845219713524) circle (1.5pt);
\draw [fill=black] (0.9405548235572622,4.106678013856067) circle (1.5pt);
\draw [fill=black] (-4.213009338418993,-0.0) circle (1.5pt);
\draw [fill=black] (-2.627261089047286,-3.2934703362234354) circle (1.5pt);
\draw [fill=black] (0.9362538054667601,-4.10766070864606) circle (1.5pt);
\draw [fill=black] (3.794969487006044,-1.829659607221715) circle (1.5pt);
\draw [fill=black] (2.4699999999999998,0.0) circle (1.5pt);
\draw [fill=black] (1.5403086887963167,1.9308933536626438) circle (1.5pt);
\draw [fill=black] (-0.548906188840949,2.4082362832272306) circle (1.5pt);
\draw [fill=black] (-2.224911904995333,1.0726914817458177) circle (1.5pt);
\draw [fill=black] (-2.226034004738546,-1.0703609717042517) circle (1.5pt);
\draw [fill=black] (-0.5514277865470503,-2.4076601496523176) circle (1.5pt);
\draw [fill=black] (1.5382858178056378,-1.932505302124173) circle (1.5pt);
\end{scriptsize}
\end{tikzpicture}
\end{center}
\caption{A planar graph of girth $7$ on $42$ vertices that admits an induced forest on $34$ of its vertices, but no induced forest on $35$ or more of its vertices. \label{bolobolobo}}
\end{figure}

\section{Proof of Theorem~\ref{main}} \label{proofmain}

We first give a counter-example to the bound of Kowalik et al.~\cite{Kowalik}: we consider the disjoint union of $k$ cubes. There are $8k$ vertices and $12k$ edges, hence Kowalik et al.'s lower bound tells us that there is an induced forest of size at least $\frac{119(8k) -24(12k) -24}{128} = 5k + (k-1)\frac{3}{16}$. However there cannot be an induced forest of more than $5$ vertices in a cube (see Figure~\ref{bababa}), and thus the biggest induced forest in our graph contains $5k$ vertices, which contradicts the lower bound. Furthermore, by increasing $k$, we can see that the biggest induced forest can be arbitrarily smaller than the supposed lower bound.

The proofs of Theorems \ref{main} and \ref{bmain} follow the same scheme. They consist in looking for a minimal counter-example $G$, proving some structural properties on $G$ and concluding that it cannot verify Euler's formula, which is contradictory.

Consider $G = (V,E)$. For a set $S \subset V$, let $G - S$ be the graph constructed from $G$ by removing the vertices of $S$ and all the edges incident to some vertex of $S$. If $x \in V$, then we denote $G - \{x\}$ by $G - x$. For a set $S$ of vertices such that $S \cap V = \emptyset$, let $G + S$ be the graph constructed from $G$ by adding the vertices of $S$. If $x \notin V$, then we denote $G + \{x\}$ by $G + x$. For a set $F$ of pairs of vertices of $G$ such that $F \cap E = \emptyset$, let $G + F$ be the graph constructed from $G$ by adding the edges of $F$. If $e$ is a pair of vertices of $G$ and $e \notin E$, we denote $G + \{e\}$ by $G + e$. For a set $W \subset V$, we denote by $G[W]$ the subgraph of $G$ induced by $W$.

We call a vertex of degree $d$, at least $d$ and at most $d$, a \emph{$d$-vertex}, a \emph{$d^+$-vertex} and a \emph{$d^-$-vertex} respectively. Similarly, we call a cycle of length $l$, at least $l$ and at most $l$ a \emph{$l$-cycle}, a \emph{$l^+$-cycle} and a \emph{$l^-$-cycle} respectively, and by extension a face of length $l$, at least $l$ and at most $l$ a \emph{$l$-face}, a \emph{$l^+$-face} and a \emph{$l^-$-face} respectively.

Let ${\cal P}_4$ be the class of triangle-free planar graphs, and ${\cal P}_5$ be the class of planar graphs of girth at least $5$.

We will prove of the following more general statement than Theorem \ref{main}: 

\begin{theo} \label{genmain}
If $a$ and $b$ are positive constants such that equations (\ref{a})--(\ref{8a12b}) are verified, then $a(G) \ge an-bm$ for all $G \in {\cal P}_4$.
\end{theo}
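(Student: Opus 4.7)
The plan is to proceed by contradiction via a standard reducibility$+$discharging argument. Suppose the theorem fails, and let $G=(V,E)\in\mathcal{P}_4$ be a counter-example chosen to minimize $|V|+|E|$; thus $a(G) < an-bm$, while every triangle-free planar graph strictly smaller (in this measure) satisfies the bound. The aim is to derive a contradiction with Euler's formula.

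\medskip

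\noindent\textbf{Phase 1: Reducibility lemmas.} I would establish that $G$ cannot contain any of several prescribed local configurations. The general template is as follows. Suppose $G$ contains a configuration $C$ with vertex set $S\subseteq V$ and boundary $\partial S$ in $V\setminus S$. Form $G'$ from $G$ by deleting $S$ (and, if necessary, adding a controlled number of edges between vertices of $\partial S$ to preserve planarity and triangle-freeness, or to pre-certify that certain boundary vertices do not end up in a common induced forest). Since $G'$ is strictly smaller and still in $\mathcal{P}_4$, minimality yields an induced forest $F'$ of $G'$ with $|F'|\ge a|V(G')|-b|E(G')|$. One then extends $F'$ by a carefully chosen subset $T\subseteq S$, using a case analysis on which vertices of $\partial S$ lie in $F'$, to get an induced forest $F=F'\cup T$ of $G$. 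The configuration is reducible exactly when $|T|$ is large enough that $|F|\ge an-bm$, contradicting the choice of $G$; this numerical requirement is one of the inequalities (\ref{a})--(\ref{8a12b}). I expect the reducible configurations to include the obvious ones ($1^-$-vertices, adjacent $2$-vertices, $2$-vertices sharing certain neighbourhoods) together with more delicate ones involving $3$-vertices and small faces, mirroring the setup Salavatipour used for Theorem~\ref{salavat} but tightened to exploit the stronger constants.

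\medskip

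\noindent\textbf{Phase 2: Discharging.} Once the forbidden configurations are in hand, I would embed $G$ in the plane and assign initial charges $\mu(v)=d(v)-4$ to each vertex and $\mu(f)=\ell(f)-4$ to each face, so that Euler's formula together with triangle-freeness gives
\[
\sum_{v\in V}\mu(v)+\sum_{f\in F(G)}\mu(f)=-8.
\]
I would then design discharging rules that move charge from $5^+$-faces and high-degree vertices toward $2$- and $3$-vertices and $4$-faces, calibrated so that, using the structural restrictions proved in Phase~1, every vertex and every face ends with nonnegative final charge. Summing yields a nonnegative total, contradicting the value $-8$.

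\medskip

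\noindent\textbf{Main obstacle.} The delicate step is Phase~1: choosing the right family of reducible configurations and, for each, matching the extension argument to a clean inequality in the list (\ref{a})--(\ref{8a12b}). Two subtleties require care. First, when deleting $S$ one may need to add edges or dummy vertices to prevent the extension from creating a cycle through $\partial S$, which must be accounted for in the $-bm$ term; this is where the sharper coefficients appear. Second, the extension $T$ depends on which vertices of $\partial S$ are already in $F'$, so each configuration branches into several sub-cases, and the worst case pins down the binding inequality. Once a configuration list is fixed whose extension inequalities are all implied by (\ref{a})--(\ref{8a12b}), the discharging in Phase~2 should be essentially routine bookkeeping: distribute just enough charge to neutralize the light vertices and short faces that survive Phase~1, using that each such light structure is flanked by enough heavy structure to pay for it.
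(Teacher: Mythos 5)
Your high-level framework matches the paper exactly: minimal counterexample, reducibility via Observation~\ref{abg} (delete a small set, possibly add edges or a dummy vertex to preserve planarity and triangle-freeness, extend a large forest of the smaller graph back into $G$, and tie the extension gain to one of the inequalities (\ref{a})--(\ref{8a12b})), followed by an Euler-formula contradiction. You also correctly flag the two real subtleties: the added edges/vertices must not create triangles (this is exactly where Lemma~\ref{deg3deg4}'s ``no common neighbor'' hypothesis and the dummy vertex $x$ in Lemma~\ref{3333} come from), and the extension branches on which boundary vertices survive in $F'$.

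The gap is that the proposal is a method description, not a proof: it contains no actual reducible configuration and no verified inequality. The entire content of the argument lives in Phase~1, and it is far from the ``obvious'' handful you gesture at. The paper needs about fifteen lemmas, culminating in quite specific structural facts: minimum degree $3$ and maximum degree $5$ (Lemmas~\ref{degge3},~\ref{degle5}); no $4$-face containing any $3^-$-vertex at all (Lemmas~\ref{2vertices3},~\ref{3444}); every $5$-face has at least two $4^+$-vertices (Lemmas~\ref{33333},~\ref{deg3deg3deg4},~\ref{deg3deg5}). Proving these requires increasingly elaborate configurations (the proof of Lemma~\ref{3444} alone tracks a dozen vertices $v_i,u_i,w_i,y_i,z$), and each reduction must be matched to a convex combination of (\ref{a})--(\ref{8a12b}) as in Table~\ref{abgtab}. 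None of this can be waved through as ``mirroring Salavatipour tightened to exploit the stronger constants''; the stronger constants are precisely what force the deeper configurations.

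A smaller point on Phase~2: once those structural facts are in hand, the paper does not need discharging rules at all. It directly bounds $\sum_f c_{4^+}(f)$ over $4$- and $5$-faces from above by $4n_4+5n_5$ (each $d$-vertex lies on at most $d$ faces) and from below by $4k_4+2k_5$ (from the face lemmas), plugs these into $\sum_d (2d-6)n_d + \sum_\ell (\ell-6)k_\ell = -12$, and gets a nonnegative left-hand side. Your $\mu(v)=d(v)-4$, $\mu(f)=\ell(f)-4$ normalization (total $-8$) would also work, but you would then have to move charge onto the $3$-vertices rather than onto the short faces; either bookkeeping is routine once Phase~1 is done, which is your own point. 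So the plan is sound, but as written there is no proof to check.
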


\begin{eqnarray}
0 \le a \le 1 \label{a}\\
0 \le b \label{b}\\
a - 6b \le 0 \label{a6b}\\
3a - 10b \le 1 \label{3a10b}\\
8a - 12b \le 5 \label{8a12b}
\end{eqnarray}

\begin{figure}[h]
\definecolor{yqyqyq}{rgb}{0.4,0.4,0.4}
\begin{center}
\begin{tikzpicture}[line cap=round,line join=round,>=triangle 45,x=20.0cm,y=20.0cm]
\clip(0.0,0.74) rectangle (0.33,1.019);
\fill[color=yqyqyq,fill=yqyqyq,fill opacity=0.1] (-0.0,-0.0) -- (0.125,0.75) -- (0.1590909090909091,0.8636363636363636) -- (0.25,1.0) -- (1.0031103354657922,1.0) -- (1.0031103354657922,0.0) -- cycle;
\draw [domain=0.0:0.33] plot(\x,{(-1.0-0.0*\x)/-1.0});
\draw (1.0031103354657922,0.74) -- (1.0031103354657922,1.019);
\draw [domain=0.0:0.33] plot(\x,{(-0.0--6.0*\x)/1.0});
\draw [domain=0.0:0.33] plot(\x,{(--5.0--12.0*\x)/8.0});
\draw [domain=0.0:0.33] plot(\x,{(--1.0--10.0*\x)/3.0});
\draw (0.02422877058704006,1.018) node[anchor=north west] {\tiny{$a = 1$}};
\draw (0.2493432402421246,1.001516563000336) node[anchor=north west] {$(\frac{1}{4},1)$};
\draw (0.16435664995192747,0.8711994501738272) node[anchor=north west] {$(\frac{7}{44},\frac{8}{44})$};
\draw (0.128,0.78) node[anchor=north west] {$(\frac{1}{8},\frac{3}{4})$};
\draw (0.123,1.02) node[anchor=north west] {\tiny{$a = 6b$}};
\draw (0.17,1.02) node[anchor=north west] {\tiny{$3a - 10b = 1$}};
\draw (0.25,1.02) node[anchor=north west] {\tiny{$8a - 12b = 5$}};
\draw [->] (0.04942787894536187,0.7948541355859442) -- (0.04942787894536187,0.8291213103085657);
\draw [->] (0.04942787894536187,0.7948541355859442) -- (0.08369505366798262,0.7948541355859442);
\draw (0.03883047923594667,0.85) node[anchor=north west] {$a$};
\draw (0.062275055867725186,0.7949762332375673) node[anchor=north west] {$b$};
\begin{scriptsize}
\draw [fill=black] (0.125,0.75) circle (1.5pt);
\draw [fill=black] (0.1590909090909091,0.8636363636363636) circle (1.5pt);
\draw [fill=black] (0.25,1.0) circle (1.5pt);
\end{scriptsize}
\end{tikzpicture}
\end{center}
        \caption{\label{poly}The top-left part of the polygon of the constraints on $a$ and $b$.}
\end{figure}
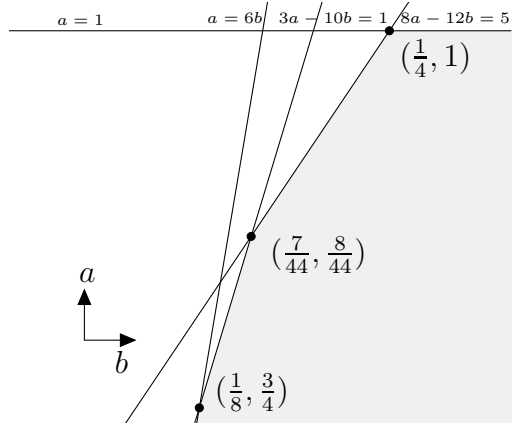

This series of inequalities defines a polygon represented in Figure~\ref{poly}, and for a triangle-free planar graph of given order $n$ and size $m$, the highest lower bound will be given by maximizing $an-bm$ for $a$ and $b$ in this polygon. This maximum will be achieved at a vertex of the polygon. Moreover, by Euler's formula, every triangle-free planar graph of order $n \ge 3$ and size $m$ satisfies $0 \le m \le 2n-4$. Therefore for $n \ge 3$ the maximum will always be achieved at the intersection of either $3a - 10b = 1$ and $8a - 12b = 5$, or $8a - 12b = 5$ and $a = 1$. The corresponding intersections are $(b,a) = (\frac{7}{44},\frac{38}{44})$ and $(b,a) = (\frac{1}{4},1)$, represented in Figure~\ref{poly}.

Let us show that any of the two lower bounds can be higher than the other, for graphs of arbitrarily high order. 

For the disjoint union of $k$ cubes (which is a graph of order $8k$ and size $12k$), the two lower bounds are equal to $5k$.

We consider now a graph composed of $k$ disjoint cubes, where we remove an edge from each cube. This graph has $8k$ vertices and $11k$ edges. In this case we have $n-\frac{m}4 = \frac{21}{4}k > \frac{38n-7m}{44} = \frac{227}{44}k$. More simply, for an independent set, $n-\frac{m}4 = n > \frac{38n-7m}{44} = \frac{38n}{44}$.

We now consider a graph composed of $k$ disjoint cubes, where we add an edge from each cube to the next one and an edge from the last one to the first one. This graph has $8k$ vertices and $13k$ edges. In this case, we have $n-\frac{m}4 = \frac{19}{4}k < \frac{38n-7m}{44} = \frac{213}{44}k$. For a quadrangulation on $n$ vertices and $2n-4$ edges (i.e. a planar graph on $n$ vertices that has only $4$-faces), $n-\frac{m}4 = \frac{n}{2} + 1 < \frac{38n-7m}{44} = \frac{6n + 7}{11}$.
\bigskip

Let us now proceed to the proof of Theorem~\ref{genmain}. For this proof we mainly adapt the methods of Kowalik et al. \cite{Kowalik}.

Let $G = (V,E)$ be a counter-example to Theorem~\ref{genmain} with the minimum order. Let $n = |V|$ and $m = |E|$. We will use the scheme presented in Observation~\ref{abg} for most of our lemmas.

\begin{obs} \label{abg}

Let $\alpha$, $\beta$, $\gamma$ be integers satisfying $\alpha \ge 1$, $\beta \ge 0$, $\gamma \ge 0$ and $a\alpha-b\beta \le \gamma$.

Let $H^* \in {\cal P}_4$ be a graph with $|V(H^*)| = n - \alpha$ and $|E(H^*)| \le m - \beta$.

By minimality of $G$, $H^*$ admits an induced forest of order at least $a(n-\alpha) - b(m - \beta)$.

For all induced forest $F^*$ of $H^*$ of order at least $a(n-\alpha) - b(m - \beta)$, if there is an induced forest $F$ of $G$ of order at least $|V(F^*)| + \gamma$, then we get a contradiction: as $a\alpha - b\beta \le \gamma$, we have $|V(F)| \ge an - bm$.
\end{obs}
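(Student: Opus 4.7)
The plan is to combine the minimality assumption on $G$ with the numerical hypothesis $a\alpha - b\beta \le \gamma$, which is designed precisely so that the arithmetic closes up. Since the statement is really a bookkeeping template, I expect the ``proof'' to be a short two-step argument rather than a substantive combinatorial one.

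First, I would invoke minimality. Because $\alpha \ge 1$, the graph $H^*$ has strictly fewer vertices than $G$, and by assumption $H^* \in {\cal P}_4$. Hence $H^*$ is not a counter-example to Theorem~\ref{genmain}, so there is an induced forest of $H^*$ of order at least $a\,|V(H^*)| - b\,|E(H^*)|$. Using $|V(H^*)| = n - \alpha$, the upper bound $|E(H^*)| \le m - \beta$, and the sign condition $b \ge 0$, this order is at least $a(n-\alpha) - b(m-\beta)$, which supplies the forest $F^*$ promised in the observation.

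Second, suppose one exhibits an induced forest $F$ of $G$ with $|V(F)| \ge |V(F^*)| + \gamma$. A one-line computation gives
\[
|V(F)| \ge a(n-\alpha) - b(m-\beta) + \gamma = an - bm + \bigl(\gamma - a\alpha + b\beta\bigr),
\]
and the parenthesized quantity is nonnegative by the hypothesis $a\alpha - b\beta \le \gamma$. Thus $|V(F)| \ge an - bm$, contradicting the choice of $G$ as a counter-example.

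I do not anticipate any genuine obstacle: the observation is a template that packages the minimality argument once and for all, and its verification reduces to the single arithmetic identity above together with the elementary inequality $-b\,|E(H^*)| \ge -b(m-\beta)$ from $b \ge 0$. The real work in the rest of the paper will consist in choosing, for each structural configuration, an appropriate reduced graph $H^*$ (typically obtained by removing a few vertices or contracting/identifying others) together with an explicit way to extend a forest of $H^*$ back into $G$, producing the gain of $\gamma$ vertices — the present observation simply certifies that such a local argument, once found, is enough to contradict minimality.
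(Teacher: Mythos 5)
Your proposal is correct and matches the paper's reasoning exactly: the paper treats this observation as self-verifying, with the two steps you identify — invoking minimality of $G$ (using $\alpha \ge 1$ so $H^*$ has fewer vertices, and $b \ge 0$ to pass from $|E(H^*)| \le m-\beta$ to the stated lower bound) and then the one-line arithmetic $|V(F)| \ge a(n-\alpha) - b(m-\beta) + \gamma \ge an - bm$ — being precisely what the observation packages once and for all.
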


Table~\ref{abgtab} contains the values of $(\alpha, \beta, \gamma)$ that will be used throughout this section. For each one, the inequality $a\alpha - b\beta \le \gamma$ is a consequence of the constraints (\ref{a})--(\ref{8a12b}).

\begin{table}[h]
\begin{center}
\begin{tabular}{|l|l|l|l|}
        \hline
        $\alpha$ & $\beta$ & $\gamma$ & proof \\
        \hline
        1 & 6 & 0 & $(\ref{a6b})$\\
        2 & 5 & 1 & $((\ref{a}) + (\ref{3a10b}))/2$\\
        3 & 5 & 2 & $(3(\ref{a}) + (\ref{3a10b}))/2$\\
        1 & 1 & 1 & $(\ref{a}) + (\ref{b})$\\
        5 & 9 & 3 & $((\ref{a}) + (\ref{a6b}) + (\ref{8a12b}))/2$\\
        6 & 8 & 4 & $((\ref{a}) + (\ref{8a12b}))*2/3$\\
        4 & 10 & 2 & $(\ref{a}) + (\ref{3a10b})$\\
        7 & 13 & 4 & $((\ref{a}) + 3(\ref{3a10b}) + 4(\ref{8a12b}))/6$\\
        3 & 10 & 1 & $(\ref{3a10b})$\\
        8 & 12 & 5 & $(\ref{8a12b})$\\
        6 & 14 & 3 & $((\ref{a6b}) + (\ref{3a10b}) + (\ref{8a12b}))/2$\\
        8 & 19 & 4 & $((\ref{a}) + (\ref{a6b}) + 2(\ref{3a10b}) + (\ref{8a12b}))/2$\\
        9 & 24 & 4 & $((\ref{a6b}) + 3(\ref{3a10b}) + (\ref{8a12b}))/2$\\
        10 & 23 & 5 & $((\ref{a}) + 9(\ref{3a10b}) + 4(\ref{8a12b}))/6$\\
        9 & 19 & 5 & $(3(\ref{a}) + (\ref{a6b}) + 2(\ref{3a10b}) + (\ref{8a12b}))/2$\\
        \hline
\end{tabular}
   \caption{\label{abgtab} The various triples ($\alpha$,$\beta$,$\gamma$) and the combinations of inequalities which imply $a\alpha - b\beta \le \gamma$.}
\end{center}
\end{table}
\bigskip

We will now prove a series of lemmas on the structure of $G$.

\begin{lemm} \label{2co}
Graph $G$ is 2-edge-connected.
\end{lemm}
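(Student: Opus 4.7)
The plan is to argue by contradiction using the minimality of $G$, splitting the analysis into two cases: either $G$ is disconnected, or $G$ contains a bridge.

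For the first case, I would suppose $G = G_1 \sqcup G_2$ with both parts nonempty (this subsumes the case of an isolated vertex). Each $G_i$ belongs to $\mathcal{P}_4$ and has strictly fewer vertices than $G$, so by the minimality of $G$ each admits an induced forest $F_i$ with $|V(F_i)| \ge a|V(G_i)| - b|E(G_i)|$. Since $V(G_1) \cap V(G_2) = \emptyset$, the union $F_1 \cup F_2$ is an induced forest of $G$ of order at least $an - bm$, contradicting the assumption that $G$ is a counter-example.

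For the second case, I would suppose $e = uv$ is a bridge of $G$ and let $G_1 \ni u$, $G_2 \ni v$ denote the two components of $G - e$. Each $G_i \in \mathcal{P}_4$ has fewer vertices than $G$; minimality then yields induced forests $F_i$ with $|V(F_i)| \ge a|V(G_i)| - b|E(G_i)|$. The crucial observation is that $F_1 \cup F_2$ is still an induced forest when viewed in $G$: even if both $u \in F_1$ and $v \in F_2$, the only extra edge in $G[F_1 \cup F_2]$ compared with $(G-e)[F_1 \cup F_2]$ is $e$ itself, and joining two vertex-disjoint forests by a single edge cannot create a cycle. Hence $G$ admits an induced forest of order at least $a(|V(G_1)| + |V(G_2)|) - b(|E(G_1)| + |E(G_2)|) = an - b(m-1) \ge an - bm$, the last inequality coming from~(\ref{b}). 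This again contradicts the choice of $G$.

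I do not anticipate any real obstacle here: both cases are one-line applications of minimality, and the only subtle point is the elementary fact that adding a single edge between two vertex-disjoint forests preserves forestness, which is precisely what legitimizes the union argument across the bridge. No appeal to Euler's formula or to the specific inequalities from Table~\ref{abgtab} beyond~(\ref{b}) is needed.
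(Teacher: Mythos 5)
Your proof is correct and matches the paper's argument; the paper simply merges your two cases into one by supposing $V(G)$ is partitioned into $V_1$ and $V_2$ with at most one edge between them, whereas you treat the disconnected case and the bridge case separately. The key observation you flag explicitly (that a single edge joining two vertex-disjoint forests cannot create a cycle) is the same fact the paper uses implicitly when asserting that $G[V(F_1)\cup V(F_2)]$ is a forest.
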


\begin{proof}
By contradiction, suppose $V(G)$ is partitioned into two partite sets $V_1$ and $V_2$ such that there is at most one edge between vertices of $V_1$ and $V_2$. Consider graph $G[V_i]$ induced by the vertices of $V_i$ (for $i=1,2$) with $n_i=|V_i|$ vertices and $m_i=|E(G[V_i])|$ edges. By minimality of $G$, $G[V_i]$ admits an induced forest, say $F_i$, with at least $a n_i - b m_i$ vertices. Now the union of $F_1$ and $F_2$ (more formally, $G[V(F_1)\cup V(F_2)]$) is an induced forest of $G$ having at least $a n_1-b m_1 + a n_2-b m_2 = a(n_1+n_2)-b(m_1+m_2) \ge a n - b m$ vertices as $m\ge m_1+m_2$. A contradiction.
\end{proof}

In particular, Lemma~\ref{2co} implies that there is no $1^-$-vertex in $G$.

\begin{lemm} \label{degle5}
Every vertex in $G$ has degree at most $5$.
\end{lemm}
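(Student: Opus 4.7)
The plan is to apply Observation~\ref{abg} with the triple $(\alpha,\beta,\gamma) = (1,6,0)$ from Table~\ref{abgtab}. Suppose for contradiction that some vertex $v \in V(G)$ has degree $d \ge 6$. The idea is simply to delete $v$, apply minimality to the smaller graph, and observe that the resulting forest already lies inside $G$ without needing $v$ back.

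More precisely, consider $H^* = G - v$. Since removing a vertex preserves both planarity and triangle-freeness, we still have $H^* \in {\cal P}_4$. Moreover $|V(H^*)| = n-1$ and $|E(H^*)| = m - d \le m - 6$. By the minimality of $G$ as a counter-example, $H^*$ admits an induced forest $F^*$ with
\[
|V(F^*)| \;\ge\; a(n-1) - b(m-6).
\]

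Now $V(F^*) \subseteq V(G) \setminus \{v\}$, so the subgraph of $G$ induced by $V(F^*)$ coincides with $F^*$; in particular $F^*$ is an induced forest of $G$ itself. Hence in the notation of Observation~\ref{abg} we may take $F = F^*$, which gives $|V(F)| \ge |V(F^*)| + \gamma$ with $\gamma = 0$. Combining with constraint (\ref{a6b}), namely $a - 6b \le 0$, yields
\[
|V(F)| \;\ge\; a(n-1) - b(m-6) \;=\; an - bm + (6b - a) \;\ge\; an - bm,
\]
contradicting the assumption that $G$ is a counter-example to Theorem~\ref{genmain}.

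There is no real obstacle here: the argument is a one-shot application of the deletion scheme, and all the work has been packaged into inequality (\ref{a6b}) and the corresponding row of Table~\ref{abgtab}. The only things to verify are that $H^* \in {\cal P}_4$ (immediate, since vertex deletion preserves planarity and girth conditions) and that the edge count decreases by at least $6$ (immediate from $d \ge 6$). No case analysis on the neighborhood of $v$ is needed because $\gamma = 0$: we never attempt to put $v$ back into the forest, so we do not have to worry about cycles created through the neighbors of $v$.
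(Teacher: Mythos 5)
Your proof is correct and follows exactly the paper's argument: delete the $6^+$-vertex $v$, note that $H^* = G - v \in {\cal P}_4$ has $n-1$ vertices and at most $m-6$ edges, and apply Observation~\ref{abg} with $(\alpha,\beta,\gamma) = (1,6,0)$ (justified by constraint~(\ref{a6b})) taking $F = F^*$. You have merely unpacked the shorthand of the paper's one-line proof.
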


\begin{proof}
By contradiction, suppose $v \in V(G)$ is a $6^+$-vertex. Observation~\ref{abg} applied to $H^* = G - v$ with $(\alpha,\beta,\gamma) = (1,6,0)$ and $F = F^*$ completes the proof. 
\end{proof}

\begin{lemm} \label{deg3deg4}
If $v$ is a $3$-vertex adjacent to a $4^+$-vertex $w$ in $G$, then the two other neighbors of $v$ have a common neighbor different from $v$.
\end{lemm}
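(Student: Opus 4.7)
The plan is a proof by contradiction exploiting Observation~\ref{abg} with the triple $(\alpha,\beta,\gamma)=(2,5,1)$ from Table~\ref{abgtab}. I would let $v$ be a $3$-vertex with neighbors $w,x,y$, where $\deg(w)\ge 4$, and assume toward contradiction that $x$ and $y$ share no common neighbor other than $v$. Note first that $x$ and $y$ cannot be adjacent, since otherwise $\{v,x,y\}$ would form a triangle, contradicting $G \in {\cal P}_4$.

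I would then form the auxiliary graph $H^* := (G-\{v,w\}) + xy$ and check that $H^* \in {\cal P}_4$. Planarity is preserved because the new edge $xy$ can be rerouted through the face of the embedding previously occupied by $v$. Triangle-freeness holds because any triangle in $H^*$ must use the new edge $xy$, and such a triangle would exhibit a common neighbor of $x$ and $y$ in $G-\{v,w\} \subseteq G-v$, contradicting our standing assumption. The edges incident to $v$ or $w$ number $\deg(v)+\deg(w)-1 = \deg(w)+2 \ge 6$, so after adding back $xy$ one has $|E(H^*)|\le m-5$, while $|V(H^*)|=n-2$; the hypothesis $\deg(w)\ge 4$ is essential here to meet the requirements of the $(2,5,1)$ row.

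By minimality of $G$, $H^*$ admits an induced forest $F^*$ with $|V(F^*)|\ge a(n-2)-b(m-5)$. The concluding step is to show that $F := G[V(F^*)\cup\{v\}]$ is an induced forest of $G$ with exactly one more vertex than $F^*$; together with the $(2,5,1)$ line of Table~\ref{abgtab}, this contradicts the status of $G$ as a counter-example. Since $w\notin V(H^*)$, the neighbors of $v$ in $V(F^*)$ lie among $\{x,y\}$. If at most one of $x,y$ belongs to $V(F^*)$, then $v$ contributes at most one new edge to $F$ and $F$ is automatically a forest. If both $x,y\in V(F^*)$, then because $F^*$ is induced in $H^*$ the edge $xy$ belongs to $F^*$; removing $xy$ (which is \emph{not} an edge of $G$) splits the corresponding tree of $F^*$ into two subtrees, one containing $x$ and the other containing $y$, and attaching $v$ to both merges these two components into a single tree without creating a cycle.

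The main point requiring care is this extension argument, specifically the observations that $G[V(F^*)]$ coincides with $F^*$ except for the possible absence of the single edge $xy$, and that the two endpoints of a tree edge end up in distinct components once that edge is deleted. Everything else reduces to a straightforward verification of the hypotheses of Observation~\ref{abg}.
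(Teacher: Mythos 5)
Your proposal is correct and follows essentially the same route as the paper: form $H^* = G + xy - \{v,w\}$, check it stays in $\mathcal{P}_4$ because $x,y$ share no common neighbor besides $v$, verify the vertex/edge counts, and add $v$ back to an induced forest of $H^*$, invoking Observation~\ref{abg} with $(\alpha,\beta,\gamma)=(2,5,1)$. The only difference is that you spell out in more detail why $G[V(F^*)\cup\{v\}]$ is acyclic when both $x,y\in V(F^*)$ (the paper leaves the ``subdivision'' step implicit).
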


\begin{proof}
Let $x$ and $y$ be the two neighbors of $v$ different from $w$. Suppose that they do not have a common neighbor different from $v$. Let $H^* = G + xy - \{w,v\}$. Graph $H^*$ has $n-2$ vertices and $m' \le m-5$ edges. As $x$ and $y$ do not have a common neighbor in $G$ other than $v$, the addition of the edge $xy$ does not create any triangle in $H^*$, thus $H^* \in {\cal P}_4$. Let $F'$ be any induced forest of $H^*$. Adding $v$ to $F'$ (more formally, consider $G[V(F')\cup \{v\}]$) leads to an induced forest of $G$. Observation~\ref{abg} applied to $(\alpha,\beta,\gamma) = (2,5,1)$ completes the proof.
\end{proof}

\begin{lemm} \label{deg2deg4}
There is no 2-vertex adjacent to a $4^+$-vertex in $G$.
\end{lemm}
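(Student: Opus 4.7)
The plan is to apply Observation~\ref{abg} with the triple $(\alpha,\beta,\gamma)=(2,5,1)$. Suppose, for contradiction, that $v$ is a $2$-vertex adjacent to a $4^+$-vertex $w$, and let $x$ be the other neighbor of $v$. I would take $H^* = G - \{v,w\}$. Since $H^*$ is an induced subgraph of a triangle-free planar graph, $H^* \in \mathcal{P}_4$, so minimality of $G$ furnishes an induced forest $F^*$ of $H^*$ of order at least $a(n-2)-b(m-5)$, provided the vertex and edge counts match.

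The vertex count is immediate: $|V(H^*)|=n-2$. For the edge count, removing $v$ deletes its two incident edges $vw$ and $vx$, and removing $w$ from $G-v$ deletes its remaining $\deg_G(w)-1 \ge 3$ incident edges. Hence the total number of edges deleted is $\deg_G(w)+1 \ge 5$, so $|E(H^*)| \le m-5$, as needed.

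The key, and essentially only, observation is that $v$ can be restored: in $G$, the only neighbor of $v$ that lies in $V(H^*)$ is $x$, so $v$ has at most one neighbor in $V(F^*)$. Therefore $G[V(F^*) \cup \{v\}]$ is $F^*$ with $v$ appended as a leaf or isolated vertex, and is still a forest. This gives an induced forest of $G$ of order at least $|V(F^*)|+1$, so by Observation~\ref{abg} with $(\alpha,\beta,\gamma)=(2,5,1)$ (whose validity under (\ref{a})--(\ref{8a12b}) is recorded in Table~\ref{abgtab}), we obtain an induced forest of $G$ of order at least $an-bm$, contradicting the choice of $G$.

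No step here is delicate: the argument is shorter than that of Lemma~\ref{deg3deg4} because one does not even need to add a compensating edge when deleting $w$ (the goal is only to salvage $v$, whose single remaining neighbor guarantees it is freely addable to any induced forest of $H^*$). The only thing to double-check is the arithmetic of the edge deletion, which rests solely on $\deg_G(w)\ge 4$ and the fact that $vw$ is counted once.
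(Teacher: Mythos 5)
Your proof is correct and follows exactly the paper's argument: delete $H^*=G-\{v,w\}$, note that $\deg_G(w)+1\ge 5$ edges are removed, re-attach $v$ (which has its single remaining neighbor $x$) to any induced forest of $H^*$, and invoke Observation~\ref{abg} with $(\alpha,\beta,\gamma)=(2,5,1)$. The additional detail you supply on the edge count is implicit in the paper but entirely sound.
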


\begin{proof}
Let $v$ be a 2-vertex adjacent to a $4^+$-vertex $w$ and $H^*=G - \{v,w\}$. Graph $H^*$ has $n-2$ vertices and $m'\le m-5$ edges. Let $F'$ be any induced forest of $H^*$. Adding $v$ to $F'$ leads to an induced forest of $G$. Observation~\ref{abg} applied to $(\alpha,\beta,\gamma)=(2,5,1)$ completes the proof.
\end{proof}

\begin{lemm} \label{deg2deg3deg2}
There is no 3-vertex adjacent to two 2-vertices in $G$.
\end{lemm}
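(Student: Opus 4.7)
The plan is to mimic the scheme of Lemmas~\ref{deg3deg4} and~\ref{deg2deg4} and argue by contradiction via Observation~\ref{abg}. Suppose $v$ is a $3$-vertex in $G$ adjacent to two $2$-vertices $u_1$ and $u_2$; let $w$ be the third neighbor of $v$, and let $u_1'$, $u_2'$ denote the neighbors of $u_1, u_2$ other than $v$. A preliminary (essentially free) observation is that $u_1 u_2 \notin E(G)$: otherwise $v u_1 u_2 v$ would be a triangle, contradicting $G \in \mathcal{P}_4$.

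The natural reduction is to set $H^* = G - \{v, u_1, u_2\}$. Then $H^* \in \mathcal{P}_4$ since it is a subgraph of $G$, and $|V(H^*)| = n-3$. To count edges, note that only $vu_1$ and $vu_2$ have both endpoints inside $\{v,u_1,u_2\}$ (we just ruled out $u_1u_2$), so the number of edges incident to this set is $(3+2+2-2\cdot 2)+2 = 5$. Hence $|E(H^*)| = m-5$. By the minimality of $G$, $H^*$ admits an induced forest $F^*$ of order at least $a(n-3) - b(m-5)$.

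To extend $F^*$ to an induced forest of $G$ of order $|V(F^*)|+2$, the plan is simply to add both $u_1$ and $u_2$ back. In $G[V(F^*) \cup \{u_1,u_2\}]$, vertex $u_i$ has at most one neighbor (namely $u_i'$, if it happens to lie in $F^*$), and $u_1, u_2$ are non-adjacent by the observation above; adding two vertices of degree at most one to a forest cannot create a cycle, so the resulting graph is still a forest. Applying Observation~\ref{abg} with $(\alpha,\beta,\gamma) = (3,5,2)$ — a triple already listed in Table~\ref{abgtab} — yields the required contradiction. No real obstacle is anticipated; the only point requiring attention is the triangle-freeness check that prevents $u_1u_2$ from being an edge, which is needed both to keep the edge count honest and to guarantee that $u_1$ and $u_2$ can be re-added independently.
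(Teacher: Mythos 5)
Your proof is correct and follows essentially the same route as the paper: delete $\{v,u_1,u_2\}$, note this removes exactly five edges, re-insert the two $2$-vertices into any large induced forest of the remainder, and apply Observation~\ref{abg} with $(\alpha,\beta,\gamma)=(3,5,2)$. The only addition is that you make explicit the (correct and necessary) observation that $u_1u_2\notin E$ by triangle-freeness, which the paper leaves implicit.
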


\begin{proof}
Let $v$ be a $3$-vertex adjacent to two 2-vertices $u$ and $w$ and $H^* = G - \{u,v,w\}$.  Graph $H^*$ has $n-3$ vertices and $m' = m-5$ edges. Let $F'$ be any induced forest of $H^*$. Adding $u$ and $w$ to $F'$ leads to an induced forest of $G$. Observation~\ref{abg} applied to $(\alpha,\beta,\gamma)=(3,5,2)$ completes the proof.
\end{proof}

\begin{lemm} \label{degge3}
Every vertex in $G$ has degree at least $3$.
\end{lemm}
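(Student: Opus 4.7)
Suppose for contradiction that there is a vertex $v$ of degree at most $2$ in $G$. By Lemma~\ref{2co}, $\deg(v) = 2$; denote its two neighbors by $u$ and $w$. Combining Lemmas~\ref{2co} and~\ref{deg2deg4}, both $u$ and $w$ have degree in $\{2, 3\}$. The plan is to apply Observation~\ref{abg} to an auxiliary graph $H^*$ obtained from $G$ by removing a small set around $v$ and, in one sub-case, inserting a bypass edge.

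I first treat the case where both $u$ and $w$ are $3$-vertices. If $u$ and $w$ share no common neighbor other than $v$, take $H^* = G - v + uw$: the new edge creates no triangle, so $H^* \in {\cal P}_4$ with $n-1$ vertices and $m-1$ edges; every induced forest $F'$ of $H^*$ extends to $G$ by inserting $v$, either as a leaf/isolated vertex or by splitting the tree edge $uw$ into the path $u - v - w$, finishing this sub-case via the $(1, 1, 1)$ row of Table~\ref{abgtab}. Otherwise let $z \ne v$ be a common neighbor of $u$ and $w$, and let $x, y$ denote the third neighbors of $u$ and $w$ respectively; by Lemma~\ref{deg2deg3deg2}, all of $z, x, y$ are $3^+$-vertices. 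Take $H^* = G - \{v, u, w, z, x\}$: the triangle-freeness of $G$ precludes the edges $uw$ (else the triangle $uvw$) and $zx$ (else the triangle $uzx$), so the internal edges of the removed set are exactly $\{vu, vw, uz, wz, ux\}$, and the set is incident to at least $9$ edges. Every induced forest $F'$ of $H^*$ can then be extended by $\{v, u, w\}$: $u$'s only non-removed neighbor in $F' \cup \{v, u, w\}$ is $v$, and $w$'s only possible non-removed neighbor outside $\{u, v\}$ is $y$, so the extension produces a pendant path $u - v - w$ attached to $F'$ by at most one edge $wy$ --- always safe. This sub-case is concluded via the $(5, 9, 3)$ row.

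In the remaining case, at least one of $u, w$ is a $2$-vertex. Since a cycle of length at least $4$ satisfies the desired bound $a(G) \ge a n - b m$ (because $a - b < 1$ throughout our polygon), $G$ is not a cycle, and the maximal path of consecutive $2$-vertices through $v$ ends at two $3$-vertex endpoints. Pick two consecutive $2$-vertices $v_1, v_2$ at one end of this path, with $v_1$'s non-chain neighbor $x$ being the $3$-vertex endpoint (which is a $3$-vertex by Lemma~\ref{deg2deg4}). Take $H^* = G - \{v_1, v_2, x\}$: the set is incident to exactly $5$ edges (the two internal edges $xv_1, v_1v_2$, together with the three external edges, namely $v_2$'s edge to its non-chain neighbor and $x$'s two non-chain-neighbor edges, all distinct by the triangle-free condition). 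Extending $F'$ by $\{v_1, v_2\}$ adds a pendant path attached to $F'$ through at most one edge, yielding the contradiction via the $(3, 5, 2)$ row.

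The main technical checks are the edge count in the common-neighbor sub-case above (where the triangle-freeness must be used to preclude $zx$ as an edge so as to identify the $5$ internal edges exactly) and the verification that the extension of $F'$ by the added pendant structures can never close a cycle through the remaining neighbor $y$ (or $v_2$'s non-chain neighbor in the last case), both of which follow because each added vertex attaches to $F'$ through at most one edge.
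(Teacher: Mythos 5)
Your argument has a genuine gap in the sub-case where $v$ has two $3$-neighbors $u, w$ that share a common neighbor $z \neq v$. You set $H^* = G - \{v, u, w, z, x\}$ with $x$ the third neighbor of $u$, and you claim the internal edges of the removed set are exactly $\{vu, vw, uz, wz, ux\}$, giving at least $9$ incident edges. You rule out $uw$ and $zx$ by triangle-freeness, but you do not rule out the edge $wx$. Since $w$ has degree $3$ with neighbors $v, z, y$, we have $wx \in E$ exactly when $x = y$, i.e. exactly when $u$ and $w$ have \emph{three} common neighbors $v, z, x$. In that situation there are $6$ internal edges, and if both $z$ and $x$ have degree exactly $3$ the degree sum over the removed set is only $2+3+3+3+3 = 14$, so the number of edges incident to the set is only $14 - 6 = 8$, not $9$. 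The pair $(\alpha,\beta,\gamma) = (5,8,3)$ is not valid here: at the polygon vertex $(b,a) = (\tfrac{7}{44}, \tfrac{38}{44})$ we have $5a - 8b = \tfrac{134}{44} > 3$, so Observation~\ref{abg} cannot be applied. This is precisely the configuration the paper treats as a separate sub-case, using $(5,9,3)$ when one of the two non-$v$ common neighbors is a $4^+$-vertex (which recovers the ninth edge), and otherwise removing an additional vertex (the third neighbor of one of the common neighbors) and invoking the $(6,8,4)$ row of Table~\ref{abgtab} while putting four vertices back into the forest.

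The rest of your proof is sound and aligns with the paper's structure. The ``no common neighbor'' sub-case via $G - v + uw$ and $(1,1,1)$ is identical to the paper's. Your handling of the case where some neighbor of $v$ is a $2$-vertex, via the maximal chain of $2$-vertices and the $(3,5,2)$ row, is a mild unification of the paper's two remaining cases (neighbor of degree $2$ and degree $3$, versus $G$ being $2$-regular): you correctly invoke Lemma~\ref{deg2deg4} to ensure the chain endpoint is a $3$-vertex, your edge count of $5$ holds because the three external edges of $\{v_1,v_2,x\}$ are pairwise distinct by triangle-freeness, and your cycle-case check using $4(a-b) \le 3$ matches the paper's. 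To repair the proof, you would need to split the common-neighbor sub-case according to whether $u$ and $w$ have two or three common neighbors, treating the latter as the paper does.
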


\begin{proof}
Let $v$ be a $2$-vertex. 

Suppose that $v$ has a neighbor $u$ of degree $2$ and a neighbor $w$ of degree $3$. Let $H^* = G - \{u,v,w\}$.  Graph $H^*$ has $n-3$ vertices and $m' = m-5$ edges. Let $F'$ be any induced forest of $H^*$. Adding $u$ and $v$ to $F'$ leads to an induced forest of $G$. Observation~\ref{abg} applied to $(\alpha,\beta,\gamma)=(3,5,2)$ leads to a contradiction.

Suppose that $v$ has two neighbors of degree $3$, say $u$ and $w$. Consider three cases according to the number of neighbors $u$ and $w$ have in common.

\begin{itemize}
        \item Suppose $u$ and $w$ have only $v$ in common. Let $H^* = G + uw - v$. Graph $H^*$ has $n-1$ vertices and $m' = m-1$ edges. Observe that $H^* \in {\cal P}_4$. Let $F'$ be any induced forest of $H^*$. Adding $v$ to $F'$ (more formally, consider $G[V(F') \cup \{v\}]$) does not create any cycle (the edge $uw$ is just subdivided in $uv$, $vw$). Observation~\ref{abg} applied to $(\alpha,\beta,\gamma) = (1,1,1)$ leads to a contradiction.

        \item Suppose $u$ and $w$ have two neighbors in common, say $v$ and $x$. Let $y$ be the last neighbor of $u$. By Lemma~\ref{deg2deg3deg2}, both $x$ and $y$ have degree at least $3$.  Note that $x$ and $y$ are not adjacent because $G$ has girth at least $4$. Let $H^* = G - \{u,v,w,x,y\}$. Graph $H^*$ has $n-5$ vertices and, since $y$ and $w$ are not adjacent (otherwise $u$ and $w$ have three common neighbors), $m' \le m - 9$ edges. Let $F'$ be any induced forest of $H^*$. Adding $u$, $v$ and $w$ to $F'$ leads to an induced forest of $G$. Observation~\ref{abg} applied to $(\alpha,\beta,\gamma)=(5,9,3)$ leads to a contradiction.

        \item Suppose $u$ and $w$ have three neighbors in common. Let $x$ and $y$ be the ones that are not $v$. Suppose $x$ is a $4^+$-vertex and let $H^* = G - \{u,v,w,x,y\}$. Graph $H^*$ has $n-5$ vertices and $m' \le m - 9$ edges (recall that $y$ is a $3^+$-vertex by Lemma~\ref{deg2deg3deg2}). Let $F'$ be any induced forest of $H^*$. Adding $u$, $v$ and $w$ to $F'$ leads to an induced forest of $G$. Observation~\ref{abg} applied to $(\alpha,\beta,\gamma)=(5,9,3)$ leads to a contradiction.

        W.l.o.g. we assume that $x$ and $y$ are $3$-vertices. Let $z$ be the third neighbor of $x$. Let $H^* = G - \{u,v,w,x,y,z\}$. Graph $H^*$ has $n-6$ vertices and $m' \le m - 8$ edges. Let $F'$ be any induced forest of $H^*$. Adding $u$, $v$, $x$ and $y$ to $F'$ leads to an induced forest of $G$. Observation~\ref{abg} applied to $(\alpha,\beta,\gamma)=(6,8,4)$ leads to a contradiction. 
\end{itemize}

Therefore, by Lemmas~\ref{2co} and~\ref{deg2deg4}, every $2$-vertex has only neighbors of degree $2$. As $G$ is connected (Lemma~\ref{2co}), either $G$ does not have any $2$-vertex or it is 2-regular. If $G$ is $2$-regular, then $G$ is a $n$-cycle and thus $m = n$. Since $G \in {\cal P}_4$, we have $n \ge 4$. It is clear that $G$ has an induced forest of size $n-1$. Recall that $8a - 12b \le 5$ and $a \le 1$; this gives that $4(a-b) \le 3$. Since $n \ge 4$, we can deduce that $an-bm = (a-b)n \le n-1$. This contradicts the fact that $G$ is a counter-example. Therefore, $G$ has minimum degree at least $3$. This completes the proof.
\end{proof}

\begin{lemm} \label{4cycle}
There is no $4$-cycle in $G$ with
\begin{itemize}
\item at least one $4^+$-vertex and two opposite $3$-vertices
\item or one $3$-vertex opposite to a $4$-vertex that has an edge going to the interior of the cycle and one going to the exterior of it.
\end{itemize}

In particular there is no $4$-cycle with exactly three $3$-vertices in $G$. 
\end{lemm}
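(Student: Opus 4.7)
The plan is to argue by contradiction via the scheme of Observation~\ref{abg}, treating the two bullet points in turn; the ``in particular'' clause will then follow from the first one.

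For the first item, I would assume $G$ contains a 4-cycle $C = v_1 v_2 v_3 v_4$ with $v_1, v_3$ being 3-vertices and, without loss of generality, $v_2$ a $4^+$-vertex (triangle-freeness forbids any chord on $C$). The first step is to consider $H^{*} = G - V(C)$: it has $n - 4$ vertices and at most $m - 9$ edges (the 4 cycle edges, plus at least 5 edges leaving $C$ since the degree sum along $C$ is at least $13$), with an additional edge deleted as soon as $d(v_4) \ge 4$. Any induced forest $F'$ of $H^{*}$ extends to $F' \cup \{v_1, v_3\}$: indeed, $v_1$ and $v_3$ are non-adjacent and each has at most one neighbor (the outside one) remaining in $H^{*}$, so they attach as leaves without creating a cycle. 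When $d(v_4) \ge 4$, Observation~\ref{abg} with the row $(\alpha,\beta,\gamma) = (4,10,2)$ of Table~\ref{abgtab} closes the case. When $d(v_4) = 3$ I would refine the analysis using Lemma~\ref{deg3deg4} on $v_1$: its non-$v_2$ neighbors $v_4$ and $x_1$ (the outside neighbor of $v_1$) share a common neighbor $z \ne v_1$, and since $v_4$'s neighborhood is exactly $\{v_1, v_3, x_4\}$, either $z = v_3$ (forcing $x_1 = x_3$, so that $v_1$ and $v_3$ become twins with common neighborhood $\{v_2, v_4, x_1\}$, producing a $K_{2,3}$-subgraph) or $z = x_4$ (producing a second 4-cycle $v_1 x_1 x_4 v_4$). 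In each case a slightly larger removal set, e.g.\ the four cycle-vertices together with $x_4$ and/or $x_1$, and extension by $v_1, v_3$ and possibly $v_4$ or $x_1$, matches one of the remaining rows $(5,9,3)$ or $(6,14,3)$ of Table~\ref{abgtab}.

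For the second item, consider a 4-cycle $C = v_1 v_2 v_3 v_4$ with $v_1$ a 3-vertex and $v_3$ a 4-vertex whose two non-cycle neighbors $y_{\mathrm{in}}$ and $y_{\mathrm{out}}$ lie respectively in the bounded and unbounded face of $C$. Planarity then forbids $y_{\mathrm{in}}$ and $y_{\mathrm{out}}$ from having a common neighbor outside $V(C)$ (any such two-edge detour would have to cross $C$) and restricts the cycle vertices that can serve as common neighbors. In particular, the edge $y_{\mathrm{in}} y_{\mathrm{out}}$ is absent from $G$ and its addition to $G - V(C)$ creates no triangle, so the resulting graph lies in ${\cal P}_4$. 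Combining this extra edge with the removal of $V(C)$ and adding $v_1$ back to the induced forest gives the edge budget needed to contradict minimality via the appropriate row of Table~\ref{abgtab}. Finally, the ``in particular'' clause is immediate from the first item: a 4-cycle with exactly three 3-vertices contains two opposite 3-vertices (any three corners of a 4-cycle include an opposite pair) while its fourth vertex is, by Lemmas~\ref{degge3} and~\ref{degle5}, a $4^+$-vertex.

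The main obstacle will be the case analysis in the $d(v_4) = 3$ sub-case of the first item: the extra structure forced by Lemma~\ref{deg3deg4} branches into several geometrically distinct configurations (twins, secondary 4-cycles, shared outside neighbors), each requiring a tailored choice of vertex-removal and a tailored row of Table~\ref{abgtab} to make the arithmetic close.
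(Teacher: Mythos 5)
Your overall strategy (case analysis closed by Observation~\ref{abg} and rows of Table~\ref{abgtab}) is the paper's, and your treatment of the case $d(v_4)\ge 4$ in the first bullet and of the ``in particular'' clause is correct. However, there are two genuine gaps.

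First, the sub-case $d(v_4)=3$ is far from complete. You correctly observe that Lemma~\ref{deg3deg4} applied to $v_1$ forces either $x_1=x_3$ or $x_1x_4\in E$ (the paper's ``$u_0=u_2$'' vs.\ ``$u_0u_1\in E$'' dichotomy, with $u_1u_2\in E$ coming from applying Lemma~\ref{deg3deg4} to the other $3$-vertex as well). But the second branch needs a further split, and the rows you name do not suffice. If both ``outer'' vertices $u_0,u_2$ (your $x_1,x_3$) have degree $3$, removing $\{v_1,v_2,v_3,v_4,x_1,x_3\}$ kills only $13$ edges (count: $4$ cycle edges, $3$ pendant edges to $x_1,x_3,x_4$, $2$ from the $4^+$-vertex $v_2$, the two edges $x_1x_4$, $x_3x_4$, and one extra edge at each of $x_1,x_3$), so $(6,14,3)$ fails. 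The paper handles this with the row $(7,13,4)$, removing also $x_4$ and adding back four vertices $\{v_1,v_3,v_4,x_1\}$; only when both $x_1,x_3$ have degree $\ge 4$ does $(6,14,3)$ apply. You do not mention $(7,13,4)$, so your sketch does not close.

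Second, the edge-addition approach for the second bullet does not meet the constraint polytope. Taking $H^*=(G-V(C))+y_{\mathrm{in}}y_{\mathrm{out}}$ deletes at most $9$ edges incident to $C$ (degree sum $\ge 3+3+4+3$ minus $8$) and puts one back, so $|E(H^*)|\le m-8$. With $\gamma=1$ (adding only $v_1$) you would need $4a-8b\le 1$, but at the polygon vertex $(b,a)=(\frac{7}{44},\frac{38}{44})$ one has $4a-8b=\frac{24}{11}>2$, so even $(4,8,2)$ is outside the feasible region; $(4,8,1)$ is hopeless. The paper avoids the edge addition entirely: it first reduces to the situation where some other cycle vertex is a $4^+$-vertex (when all the others have degree $3$ the cycle has two opposite $3$-vertices facing a $4^+$-vertex, i.e.\ the first bullet), so $G-V(C)$ loses $\ge 10$ edges, and then adds back \emph{both} the $3$-vertex and the in/out $4$-vertex. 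The $4$-vertex can be re-added because its two remaining neighbours lie in distinct components of $G-V(C)$ (planar separation by $C$), so no cycle is created; this gives the valid row $(4,10,2)$. Your observation that $y_{\mathrm{in}}y_{\mathrm{out}}\notin E$ and that no common neighbour exists outside $C$ is true but not the lever the paper pulls; the key is that separation lets you re-insert the $4$-vertex for free, buying the extra $\gamma$ without spending any $\beta$ on an added edge.
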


\begin{proof}
\begin{itemize}
        \item Let $C = v_0v_1v_2v_3$ be a cycle such that $v_0$ and $v_2$ have degree $3$ and $v_3$ is a $4^+$-vertex. Suppose $v_1$ is a $4^+$-vertex. Let $H^* = G - C$. Graph $H^*$ has $n-4$ vertices and $m' \le m - 10$ edges. Let $F'$ be any induced forest of $H^*$. Adding $v_0$ and $v_2$ to $F'$ leads to an induced forest of $G$. Observation~\ref{abg} applied to $(\alpha,\beta,\gamma)=(4,10,2)$ leads to a contradiction. Therefore $v_1$ has degree $3$.

 Let $u_0$, $u_1$ and $u_2$ be the third neighbors of $v_0$, $v_1$, and $v_2$, respectively. Suppose $u_0 = u_2$. Let $H^* = G - \{v_0,v_1,v_2,v_3,u_0\}$. Graph $H^*$ has $n-5$ vertices and $m' \le m - 9$ edges. Let $F'$ be any induced forest of $H^*$. Adding $v_0$, $v_1$ and $v_2$ to $F'$ leads to an induced forest of $G$. Observation~\ref{abg} applied to $(\alpha,\beta,\gamma)=(5,9,3)$ leads to a contradiction. So $u_0$ and $u_2$ are distinct.

        By Lemma~\ref{deg3deg4}, $u_0u_1 \in E$ and $u_1u_2 \in E$. Assume $u_0$ (or $u_2$) has at most one neighbor $w \notin \{v_0,v_1,v_2,v_3,u_0,u_1,u_2\}$. Let $H^* = G - \{v_0,v_1,v_2,v_3,u_0,u_1,u_2\}$. Graph $H^*$ has $n-7$ vertices and $m' \le m - 13$ edges. Let $F'$ be any induced forest of $H^*$. Adding $v_0$, $v_1$, $v_2$ and $u_0$ to $H^*$ leads to an induced forest of $G$. Observation~\ref{abg} applied to $(\alpha,\beta,\gamma)=(7,13,4)$ leads to a contradiction. Thus both of the vertices $u_0$ and $u_2$ have at least two neighbors that are not in $\{v_0,v_1,v_2,v_3,u_0,u_1,u_2\}$. Let $H^* = G - \{v_0,v_1,v_2,v_3,u_0,u_2\}$. Graph $H^*$ has $n-6$ vertices and $m' \le m-14$ edges. Let $F'$ be any induced forest of $H^*$. Adding the vertices $v_0$, $v_1$ and $v_2$ to $F'$ leads to an induced forest of $G$. Observation~\ref{abg} applied to $(\alpha,\beta,\gamma)=(6,14,3)$ leads to a contradiction.
        
        \item Let $C = v_0v_1v_2v_3$ be a cycle such that $v_0$ is a 3-vertex and $v_2$ is a 4-vertex with an edge going to the interior of the cycle and one going to the exterior of it. If $v_1$ and $v_3$ have degree $3$, then we fall into the previous case. Therefore w.l.o.g. $v_1$ is a $4^+$-vertex. Let $H^* = G - C$. Graph $H^*$ has $n - 4$ vertices and $m' \le m - 10$ edges. Let $F'$ be any induced forest of $H^*$. Adding $v_0$ and $v_2$ to $F'$ leads to an induced forest of $G$. Indeed, if adding $v_2$ creates a cycle, then there is a path from the interior to the exterior of $C$ in $H^*$, which is impossible. Observation~\ref{abg} applied to $(\alpha,\beta,\gamma)=(4,10,2)$ completes the proof.
\end{itemize}
\end{proof}

\begin{lemm} \label{3333}
There is no $4$-face with four $3$-vertices in $G$.
\end{lemm}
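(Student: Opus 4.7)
Suppose for contradiction that $G$ contains a $4$-face $C=v_0v_1v_2v_3$ with every $v_i$ a $3$-vertex, and let $u_i$ be the neighbor of $v_i$ outside $C$. I begin by ruling out coincidences among the $u_i$: triangle-freeness forces $u_i \ne u_{i+1}$, while an ``opposite'' equality such as $u_0=u_2$ produces a $4$-cycle $v_0v_1v_2u_0$ in which $v_0$ and $v_2$ are opposite $3$-vertices, so by Lemma~\ref{4cycle} neither $v_1$ nor $u_0$ can be a $4^+$-vertex; this reduces to another all-$3$-vertex $4$-cycle that I will either eliminate by a small side argument (applying Lemma~\ref{4cycle}'s second bullet once orientation information is extracted from the planar embedding) or fold into the main case by the same reduction as below.

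The main step is to set $H^* = G - V(C)$, which lies in $\mathcal{P}_4$ and satisfies $|V(H^*)| = n-4$, $|E(H^*)| = m-8$ (the four edges of $C$ together with the four pendant edges $v_iu_i$). I will invoke Observation~\ref{abg} with $(\alpha,\beta,\gamma)=(4,8,3)$; the required inequality $4a-8b\le 3$ follows from $(\ref{8a12b})$ via $4a-8b = \tfrac12(8a-12b) - 2b \le \tfrac52$. It therefore suffices to produce an induced forest of $G$ of size $|V(F^*)|+3$ out of an induced forest $F^*$ of $H^*$.

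Any three of $v_0,v_1,v_2,v_3$ form a $P_3$, and adding such a triple to $F^*$ closes a cycle only when two of the associated external neighbors $u_i$ lie in the same tree of $F^*$. The key planarity input is a Jordan-curve argument: because $u_0,u_1,u_2,u_3$ appear in this cyclic order around the face $C$, no two vertex-disjoint paths in $G-V(C)$ can connect $u_0$ to $u_2$ and $u_1$ to $u_3$, so $F^*$ cannot have $\{u_0,u_2\}$ in one tree and $\{u_1,u_3\}$ in a different tree. In the ``generic'' configurations of the $u_i$'s across components of $F^*$ (all four in distinct trees, or any split other than two adjacent pairs / three or four in a common tree), I can locate three consecutive face vertices whose external neighbors are pairwise in different trees and add them to $F^*$.

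The main obstacle is the family of configurations that planarity does \emph{not} exclude, namely when three or four $u_i$'s lie in a single tree of $F^*$, or when two \emph{adjacent} pairs $\{u_0,u_1\}$ and $\{u_2,u_3\}$ sit in two different trees. I plan to handle these by switching to the larger reduction $H^* = G - V(C) - \{u_i\}$ for a well-chosen $i$; this removes $5$ vertices and, since each $u_j$ has degree at least $3$ by Lemma~\ref{degge3}, at least $8 + (d(u_i)-1) \ge 10$ edges, matching the triple $(5,9,3)$ of Table~\ref{abgtab}. Deleting the offending $u_i$ severs the problematic tree-structure, after which three face vertices can safely be added back. Checking that in each residual configuration some choice of $i \in \{0,1,2,3\}$ succeeds is the bulk of the remaining case analysis and the step I expect to be the most delicate.
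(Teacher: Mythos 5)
Your opening, ruling out coincidences among the $u_i$, and the identification of the $(4,8)$ data for $H^*=G-V(C)$ are both sound, and the constraint $4a-8b\le 3$ does follow from~(\ref{a}), (\ref{b}) and~(\ref{8a12b}) as you compute. The Jordan-curve observation that $\{u_0,u_2\}$ and $\{u_1,u_3\}$ cannot sit in two distinct trees of $F^*$ is also correct. Where the argument breaks down is in the handling of the configurations you yourself flag as problematic: two adjacent pairs in different trees, or three or four $u_i$ in a single tree.

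Your proposed fix---pass to $H^{**}=G-V(C)-\{u_i\}$ with $(\alpha,\beta,\gamma)=(5,9,3)$---does not work as stated, because removing $u_i$ does not ``sever the problematic tree-structure.'' Observation~\ref{abg} gives you a \emph{fresh} maximum forest $F'$ of $H^{**}$, unrelated to your original $F^*$, and nothing prevents $F'$ from exhibiting exactly the same obstruction among the surviving $u_j$. Concretely, suppose $u_0u_1\in E$ and $u_2u_3\in E$ (a case that genuinely occurs): in $H^{**}=G-V(C)-u_0$, the maximum forest can still place $u_1,u_2,u_3$ in a single tree (or in two trees with $u_1$ on one side and $u_2,u_3$ on the other), and then \emph{no} triple of face vertices extends $F'$ to a forest of $G$---whether you take a path $v_jv_{j+1}v_{j+2}$ or the ``V'' $v_1v_0v_3$, some pair of added vertices attaches to the same tree. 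Trying $(\alpha,\beta,\gamma)=(6,12,4)$ after deleting $u_0,u_2$ also stalls: only three face vertices can be added without recreating $C$, and you cannot safely re-insert a $u_i$ either.

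The paper closes precisely this gap with a device your plan is missing. When two consecutive $u_i$'s are non-adjacent and not joined through $u_{i+1}$, it does not delete an extra vertex but instead \emph{adds} an auxiliary vertex $x$ joined to $u_0,u_1,u_2$ and removes $C$, invoking $(3,5,2)$; the presence of $x$ in the resulting forest certifies that $u_0,u_1,u_2$ lie in distinct trees (so $v_0,v_1,v_2$ can replace $x$), and its absence lets $v_0,v_2$ be added freely. When instead $u_0u_1,u_2u_3\in E$, the paper performs a separate case analysis on how many neighbors each $u_i$ has outside $A=\{v_0,\dots,v_3,u_0,\dots,u_3\}$, using the triples $(8,12,5)$, $(7,13,4)$ and $(6,14,3)$. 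Neither of these ingredients appears in your sketch, and without something playing the role of the forced-distinct-trees trick, the residual configurations you list are a genuine dead end rather than a tractable remaining case analysis.
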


\begin{proof}
Suppose that there is such a $4$-face $C = v_0v_1v_2v_3$, and let $u_i$ be the third neighbor of $v_i$ for $i = 0..3$. In the following, we consider the indices of the $u_i$ and $v_i$ modulo $4$. If for some $i_0 \in \{0,1,2,3\}$, $u_{i_0} = u_{i_0+1}$, then we have a triangle. Suppose now that $u_{i_0} = u_{i_0+2}$ for some $i_0 \in \{0,1,2,3\}$, w.l.o.g. say $i_0 = 0$. In the cycle $v_{0}v_{1}v_{2}u_{0}$, the vertices $v_{0}$ and $v_{2}$ are two opposite $3$-vertices. By Lemma~\ref{4cycle}, $u_{0}$ is a 3-vertex. Observe that $u_1v_1$ and $u_3v_3$ are separated by the cycle $v_{0}v_{1}v_{2}u_{0}$. Hence one of them is a bridge, contradicting Lemma~\ref{2co}.

Therefore all the $u_i$ are distinct. We now consider the question of the presence or not of the edges $u_iu_{i+1}$. Consider the case $u_iu_{i+1} \notin E$ and $u_{i+1}u_{i+2} \notin E$ for some $i \in \{0,1,2,3\}$, w.l.o.g. say $i = 0$. If $u_0u_2 \in E$, then either $u_2u_3 \notin E$ or $u_0u_3 \notin E$ (otherwise $G$ has a triangle), and $u_1u_3 \notin E$ by planarity of $G$. Therefore up to the permutation of the indices, $u_0u_1 \notin E$, $u_1u_2 \notin E$ and $u_0u_2 \notin E$. We then define $H^* = G + x + \{xu_0, xu_1, xu_2\} - \{v_0,v_1,v_2,v_3\}$. Graph $H^*$ has $n-3$ vertices and $m' = m-5$ edges and belongs to ${\cal P}_4$ as $u_0u_1$, $u_0u_2$ and $u_1u_2$ are not in $E$. Let $F'$ be any induced forest of $H^*$. Let $F$ be the subgraph of $G$ induced by $V(F')\backslash\{x\}$ plus $v_0$, $v_1$ and $v_2$ if $x \in F'$ or plus $v_0$ and $v_2$ if $x \notin F'$. Subgraph $F$ is an induced forest of $G$. Hence, Observation~\ref{abg} applied to $(\alpha,\beta,\gamma) = (3,5,2)$ leads to a contradiction. Therefore there must be an $i$ such that $u_iu_{i+1} \in E$ and $u_{i+2}u_{i+3} \in E$, w.l.o.g. $u_0u_1 \in E$ and $u_2u_3 \in E$.

Let $G' = G - C$. Graph $G\rq{}$ has $n-4$ vertices and $m-8$ edges.

Let us now count, for each of the $u_i$'s, the number of the neighbors of $u_i$ that are not in $A = \{v_0,v_1,v_2,v_3,u_0,u_1,u_2,u_3\}$. The edges that are known in $G[A]$ are represented in Figure~\ref{GA}.

\begin{figure}[h]
\begin{center}
\begin{tikzpicture}[line cap=round,line join=round,>=triangle 45,x=1.0cm,y=1.0cm]
\clip(0.0,0.0) rectangle (6.7,6.7);
\draw (0.6879722606657073,6.03171952230884)-- (5.999242097886516,6.031719522308842);
\draw (5.999242097886509,0.7204496850880373)-- (0.6879722606657072,0.7204496850880311);
\draw (0.6879722606657073,6.03171952230884)-- (2.243607179276111,4.476084603698435);
\draw (2.243607179276111,4.476084603698435)-- (4.4436071792761105,4.476084603698435);
\draw (4.4436071792761105,4.476084603698435)-- (4.443607179276109,2.276084603698438);
\draw (4.443607179276109,2.276084603698438)-- (2.243607179276111,2.2760846036984357);
\draw (2.243607179276111,2.2760846036984357)-- (2.243607179276111,4.476084603698435);
\draw (2.243607179276111,2.2760846036984357)-- (0.6879722606657072,0.7204496850880311);
\draw (4.443607179276109,2.276084603698438)-- (5.999242097886509,0.7204496850880373);
\draw (4.4436071792761105,4.476084603698435)-- (5.999242097886516,6.031719522308842);
\begin{scriptsize}
\draw [fill=black] (2.243607179276111,4.476084603698435) circle (1.5pt);
\draw[color=black] (2.3667298238024124,4.707546961284052) node {$v_0$};
\draw [fill=black] (2.243607179276111,2.2760846036984357) circle (1.5pt);
\draw[color=black] (2.3667298238024124,2.016310029215045) node {$v_3$};
\draw [fill=black] (4.4436071792761105,4.476084603698435) circle (1.5pt);
\draw[color=black] (4.321748120224063,4.707546961284052) node {$v_1$};
\draw [fill=black] (4.443607179276109,2.276084603698438) circle (1.5pt);
\draw[color=black] (4.321748120224063,2.016310029215045) node {$v_2$};
\draw [fill=black] (0.6879722606657073,6.03171952230884) circle (1.5pt);
\draw[color=black] (0.8094356519546213,6.264841133131838) node {$u_0$};
\draw [fill=black] (5.999242097886516,6.031719522308842) circle (1.5pt);
\draw[color=black] (6.129042292071854,6.264841133131838) node {$u_1$};
\draw [fill=black] (5.999242097886509,0.7204496850880373) circle (1.5pt);
\draw[color=black] (6.129042292071854,0.4590158573672597) node {$u_2$};
\draw [fill=black] (0.6879722606657072,0.7204496850880311) circle (1.5pt);
\draw[color=black] (0.8094356519546213,0.4590158573672597) node {$u_3$};
\end{scriptsize}
\end{tikzpicture}
        \caption{\label{GA} The graph $G[A]$ (only the edges that are known to be there are represented).}
\end{center}
\end{figure}
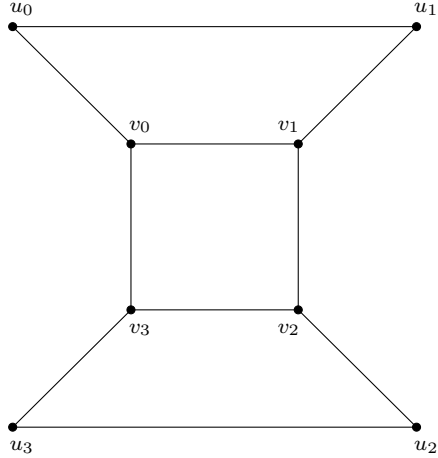

\begin{itemize}
        \item Suppose w.l.o.g. $u_0$ has only neighbors in $A$, and another $u_{i'}$ has at most one neighbor not in $A$. Let $H^* = G\rq{} - \{u_0,u_1,u_2,u_3\}$. Graph $H^*$ has $n-8$ vertices. By Lemma~\ref{degge3}, each of the $u_i$ has degree at least $3$. Graph $H^*$ has $m' \le m - 12$ edges. Let $F'$ be any induced forest of $H^*$. Adding the vertices $u_{0}$, $u_{i'}$, $v_1$, $v_2$ and $v_3$ to $F'$ leads to an induced forest of $G$. Observation~\ref{abg} applied to $(\alpha,\beta,\gamma) = (8,12,5)$ leads to a contradiction.

        \item Suppose w.l.o.g. $u_0$ has at most one neighbor not in $A$, and all the other $u_i$ have each at least one neighbor not in $A$. Vertex $u_0$ is not adjacent both to $u_2$ and $u_3$ since $G$ has girth at least $4$. Let $i_0$ be such that $i_0 \ne 0$ and $u_0u_{i_0} \notin E$ (either $i_0 = 2$ or $i_0 = 3$). Let $H^* = G\rq{} - \{u_{i_0+1},u_{i_0+2},u_{i_0+3}\}$ (we remove all the vertices of $A$ except $u_{i_0}$). Graph $H^*$ has $n-7$ vertices. Let us count the number of edges in $G\rq{}$ that have an endvertex in $\{u_{i_0+1},u_{i_0+2},u_{i_0+3}\}$. If $i_0 = 2$, then there are at least two edges for the neighbors of $u_1$ and $u_3$ that are not in $A$, plus the edges $u_0u_1$ and $u_2u_3$, plus one edge since $u_0$ has degree at least $3$, thus at least $5$ edges of $H^*$ have an endvertex in $\{u_{i_0+1},u_{i_0+2},u_{i_0+3}\}$.
If $i_0 = 3$, then there are at least two edges for the neighbors of $u_1$ and $u_2$ that are not in $A$, plus the edges $u_0u_1$ and $u_2u_3$, plus one edge since $u_0$ has degree at least $3$, thus at least $5$ edges of $H^*$ have an endvertex in $\{u_{i_0+1},u_{i_0+2},u_{i_0+3}\}$.
In both cases, $H^*$ has $m' \le m - 13$ edges. Let $F'$ be any induced forest of $H^*$. Adding the vertices $u_0$, $v_1$, $v_2$ and $v_3$ to $F'$ leads to an induced forest of $G$, since there is no path between $u_0$ and $u_{i_0}$ in $G[\{v_1,v_2,v_3,u_0,u_{i_0}\}]$. Observation~\ref{abg} applied to $(\alpha,\beta,\gamma) = (7,13,4)$ leads to a contradiction.

        \item So all the $u_i$ have at least two neighbors not in $A$. Let $H^* = G - \{v_0,v_1,v_2,v_3,u_0,u_2\}$. Graph $H^*$ has $n-6$ vertices and $m' \le m - 14$ edges, and if $F'$ is any induced forest in $H^*$, then adding the vertices $v_0$, $v_1$ and $v_2$ to $F'$ leads to an induced forest of $G$. Observation~\ref{abg} applied to $(\alpha,\beta,\gamma) = (6,14,3)$ leads to a contradiction and completes the proof.
\end{itemize}
\end{proof}

\begin{lemm} \label{sepcyclefirst}
There is no separating $4$-cycle with four $3$-vertices in $G$.
\end{lemm}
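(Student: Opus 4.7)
The plan is to assume for contradiction that $G$ contains a separating 4-cycle $C=v_0v_1v_2v_3$ with each $v_i$ a 3-vertex, and then to construct an induced forest of $G$ of order exceeding $an-bm$, contradicting the minimality of $G$.

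For each $i$ let $u_i$ denote the third neighbor of $v_i$, and let $I$ and $E$ be the two components of $G\setminus V(C)$, both nonempty by the separating hypothesis. The first step is to pin down how the $u_i$'s are distributed between $I$ and $E$. Since $G$ is 2-edge-connected (Lemma~\ref{2co}), if only one $u_i$ lay in $I$, then the edge $v_iu_i$ would be the only edge joining $I$ to the rest of $G$, hence a bridge---a contradiction. So each side contains at least two of the $u_i$'s, and by pigeonhole exactly two on each. Up to relabeling around $C$, we are then in one of two configurations: (a)~$u_0,u_1\in I$ and $u_2,u_3\in E$ (two consecutive pairs), or (b)~$u_0,u_2\in I$ and $u_1,u_3\in E$ (two opposite pairs).

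The second step exploits the separation. Both $G_{\text{int}}=G[V(C)\cup I]$ and $G_{\text{ext}}=G[V(C)\cup E]$ belong to ${\cal P}_4$ and are strictly smaller than $G$, so by the minimality of $G$ each admits an induced forest of the size prescribed by Theorem~\ref{genmain}. The plan is to glue such forests along $V(C)$ into a large induced forest of $G$. Naively taking the union can fail: a vertex of $V(C)$ appearing in only one forest may reintroduce edges of $G_{\text{int}}$ or $G_{\text{ext}}$ that close a cycle through $C$. I handle this by, depending on the sub-case, combining the following modifications of $G_{\text{int}}$ and $G_{\text{ext}}$: delete a well-chosen pair of vertices of $V(C)$ (breaking the cycle $C$), and/or add a short subdivided-chord gadget that forces agreement of the optimal forests on $V(C)$. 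Each modified graph is in ${\cal P}_4$ and smaller than $G$; counting the vertices and edges removed against the vertices reattached to the forest, the configuration fits one of the triples $(\alpha,\beta,\gamma)$ listed in Table~\ref{abgtab}, and Observation~\ref{abg} yields the contradiction.

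The main obstacle is the bookkeeping: one must treat the various sub-configurations depending on which edges $u_iu_j$ appear in $I$ or $E$, parallel to the case analysis of Lemma~\ref{3333}, and ensure that the gadgets never introduce a triangle (which would take the modified graph out of ${\cal P}_4$).
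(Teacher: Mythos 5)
Your opening observation is correct and matches the paper: by $2$-edge-connectivity and the fact that each $v_i$ is a $3$-vertex whose only possible neighbour on $C$'s interior or exterior is its third neighbour $u_i$, exactly two of the $u_i$ lie inside $C$ and two lie outside. Up to rotation this gives the ``consecutive'' and ``opposite'' configurations you name, and in either case one can relabel so that $u_1$ and $u_2$ end up on opposite sides of $C$. That is exactly where the paper's argument bifurcates from yours, and where your proposal has a real gap.

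The paper does not split $G$ into $G_{\mathrm{int}}$ and $G_{\mathrm{ext}}$. It simply sets $H^* = G - V(C) - u_0$ (five vertices deleted; at least nine edges deleted, since each $v_i$ has degree $3$ and $u_0$ has degree at least $3$ while at most $v_0u_0$ and possibly $v_2u_0$ lie inside the deleted set). Taking any induced forest $F'$ of $H^*$, one adds back the path $v_0v_1v_2$. This cannot create a cycle: $v_0$ has its third neighbour $u_0$ deleted, and the only way a cycle could close is via a $u_1$--$u_2$ path in $H^*$; but $u_1$ and $u_2$ lie on opposite sides of $C$, hence in distinct components of $G - V(C)$, hence in distinct components of $H^*$. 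This is precisely the triple $(\alpha,\beta,\gamma)=(5,9,3)$, and Observation~\ref{abg} finishes. The crux is the single well-chosen auxiliary deletion $u_0$ together with the observation that separation makes the reattachment of $v_0,v_1,v_2$ safe.

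Your gluing plan is a genuinely different route, and as stated it is not a proof. Two concrete obstacles. First, the count does not obviously close: if $F_{\mathrm{int}}$ and $F_{\mathrm{ext}}$ are optimal induced forests of $G_{\mathrm{int}}$ and $G_{\mathrm{ext}}$, then $|F_{\mathrm{int}}|+|F_{\mathrm{ext}}| \ge a(n+4) - b(m+4)$, and you must subtract the overlap on $V(C)$, which can be as large as $4$; since $4a - 4b < 4$ in the feasible polygon, the naive union is short. You wave at ``gadgets'' and ``well-chosen pairs'' to fix this, but none are specified, and you do not exhibit a single concrete $H^*$ fitting a tabulated $(\alpha,\beta,\gamma)$. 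Second, Observation~\ref{abg} operates on one auxiliary graph $H^*$, whereas your plan invokes minimality on two graphs $G_{\mathrm{int}}$ and $G_{\mathrm{ext}}$ simultaneously; that is structurally more like the proof of Lemma~\ref{2co}, and you would need a separate argument that the glued object is an induced forest of $G$ of the right size. You also never single out the vertex $u_0$ (the third neighbour of the $v_i$ chosen so that the adjacent $u_{i+1},u_{i+2}$ are separated), which is the one small idea that makes the paper's three-line deletion argument work.
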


\begin{proof}
Let $C = v_0v_1v_2v_3$ be such a cycle. We will consider the indices of the $v_i$ modulo $4$ in what follows.
Since $G$ is $2$-edge-connected (Lemma~\ref{2co}), two of the $v_i$ have their third neighbor in the interior of $C$, and the two other have theirs outside of it. There is a $v_i$ such that the third neighbors of $v_{i+1}$ and $v_{i+2}$ are separated by $C$, w.l.o.g. for $i = 0$. Then let $u$ be the third neighbor of $v_0$. Let $H^* = G - C - u$. Graph $H^*$ has $n-5$ vertices, and $m' \le m-9$ edges. Let $F'$ be any induced forest of $H^*$. Adding the vertices $v_0$, $v_1$ and $v_2$ to $F'$ leads to a forest of $G$, thus Observation~\ref{abg} applied to  $(\alpha,\beta,\gamma) = (5,9,3)$ leads to a contradiction.
\end{proof}

\begin{lemm} \label{deg3deg5}
There is no $3$-vertex adjacent to a $5$-vertex in $G$.
\end{lemm}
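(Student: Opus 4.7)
The plan is to derive a contradiction from the existence of a $3$-vertex $v$ adjacent to a $5$-vertex $w$, passing first through a $4$-cycle produced by Lemma~\ref{deg3deg4}. Let $x,y$ denote the two other neighbors of $v$. Applying Lemma~\ref{deg3deg4} to the $3$-vertex $v$ and its $4^+$-neighbor $w$, I obtain a common neighbor $z\neq v$ of $x$ and $y$. Since $wx,wy\notin E(G)$ (otherwise $vwx$ or $vwy$ is a triangle) we have $z\neq w$, and $z\notin\{x,y\}$ is immediate, so $C=vxzy$ is a genuine $4$-cycle of $G$.

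Next I would upgrade degrees along $C$ by exploiting the forbidden $4$-cycle configurations. If $\deg(z)=3$, then $(v,z)$ is an opposite pair of $3$-vertices on $C$, and the first bullet of Lemma~\ref{4cycle} forces both $x$ and $y$ to be $3$-vertices as well; hence $C$ would have four $3$-vertices. But every $4$-cycle of a planar graph either bounds a face or is separating, so this would contradict Lemma~\ref{3333} or Lemma~\ref{sepcyclefirst}. Therefore $\deg(z)\ge 4$. Applying the first bullet of Lemma~\ref{4cycle} once more, this time to the opposite pair $(x,y)$, I cannot have $\deg(x)=\deg(y)=3$ (otherwise $C$ would again have a $4^+$-vertex, namely $z$, together with two opposite $3$-vertices); up to relabeling, $\deg(x)\ge 4$.

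The rest fits the template of Observation~\ref{abg} with $(\alpha,\beta,\gamma)=(3,10,1)$ from Table~\ref{abgtab}. I set $H^{*}=G-\{v,w,x\}$, which lies in ${\cal P}_4$ and has $n-3$ vertices. Since $vw,vx\in E$ and $wx\notin E$ (again no triangle $vwx$), the number of edges incident to $\{v,w,x\}$ is
\[
\deg(v)+\deg(w)+\deg(x)-|E(G[\{v,w,x\}])|\ge 3+5+4-2=10,
\]
so $|E(H^{*})|\le m-10$. For any induced forest $F^{*}$ of $H^{*}$, the only possible neighbor of $v$ in $V(F^{*})$ is $y$, since both $w$ and $x$ have been deleted; hence $G[V(F^{*})\cup\{v\}]$ is an induced forest of $G$ with one more vertex than $F^{*}$. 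Inequality~(\ref{3a10b}) then delivers the contradiction via Observation~\ref{abg}.

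The main subtlety is the second paragraph. Naively removing only $\{v,w\}$ deletes just $7$ edges, and even then one cannot safely add $v$ back, because $x$ and $y$ might well be connected through the path $xzy$ inside $F^{*}$. The whole point of chasing degrees around the cycle $vxzy$ with Lemma~\ref{4cycle}, Lemma~\ref{3333}, and Lemma~\ref{sepcyclefirst} is to secure a second $4^+$-neighbor of $v$ that one can afford to delete: this simultaneously buys the extra edges needed for the triple $(3,10,1)$ and leaves $v$ with at most one neighbor ($y$) in the surviving forest.
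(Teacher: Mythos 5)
Your proof is correct and takes essentially the same approach as the paper, drawing on the same lemmas (\ref{deg3deg4}, \ref{4cycle}, \ref{3333}, \ref{sepcyclefirst}) and the same triple $(\alpha,\beta,\gamma)=(3,10,1)$. The only difference is organizational: the paper cases immediately on whether a $4^+$-vertex is already available among $v$'s two non-$w$ neighbors, applying $(3,10,1)$ directly in that case and using the $4$-cycle lemmas to derive a contradiction otherwise, whereas you run the $4$-cycle degree-chasing unconditionally to manufacture the needed $4^+$-neighbor and then apply $(3,10,1)$ once.
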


\begin{proof}
Let $v$ be a $3$-vertex adjacent to a 5-vertex $u$. Let $w$ and $x$ be the two other neighbors of $v$. 

We first assume that $w$ or $x$, $w$ without loss of generality, is a $4^+$-vertex. Let $H^* = G - \{u,v,w\}$. Graph $H^*$ has $n-3$ vertices and $m' \le m-10$ edges. Let $F'$ be any induced forest of $H^*$. Adding $v$ to $F'$ leads to an induced forest of $G$. Thus Observation~\ref{abg} applied to $(\alpha,\beta,\gamma) = (3,10,1)$ leads to a contradiction.

Therefore $w$ and $x$ are $3$-vertices. By Lemma~\ref{deg3deg4}, $w$ and $x$ have a common neighbor (distinct from $v$), which has degree 3 by Lemma~\ref{4cycle}. Finally Lemmas~\ref{3333} and~\ref{sepcyclefirst} lead to a contradiction, completing the proof.
\end{proof}

\begin{lemm} \label{sepcycle}
There is no separating $4$-cycle with at least two $3$-vertices in $G$.
\end{lemm}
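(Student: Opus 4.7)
I would reduce to a specific structural setting via the earlier lemmas and then mimic the argument of Lemma~\ref{sepcyclefirst}. Let $C=v_0v_1v_2v_3$ be a separating $4$-cycle with at least two $3$-vertices. By Lemma~\ref{sepcyclefirst} not all four vertices of $C$ are $3$-vertices, and by the first item of Lemma~\ref{4cycle} a $4$-cycle cannot contain two opposite $3$-vertices together with a $4^+$-vertex. Hence three $3$-vertices are impossible, and the two $3$-vertices of $C$ must be adjacent, say $v_0$ and $v_1$. Then $v_2$ and $v_3$ are $4^+$-vertices, and by Lemma~\ref{deg3deg5} they are in fact $4$-vertices. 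Let $u_0,u_1$ be the third neighbors of $v_0,v_1$ and let $N_2,N_3$ be the pairs of external neighbors of $v_2,v_3$. By the second item of Lemma~\ref{4cycle}, $N_2$ lies wholly on one side of $C$, and so does $N_3$.

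I would then split on how $N_2,N_3,u_0,u_1$ are distributed on the two sides of $C$, keeping in mind that $G$ is $2$-edge-connected (Lemma~\ref{2co}), so each side of $C$ receives at least two of the six edges leaving $V(C)$. Up to symmetry only a handful of configurations remain: the ``same side'' case, where $N_2$ and $N_3$ lie on the same side of $C$ and therefore $u_0$ and $u_1$ are both forced to the opposite side, and the ``opposite sides'' case, where $N_2$ and $N_3$ lie on different sides. In each configuration I choose a set $R\supseteq V(C)$ of $4$, $5$ or $6$ vertices and apply Observation~\ref{abg} with a triple $(\alpha,\beta,\gamma)$ from Table~\ref{abgtab} --- typically $(4,10,2)$, $(5,9,3)$ or $(6,14,3)$. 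After removing $V(C)$, the graph $H^*=G-R$ splits into disjoint interior and exterior subgraphs, so any induced forest $F^*$ of $H^*$ decomposes as $F^*_{\mathrm{int}}\sqcup F^*_{\mathrm{ext}}$. I then re-add $\gamma$ vertices of $V(C)$ to $F^*$, echoing the ``third-neighbor removal'' trick from the proof of Lemma~\ref{sepcyclefirst} (applied to $v_0$ or $v_1$, which are the two $3$-vertices available for that trick).

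The main obstacle is the no-cycle verification when the chosen vertices of $V(C)$ are put back into $F^*$. Because $H^*$ is disconnected along $C$, any newly created cycle must live entirely on one side; and thanks to the restrictions on $N_2$ and $N_3$ obtained in the first step, the only remaining threat is a cycle using some $4$-vertex $v_2$ or $v_3$ together with both of its external neighbors lying in a single tree of $F^*$ on its side of $C$. This borderline threat is defused either by enlarging $R$ to include one such offending external neighbor --- switching to a slightly heavier triple from Table~\ref{abgtab} --- or by noting that it would produce a $4$-cycle configuration forbidden by Lemma~\ref{4cycle} or Lemma~\ref{3333}, which completes the proof.
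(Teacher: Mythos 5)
Your reduction to the case of two adjacent $3$-vertices $v_0,v_1$ and two $4$-vertices $v_2,v_3$, together with the observation (from the second bullet of Lemma~\ref{4cycle}) that neither $v_2$ nor $v_3$ can have one external neighbor inside and one outside $C$, matches the paper's opening moves exactly. After that point, however, there are two genuine gaps.

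First, you never invoke Lemma~\ref{deg3deg4}, which is the structural linchpin of this lemma. The paper first rules out $u_0v_2\in E$ and $u_1v_3\in E$ via Lemma~\ref{4cycle}, and then Lemma~\ref{deg3deg4} (applied to the $3$-vertex $v_0$ and its $4$-vertex neighbor $v_3$) forces the other two neighbors of $v_0$, namely $v_1$ and $u_0$, to share a common neighbor other than $v_0$; since that common neighbor cannot be $v_2$, it must be $u_1$, so $u_0u_1\in E$. This single fact collapses your ``handful of configurations'': $u_0$ and $u_1$ are automatically on the same side of $C$ (so your ``opposite sides for $u_0,u_1$'' subcases never occur), and moreover $u_0,u_1$ are then $4$-vertices by Lemmas~\ref{degge3}--\ref{deg3deg5} applied to the $4$-cycle $v_0v_1u_1u_0$. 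Without this, your case split based purely on $2$-edge-connectivity is not tight enough and you do not know the degrees of $u_0,u_1$, which you need for the edge counts.

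Second, and more fatally, your plan restricts to $R\supseteq V(C)$, but the deletion set the paper actually uses is $R=\{v_0,v_1,v_3,u_1\}$, which deliberately \emph{keeps $v_2$ in $H^*$}. This matters: since $v_2$ is chosen so that both its external neighbors lie inside $C$, and $u_0$ connects only to the exterior, $v_2$ and $u_0$ lie in different components of $H^*$, and re-adding $v_0,v_1$ only creates the path $u_0$--$v_0$--$v_1$--$v_2$, so no cycle can appear. If instead you delete all of $V(C)$ and then also $u_1$ (your ``enlarge $R$'' fallback), you remove five vertices and about $13$ edges, and the only available triples $(\alpha,\beta,\gamma)$ with $\alpha=5$ then force $\gamma\ge 3$; but adding any third vertex among $\{v_2,v_3\}$ reintroduces exactly the potential cycle through the two external neighbors of that $4$-vertex that you were trying to avoid, and nothing in your reduction forbids those two neighbors from lying in the same tree of $F^*$. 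So the ``heavier triple'' route does not close. The remedy is precisely the paper's: once $u_0u_1\in E$ puts $u_0,u_1$ on one side and at least one of $v_2,v_3$ (say $v_2$) has both external neighbors on the other side, take $R=\{v_0,v_1,v_3,u_1\}$ with $(\alpha,\beta,\gamma)=(4,10,2)$ and re-add $v_0,v_1$.
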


\begin{proof}
Let $C = v_0v_1v_2v_3$ be such a cycle.
By Lemmas~\ref{4cycle} and~\ref{sepcyclefirst}, $C$ has exactly two 3-vertices.
By Lemmas~\ref{degge3},~\ref{4cycle} and~\ref{deg3deg5}, the two 3-vertices are adjacent, the two other vertices have degree $4$ and none of the 4-vertices has a neighbor inside $C$ and the other one outside $C$. W.l.o.g. the 3-vertices are $v_0$ and $v_1$. Let $u_0$ and $u_1$ be the third neighbors of $v_0$ and $v_1$ respectively. 

If $u_0v_2 \in E$ or $u_1v_3 \in E$, say $u_0v_2 \in E$ w.l.o.g., then either $v_0v_1v_2u_0$ or $v_0v_3v_2u_0$ has a $3$-vertex ($v_0$) opposite to a $4$-vertex ($v_2$) with an edge going inside and one going outside of it, contradicting Lemma~\ref{4cycle}. Therefore $u_0v_2 \notin E$ and $u_1v_3 \notin E$.

By Lemma~\ref{deg3deg4}, $u_0u_1 \in E$; thus $C$ does not separate $u_0$ and $u_1$, say $u_0$ and $u_1$ are in the exterior of $C$ up to changing the plane embedding. By Lemmas~\ref{degge3}--\ref{deg3deg5}, $u_0$ and $u_1$ are $4$-vertices. At least one of $v_2$ or $v_3$, say $v_2$, has two neighbors inside of $C$ (otherwise the cycle is not separating). Let $H^* = G - \{v_0,v_1,v_3,u_1\}$. Graph $H^*$ has $n-4$ vertices and $m' \le m-10$ edges, and if $F'$ is any induced forest of $H^*$, then adding $v_0$ and $v_1$ to $F'$ leads to an induced forest of $G$ (since $v_2$ is only connected to the interior and $u_0$ to the exterior of $C$). Observation~\ref{abg} applied to $(\alpha, \beta, \gamma) = (4,10,2)$ completes the proof.
\end{proof}

\begin{lemm} \label{3344}
There is no $4$-face with exactly two $3$-vertices in $G$.
\end{lemm}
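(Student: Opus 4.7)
The plan is to set up the hypothetical $4$-face $C = v_0 v_1 v_2 v_3$ with exactly two $3$-vertices. By Lemma~\ref{4cycle} (first bullet), the two $3$-vertices cannot be opposite (otherwise the cycle would contain two opposite $3$-vertices together with a $4^+$-vertex), so, relabeling if necessary, $v_0$ and $v_1$ are the $3$-vertices. Lemmas~\ref{degge3} and~\ref{deg3deg5} (the $3$-vertex $v_0$ is adjacent to $v_3$ and $v_1$ is adjacent to $v_2$) force $v_2$ and $v_3$ to be $4$-vertices exactly. Let $u_0, u_1$ be the third neighbors of $v_0, v_1$; they lie outside $C$ since $C$ bounds a face, and $u_0 \ne u_1$ lest $v_0 v_1 u_0$ form a triangle.

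The structural heart of the argument is Lemma~\ref{deg3deg4}: applied at $v_0$ with its $4$-vertex neighbor $v_3$, the remaining neighbors $v_1$ and $u_0$ must share a common neighbor distinct from $v_0$, and since $N(v_1) = \{v_0, v_2, u_1\}$ this forces $u_0 v_2 \in E(G)$ or $u_0 u_1 \in E(G)$. The symmetric application at $v_1$ yields $u_1 v_3 \in E(G)$ or $u_0 u_1 \in E(G)$. This cleanly splits the proof into two cases.

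Suppose first $u_0 u_1 \notin E(G)$, so $u_0 v_2, u_1 v_3 \in E(G)$. In the $4$-cycle $v_0 u_0 v_2 v_1$ the opposite pair $(v_1, u_0)$ contains the $3$-vertex $v_1$, while $v_2$ (in the other opposite pair) is a $4^+$-vertex; Lemma~\ref{4cycle} then forces $u_0$ to be a $4^+$-vertex, which together with Lemma~\ref{deg3deg5} makes $u_0$ a $4$-vertex, and symmetrically $u_1$ is a $4$-vertex. Taking $S = \{v_0, v_1, v_2, v_3, u_0, u_1\}$, triangle-freeness forbids $u_0 v_3$ and $u_1 v_2$, so $E(G[S])$ consists precisely of the four edges of $C$ together with $v_0 u_0, v_1 u_1, v_2 u_0, v_3 u_1$; with degree sum $22$ we get $|E(G[S])| = 8$ and $H^* = G - S$ has at most $m - 14$ edges. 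For any induced forest $F'$ of $H^*$, adding $v_0, v_1, v_2$ back introduces only the edges $v_0 v_1, v_1 v_2$, plus possibly the edge from $v_2$ to its unique external neighbor $x_2$; these form a simple path and close no cycle. Observation~\ref{abg} with $(\alpha, \beta, \gamma) = (6, 14, 3)$ closes this case.

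Suppose instead $u_0 u_1 \in E(G)$. Inspecting the $4$-cycle $v_0 v_1 u_1 u_0$, if exactly one of $u_0, u_1$ is a $3$-vertex we obtain two opposite $3$-vertices alongside a $4^+$-vertex, contradicting Lemma~\ref{4cycle}; if both are $3$-vertices the cycle has four $3$-vertices, contradicting Lemma~\ref{3333} when it is a face and Lemma~\ref{sepcyclefirst} when it is separating. So both $u_0, u_1$ are $4^+$-vertices, hence $4$-vertices by Lemma~\ref{deg3deg5}. A sub-case analysis on whether $u_0 v_2$ and $u_1 v_3$ are edges then follows. When at most one of them is present, taking $S = \{v_0, v_1, v_2, v_3, u_0, u_1\}$ yields $|E(G[S])| \le 8$ (still using that triangle-freeness kills $u_0 v_3$ and $u_1 v_2$), so $\beta \ge 14$ and the triple $(6, 14, 3)$ again applies with $v_0, v_1, v_2$ added back. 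The main obstacle is the sub-case where both $u_0 v_2, u_1 v_3 \in E(G)$: here $G[S] \cong K_{3,3}$ with bipartition $\{v_0, v_2, u_1\} \sqcup \{v_1, v_3, u_0\}$, $|E(G[S])| = 9$, and $\beta = 13$. The triple $(6, 13, 4)$ still satisfies $a\alpha - b\beta \le \gamma$ (being slacker than $(6, 8, 4)$ from Table~\ref{abgtab}), but every induced $4$-vertex forest in $K_{3,3}$ is a star $K_{1,3}$, so the delicate step is to select the star's center adaptively from $F'$ so that the external edges of the star's leaves into $V(F')$ do not close a cycle; this is enabled by the fact that $v_0$ and $v_1$ have no external neighbors and by the triangle-free constraints ensuring that the four external neighbors $x_2, x_3, y_0, y_1$ of $v_2, v_3, u_0, u_1$ satisfy $x_2 \ne x_3$, $y_0 \ne y_1$, $x_2 \ne y_0$ and $x_3 \ne y_1$, leaving enough flexibility (or else enlarging $S$ by one of these external vertices) to complete the reduction.
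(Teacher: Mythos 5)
Your setup matches the paper's: $v_0,v_1$ are adjacent $3$-vertices, $v_2,v_3$ are $4$-vertices, and Lemma~\ref{deg3deg4} gives ``$u_0v_2\in E$ or $u_0u_1\in E$'' and, symmetrically, ``$u_1v_3\in E$ or $u_0u_1\in E$.'' But from here you branch in a way that both misses the actual configuration and fails on the one case that matters. Because $C$ is a face, an edge $u_0v_2$ would make $v_0v_1v_2u_0$ a \emph{separating} $4$-cycle with two adjacent $3$-vertices (it separates $v_3$, which lies on the $C$-side of the shared edges $v_0v_1,v_1v_2$, from $u_1$, which lies on the other side of $v_1$), contradicting Lemma~\ref{sepcycle}; likewise $u_1v_3\notin E$, and hence $u_0u_1\in E$ is forced. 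So your Case~1 and every sub-case of Case~2 with $u_0v_2$ or $u_1v_3$ present (including the $K_{3,3}$ sub-case that you admit you cannot finish) are vacuous, and the effort spent there is not what is missing.

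The genuine gap is in the one configuration that does occur: $u_0u_1\in E$ and $u_0v_2,u_1v_3\notin E$. There your reduction ``delete $S=\{v_0,v_1,v_2,v_3,u_0,u_1\}$ and add $v_0,v_1,v_2$ back'' does not yield an induced forest in general. With $u_0v_2\notin E$, the $4$-vertex $v_2$ has only two neighbours ($v_1,v_3$) inside $S$ and hence two neighbours $x_2,x_2'$ in $H^*$; if $x_2$ and $x_2'$ lie in the same tree of $F'$, reinserting $v_2$ closes a cycle. The same objection rules out reinserting $v_3$, $u_0$ or $u_1$ as the third vertex, so no choice works for a deletion-only reduction with $\alpha=6$. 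The paper avoids this with a different move: $v_0v_1u_1u_0$ is a $4$-face (Lemma~\ref{sepcycle}), $u_0,u_1$ are $4$-vertices (Lemmas~\ref{3333},~\ref{4cycle},~\ref{deg3deg5}), and a planarity/interleaving argument on the $6$-cycle $u_0v_0v_3v_2v_1u_1$ bounding the union of the two $4$-faces shows at least one of the pairs $\{u_0,v_2\}$, $\{u_1,v_3\}$ has no common neighbour, say $\{u_0,v_2\}$. One then takes $H^*=G+u_0v_2-\{u_1,v_0,v_1,v_3\}$, which is triangle-free planar with $\alpha=4$, $\beta=10$, and reinserting $v_0,v_1$ just subdivides the added edge $u_0v_2$ inside the (induced) forest $F'$, giving $\gamma=2$. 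This add-an-edge-and-subdivide trick is the key step, and your proposal never finds it; as written it is not a complete proof of the lemma.
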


\begin{proof}

Let $C = v_0v_1v_2v_3$ be such a face. By Lemmas~\ref{degge3} and~\ref{4cycle} the two 3-vertices are adjacent. W.l.o.g. $v_0$ and $v_1$ have degree $3$, and $v_2$ and $v_3$ have degree $4$ (by Lemmas~\ref{degge3} and~\ref{deg3deg5}). Let $u_0$ and $u_1$ be the third neighbors of $v_0$ and $v_1$ respectively.
By Lemma~\ref{deg3deg4} applied to $v_0$ and $v_3$, and $v_1$ and $v_2$, $u_0u_1 \in E$. Then by Lemma~\ref{sepcycle}, $v_0v_1u_1u_0$ cannot be a separating cycle, and so it is the boundary of some 4-face. If both $u_0$ and $u_1$ have degree $3$, we have a contradiction by Lemma~\ref{3333}. If one has degree $3$ and the other has degree at least $4$, we have a contradiction by Lemma~\ref{4cycle}. Finally, by Lemma~\ref{deg3deg5}, $u_0$ and $u_1$ are $4$-vertices.

If $v_2$ is adjacent to $u_0$, then $u_0v_0v_1v_2$ is a separating $4$-cycle, with two 3-vertices, contradicting Lemma~\ref{sepcycle}. Hence $v_2u_0$ is not in $E$. Similarly, $v_3u_1$ is not in $E$. Since $G \in {\cal P}_4$, either $u_0$ and $v_2$ do not have a common neighbor, or $u_1$ and $v_3$ do not have a common neighbor. By symmetry assume that $u_0$ and $v_2$ do not have a common neighbor. Let $H^* = G + u_0v_2 - \{u_1,v_0,v_1,v_3\}$. Graph $H^*$ has $n-4$ vertices, $m' \le m - 10$ edges and belongs to ${\cal P}_4$. Let $F'$ be any induced forest of $H^*$. Adding $v_0$ and $v_1$ to $F'$ leads to an induced forest of $G$ (intuitively the edge $u_0v_2$ is just subdivided). Observation~\ref{abg} applied to $(\alpha,\beta,\gamma) = (4,10,2)$ completes the proof.
\end{proof}

\begin{lemm} \label{2vertices3}
There is no $4$-cycle with at least two $3$-vertices in $G$.
\end{lemm}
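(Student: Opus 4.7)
The plan is to reduce this strengthening of Lemma~\ref{3344} and Lemma~\ref{sepcycle} to the previously proved lemmas via a case analysis. The only conceptual ingredient is the following dichotomy: since $G$ is triangle-free, every $4$-cycle $C = v_0v_1v_2v_3$ in $G$ is chordless (a chord would create a triangle), and since $G$ has minimum degree at least $3$ by Lemma~\ref{degge3}, each of the two regions bounded by $C$ in the plane embedding either is a face of $G$ (when it contains no vertex, hence no edge either) or else contains at least one vertex. Hence $C$ is either the boundary of a $4$-face or a separating $4$-cycle.

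Assuming $C$ has at least two $3$-vertices, I would split on whether two such $3$-vertices can be chosen opposite on $C$. If yes, Lemma~\ref{degge3} forces the remaining two vertices to have degree at least $3$; if either of them is a $4^+$-vertex, Lemma~\ref{4cycle} directly applies (a $4^+$-vertex on $C$ together with two opposite $3$-vertices), so otherwise all four vertices of $C$ are $3$-vertices. In that all-$3$ sub-case Lemma~\ref{3333} covers the facial option and Lemma~\ref{sepcyclefirst} covers the separating one.

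If no two $3$-vertices of $C$ are opposite, then $C$ contains exactly two $3$-vertices, and they are adjacent on $C$: any three vertices chosen from a $4$-cycle contain an opposite pair, so having three $3$-vertices on $C$ would already put two of them opposite. The contradiction then comes from Lemma~\ref{3344} when $C$ bounds a face and from Lemma~\ref{sepcycle} when $C$ is separating.

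The argument is therefore just bookkeeping of cases once the facial/separating dichotomy for chordless $4$-cycles is in place, and I do not expect any genuine obstacle beyond making sure that each of the four resulting sub-configurations is indeed covered by exactly one of Lemmas~\ref{4cycle},~\ref{3333},~\ref{sepcyclefirst},~\ref{3344}, and~\ref{sepcycle}.
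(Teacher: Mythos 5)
Your proposal is correct and matches the paper's approach: the paper's own proof is a one-line citation of Lemmas~\ref{4cycle},~\ref{3333},~\ref{sepcycle} and~\ref{3344}, and your case analysis (chordless $4$-cycle is either a face or separating, then split on whether the $3$-vertices are opposite or adjacent) is exactly the bookkeeping that justifies that citation. The only cosmetic difference is that you also invoke Lemma~\ref{sepcyclefirst}, which is subsumed by Lemma~\ref{sepcycle}.
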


\begin{proof}
It follows from Lemmas~\ref{4cycle},~\ref{3333},~\ref{sepcycle} and~\ref{3344}.
\end{proof}

\begin{lemm} \label{3444}
There is no $4$-face with exactly one $3$-vertex in $G$.
\end{lemm}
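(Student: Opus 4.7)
Let $C = v_0v_1v_2v_3$ be a $4$-face with unique $3$-vertex $v_0$; by Lemma~\ref{deg3deg5} the face-neighbors $v_1, v_3$ of $v_0$ must be $4$-vertices, while $v_2$ has degree $4$ or $5$. Let $u_0$ denote the third neighbor of $v_0$. My first step is to pin down the local structure. Applying Lemma~\ref{deg3deg4} twice at $v_0$ (once with the $4$-neighbor $v_1$, once with $v_3$), we obtain vertices $w_3 \ne v_0$ adjacent to both $v_3$ and $u_0$, and $w_1 \ne v_0$ adjacent to both $v_1$ and $u_0$. Girth $4$ rules out $u_0 v_1, u_0 v_3 \in E(G)$ (else $u_0v_1w_1$ or $u_0v_3w_3$ is a triangle), so $u_0$ has three (possibly non-distinct) neighbors $v_0, w_1, w_3$ in addition to a possible fourth. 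By Lemma~\ref{deg3deg5}, $\deg(u_0) \le 4$; if $\deg(u_0) = 3$, the $4$-cycle $v_0v_1w_1u_0$ would contain the two $3$-vertices $v_0$ and $u_0$, contradicting Lemma~\ref{2vertices3}. Hence $u_0$ is a $4$-vertex, and reapplying Lemma~\ref{2vertices3} to the $4$-cycles $v_0v_1w_1u_0$ and $v_0v_3w_3u_0$ shows $w_1$ and $w_3$ are both $4^+$-vertices.

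The reduction I propose is $H^* = G - S$ with $S = \{v_0, v_1, v_3, u_0\}$, invoking Observation~\ref{abg} with $(\alpha, \beta, \gamma) = (4, 12, 2)$; the inequality $4a - 12b \le 2$ follows from $(\ref{a})$ and $(\ref{a6b})$ via $4a - 12b = 2a + 2(a - 6b) \le 2 + 0$. Edge-counting: exactly three edges are internal to $S$ (namely $v_0v_1, v_0v_3, v_0u_0$, since $v_1v_3, u_0v_1, u_0v_3 \notin E$), and the vertices $v_1, v_3, u_0$ each contribute three further edges to $V \setminus S$, yielding $|E(H^*)| = m - 12$. Given any induced forest $F^*$ of $H^*$, vertex $v_0$ has no neighbor in $V(H^*)$, so $v_0$ is always safely addable; the plan is then to choose a second vertex from $\{v_1, v_3, u_0\}$ whose three external neighbors --- respectively $\{v_2, y_1, y_2\}$, $\{v_2, z_1, z_2\}$, or $\{w_1, w_3, u_0'\}$ where $u_0'$ is $u_0$'s fourth neighbor --- meet every tree-component of $F^*$ at most once.

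The main obstacle is to ensure that such a second vertex exists for every $F^*$. I plan a case analysis on the identifications within $\{w_1, w_3, v_2\}$ and on whether $w_1 = w_3$: each non-generic configuration yields an additional short cycle (such as $v_0v_1v_2u_0$ when $w_1 = v_2$, or $v_0v_1wv_3$ when $w_1 = w_3 = w$) to which Lemma~\ref{2vertices3} or Lemma~\ref{sepcycle} applies, restricting the structure enough to conclude directly or to use a refined reduction. When all three candidates $v_1, v_3, u_0$ fail simultaneously on a given $F^*$, the common obstructing structure forces a vertex (typically $v_2$) to act as a hub in $F^*$; removing that hub from $F^*$ splits the offending components and allows the simultaneous reinsertion of $v_0, v_1, v_3$ as the induced path $v_1v_0v_3$, for a net gain of two vertices. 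The delicate part will be verifying in this swap argument that no new cycle arises from the interaction between the external neighbors of $v_1, v_3$ and the path $w_1u_0w_3$; should this fail, a fallback reduction $H^* = G + u_0v_2 - \{v_0, v_1, v_3\}$ (valid in $\mathcal{P}_4$ when $u_0v_2 \notin E$ and $u_0, v_2$ share no common neighbor outside $\{v_1,v_3\}$) combined with Observation~\ref{abg} applied to $(\alpha, \beta, \gamma) = (3, 8, 2)$ --- implied by $(\ref{a})$ and $(\ref{a6b})$ via $3a - 8b = (a - 6b) + 2(a - b) \le 2$ --- and the subdivision trick from the proof of Lemma~\ref{3344} (adding back $v_0, v_1$) would close the argument.
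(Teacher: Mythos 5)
Your opening structural analysis is sound and matches the paper's in spirit: using Lemma~\ref{deg3deg4} twice at $v_0$ to produce $w_1$ and $w_3$, ruling out $u_0v_1, u_0v_3 \in E$, and invoking Lemma~\ref{2vertices3} on the $4$-cycle $v_0v_1w_1u_0$ to force $u_0$ to be a $4$-vertex (and $w_1, w_3$ to be $4^+$-vertices) is all correct. The arithmetic $4a - 12b = 2a + 2(a-6b) \le 2$ and $3a - 8b = (a-6b) + 2(a-b) \le 2$ is also right, and the edge counts for both proposed reductions are accurate (assuming $u_0 v_2 \notin E$, etc.).

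However, the heart of the argument --- finding the second vertex to add back --- is left unproved, and both routes you sketch for it fail. For the primary reduction $H^* = G - \{v_0, v_1, v_3, u_0\}$ with $\gamma = 2$, you observe correctly that $v_0$ is always addable, but a second addable vertex need not exist: if $F^*$ contains a path $v_2 - w_1 - u_0' - w_3$ (where $u_0'$ is $u_0$'s fourth neighbor, not in $S$, and $u_0'$ happens to lie on this path in $F^*$; alternatively take any $F^*$ in which $v_2, w_1, w_3$ are pairwise connected and also connected to the remaining neighbors $b$ of $v_1$ and $d$ of $v_3$), then every one of $v_1, v_3, u_0$ has two external neighbors inside one component of $F^*$, so none can be re-added. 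The ``swap'' you propose --- deleting $v_2$ from $F^*$ and then inserting the path $v_1 v_0 v_3$ --- is not a proof; it is not even clear that $v_2 \in F^*$ in the problematic cases, and even when it is, deleting it only separates $v_1$'s and $v_3$'s $v_2$-neighbor while doing nothing about $w_1, w_3, b, d$, which can still lie in a single tree. The fallback $H^* = G + u_0 v_2 - \{v_0, v_1, v_3\}$ has a sharper problem: the subdivision trick of Lemma~\ref{3344} is clean there precisely because the re-inserted vertex $v_1$ in that lemma is a $3$-vertex whose only surviving external neighbor is $v_2$, so adding $v_0, v_1$ literally subdivides the artificial edge $u_0 v_2$. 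In your setting $v_1$ is a $4$-vertex with external neighbors $v_2$, $w_1$, and a fourth vertex $b$, and crucially $w_1$ is by construction adjacent to $u_0$. If both $u_0$ and $w_1$ are in $F'$ (which you cannot forbid), then after adding $v_0$ (adjacent to $u_0$) and $v_1$ (adjacent to $v_0$ and $w_1$), the cycle $v_0 u_0 w_1 v_1 v_0$ appears. So the fallback, as stated, does not close the argument.

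The paper's proof of this lemma is much heavier precisely because a short local reduction like the one you propose does not seem to exist. It first disposes of the case $u_0 v_2 \in E$ (using $(3,10,1)$ and the second item of Lemma~\ref{4cycle}), then builds up an extended local configuration $u_1, u_3, w_0, w_1, w_3$ and eventually $y_0, y_1, z$, pinning down degrees and non-adjacencies along the way, and finally uses removal sets of size $8$--$10$ with the triples $(8,19,4)$, $(9,24,4)$, $(10,23,5)$ and $(9,19,5)$ --- none of which appear in your sketch --- to reach a contradiction. To salvage your approach you would have to either characterize precisely the forests $F^*$ for which no second vertex is addable (and show each such $F^*$ can be locally modified), or extend the removal set to absorb the obstruction (which is essentially what the paper does), and right now neither step is carried out.
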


\begin{proof}

Let $C = v_0v_1v_2v_3$ be such a face. W.l.o.g. $v_0$ is the 3-vertex and $v_1$, $v_2$ and $v_3$ are $4^+$-vertices. By Lemma~\ref{deg3deg5}, $v_1$ and $v_3$ are $4$-vertices. Let $u_0$ be the third neighbor of $v_0$. Vertex $u_0$ is different from $v_2$ and non-adjacent to $v_1$ and $v_3$ ($G$ is triangle-free). 

Let us first assume that $u_0v_2 \in E$. By Lemmas~\ref{degge3},~\ref{deg3deg5} and~\ref{2vertices3}, $u_0$ is a 4-vertex. Assume $v_2$ has degree $5$. Let $H^* = G - \{u_0,v_0,v_2\}$. Graph $H^*$ has $n-3$ vertices and $m-10$ edges. Let $F'$ be any induced forest of $H^*$. Adding the vertex $v_0$ to $F'$ leads to an induced forest of $G$. Observation~\ref{abg} applied to $(\alpha,\beta,\gamma) = (3,10,1)$ leads to a contradiction. Hence $v_2$ has degree 4. Then either $v_0v_1v_2u_0$ or $v_0v_3v_2u_0$ has a 3-vertex opposite to a 4-vertex with a neighbor in the interior and one in the exterior of it, contradicting Lemma~\ref{4cycle}.

Thus $u_0$ is non-adjacent to $v_2$. By Lemma~\ref{deg3deg4}, $v_1$ and $u_0$ have a common neighbor other than $v_0$, say $u_1$. It is distinct from all the vertices we defined previously. By Lemma~\ref{2vertices3} applied to $v_0v_1u_1u_0$, $u_0$ and $u_1$ have degree at least $4$. By Lemma~\ref{deg3deg5}, $u_0$ has degree exactly $4$.

Suppose $u_1v_3 \in E$. As $C$ is a face, the last neighbor of $v_1$ ($\ne v_0,v_2,u_1$), say $w_1$, is not in the interior of $C$. The cycle $v_0v_1u_1v_3$ separates $u_0$ and $v_2$.
Suppose first that $v_0v_1u_1v_3$ does not separate $u_0$ and $w_1$. Then $v_0v_1u_1u_0$ separates $v_3$ and $w_1$. Let $H^* = G - \{v_0,v_1,v_2,v_3,u_0,u_1\}$. Graph $G^*$ has $n-6$ vertices and $m' \le m-14$ edges. Let $F'$ be any induced forest of $H^*$. Adding the vertices $v_0$, $v_1$ and $v_3$ to $F'$ leads to an induced forest of $G$. Hence Observation~\ref{abg} applied to $(\alpha,\beta,\gamma) = (6,14,3)$ leads to a contradiction. Therefore $v_0v_1u_1v_3$ separates $u_0$ and $w_1$. Assume $u_1$ has degree $5$. Let $H^* = G - \{u_1,v_0,v_3\}$. Graph $H^*$ has $n-3$ vertices and $m-10$ edges. Let $F'$ be any induced forest of $H^*$. Adding the vertex $v_0$ to $F'$ leads to an induced forest of $G$. Observation~\ref{abg} applied to $(\alpha,\beta,\gamma) = (3,10,1)$ leads to a contradiction. Hence $u_1$ has degree 4. Then $v_0v_1u_1v_3$, $v_0u_0u_1v_3$ or $v_0v_1u_1u_0$ has a 3-vertex opposite to a 4-vertex with a neighbor in the interior and one in the exterior of it, contradicting Lemma~\ref{4cycle}. 

So $u_1$ cannot be adjacent to $v_3$. As $u_1v_3 \notin E$ and $u_0v_2 \notin E$, by Lemma~\ref{deg3deg4} $v_3$ and $u_0$ have a common neighbor distinct from $v_0$, say $u_3$. By what precedes and by symmetry, it is of degree at least $4$ and non-adjacent to $v_0$, $v_1$, $v_2$ and $u_1$ (it has a role similar to that of $u_1$, and is non-adjacent to $u_1$ because of the girth assumption). See Figure~\ref{preinterm} for a reminder of the structure of $G[\{v_0, v_1, v_2, v_3, u_0, u_1, u_3\}]$. Vertex $v_0$ has degree $3$, $v_1$, $v_3$ and $u_0$ are $4$-vertices, and $v_2$, $u_1$ and $u_3$ are $4^+$-vertices. Recall that $u_1v_3 \notin E$, $u_3v_1 \notin E$ and $u_0v_2 \notin E$.

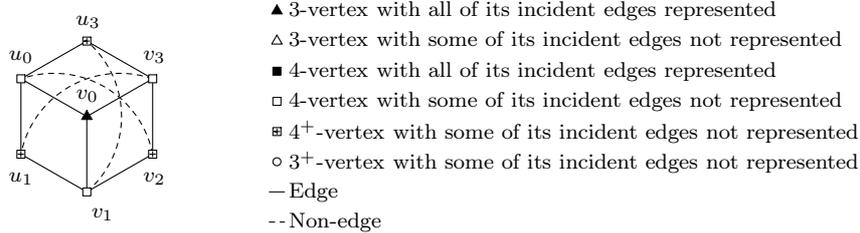
\begin{figure}[h]
\begin{center}
\begin{tikzpicture}[line cap=round,line join=round,>=triangle 45,x=1.0cm,y=1.0cm]
\clip(-2.0,-1.6) rectangle (11.0,1.6);
\draw (-0.8660254037844386,0.5)-- (0.0,-0.0);
\draw (0.0,-0.0)-- (0,-1.0);
\draw (0,-1.0)-- (0.8660254037844386,-0.4999999999999999);
\draw (0.8660254037844386,-0.4999999999999999)-- (0.8660254037844387,0.5);
\draw (0.8660254037844387,0.5)-- (0.0,1.0);
\draw (0.0,1.0)-- (-0.8660254037844386,0.5);
\draw (0.0,-0.0)-- (0.8660254037844387,0.5);
\draw (-0.8660254037844386,0.5)-- (-0.8660254037844386,-0.5);
\draw (-0.8660254037844386,-0.5)-- (0,-1.0);
\draw (2.4,-1)-- (2.6,-1);
\draw [dash pattern=on 2pt off 2pt](2.4,-1.4)-- (2.6,-1.4);
\draw [shift={(-0.8660254037844386,0.0)},dash pattern=on 2pt off 2pt]  plot[domain=-0.8570719478501312:0.8570719478501311,variable=\t]({1.0*1.3228756555322954*cos(\t r)+-0.0*1.3228756555322954*sin(\t r)},{0.0*1.3228756555322954*cos(\t r)+1.0*1.3228756555322954*sin(\t r)});
\draw [shift={(0.43301270189221935,-0.75)},dash pattern=on 2pt off 2pt]  plot[domain=1.2373231545430645:2.9514670502433265,variable=\t]({1.0*1.3228756555322954*cos(\t r)+-0.0*1.3228756555322954*sin(\t r)},{0.0*1.3228756555322954*cos(\t r)+1.0*1.3228756555322954*sin(\t r)});
\draw [shift={(-0.43301270189221924,-0.75)},dash pattern=on 2pt off 2pt]  plot[domain=0.19012560334646686:1.9042694990467288,variable=\t]({1.0*1.3228756555322951*cos(\t r)+-0.0*1.3228756555322951*sin(\t r)},{0.0*1.3228756555322951*cos(\t r)+1.0*1.3228756555322951*sin(\t r)});
\begin{scriptsize}
\draw [fill=black,shift={(0,0)}] (0,0) ++(0 pt,2.25pt) -- ++(1.9485571585149868pt,-3.375pt)--++(-3.8971143170299736pt,0 pt) -- ++(1.9485571585149868pt,3.375pt);
\draw[color=black] (0.008914485316013111,0.2735828720567711) node {$v_0$};
\draw [fill=white,shift={(0,1)}] (0,0) ++(1.5 pt,1.5pt) -- ++(-3pt,0pt)--++(0pt,-3 pt) -- ++(3pt,0pt) -- ++(0pt,3pt);
\draw [color=black] (0,1)-- ++(-1.0pt,0 pt) -- ++(2.0pt,0 pt) ++(-1.0pt,-1.0pt) -- ++(0 pt,2.0pt);
\draw[color=black] (0.01894019178265059,1.256102105787242) node {$u_3$};
\draw [fill=white,shift={(-0.8660254037844386,0.5)}] (0,0) ++(1.5 pt,1.5pt) -- ++(-3pt,0pt)--++(0pt,-3 pt) -- ++(3pt,0pt) -- ++(0pt,3pt);
\draw[color=black] (-0.8633219772814474,0.7949196083219188) node {$u_0$};
\draw [fill=white,shift={(-0.8660254037844386,-0.5)}] (0,0) ++(1.5 pt,1.5pt) -- ++(-3pt,0pt)--++(0pt,-3 pt) -- ++(3pt,0pt) -- ++(0pt,3pt);
\draw [color=black] (-0.8660254037844386,-0.5)-- ++(-1.0pt,0 pt) -- ++(2.0pt,0 pt) ++(-1.0pt,-1.0pt) -- ++(0 pt,2.0pt);
\draw[color=black] (-0.8633219772814474,-0.8184479388075878) node {$u_1$};
\draw [fill=white,shift={(0,-1.0)}] (0,0) ++(1.5 pt,1.5pt) -- ++(-3pt,0pt)--++(0pt,-3 pt) -- ++(3pt,0pt) -- ++(0pt,3pt);
\draw[color=black] (0.20937720171548772,-1.297332621394604) node {$v_1$};
\draw [fill=white,shift={(0.8660254037844386,-0.4999999999999999)}] (0,0) ++(1.5 pt,1.5pt) -- ++(-3pt,0pt)--++(0pt,-3 pt) -- ++(3pt,0pt) -- ++(0pt,3pt);
\draw [color=black] (0.8660254037844386,-0.4999999999999999)-- ++(-1.0pt,0 pt) -- ++(2.0pt,0 pt) ++(-1.0pt,-1.0pt) -- ++(0 pt,2.0pt);
\draw[color=black] (0.8911766543801111,-0.8184479388075878) node {$v_2$};
\draw [fill=white,shift={(0.8660254037844387,0.5)}] (0,0) ++(1.5 pt,1.5pt) -- ++(-3pt,0pt)--++(0pt,-3 pt) -- ++(3pt,0pt) -- ++(0pt,3pt);
\draw[color=black] (0.8911766543801111,0.7949196083219188) node {$v_3$};

\draw [fill=black,shift={(2.5,1.4)}] (0,0) ++(0 pt,2.25pt) -- ++(1.9485571585149868pt,-3.375pt)--++(-3.8971143170299736pt,0 pt) -- ++(1.9485571585149868pt,3.375pt);
\draw[color=black] (2.55,1.4) node[anchor = west] {$3$-vertex with all of its incident edges represented};
\draw [fill=white,shift={(2.5,1)}] (0,0) ++(0 pt,2.25pt) -- ++(1.9485571585149868pt,-3.375pt)--++(-3.8971143170299736pt,0 pt) -- ++(1.9485571585149868pt,3.375pt);
\draw[color=black] (2.55,1) node[anchor = west] {$3$-vertex with some of its incident edges not represented};
\draw [fill=black,shift={(2.5,0.6)}] (0,0) ++(1.5 pt,1.5pt) -- ++(-3pt,0pt)--++(0pt,-3 pt) -- ++(3pt,0pt) -- ++(0pt,3pt);
\draw[color=black] (2.55,0.6) node[anchor = west] {$4$-vertex with all of its incident edges represented};
\draw [fill=white,shift={(2.5,0.2)}] (0,0) ++(1.5 pt,1.5pt) -- ++(-3pt,0pt)--++(0pt,-3 pt) -- ++(3pt,0pt) -- ++(0pt,3pt);
\draw[color=black] (2.55,0.2) node[anchor = west] {$4$-vertex with some of its incident edges not represented};
\draw [fill=white,shift={(2.5,-0.2)}] (0,0) ++(1.5 pt,1.5pt) -- ++(-3pt,0pt)--++(0pt,-3 pt) -- ++(3pt,0pt) -- ++(0pt,3pt);
\draw [color=black] (2.5,-0.2)-- ++(-1.0pt,0 pt) -- ++(2.0pt,0 pt) ++(-1.0pt,-1.0pt) -- ++(0 pt,2.0pt);
\draw[color=black] (2.55,-0.2) node[anchor = west] {$4^+$-vertex with some of its incident edges not represented};
\draw [fill=white] (2.5,-0.6) circle (1.5pt);
\draw[color=black] (2.55,-0.6) node[anchor = west] {$3^+$-vertex with some of its incident edges not represented};
\draw[color=black] (2.55,-1) node[anchor = west] {Edge};
\draw[color=black] (2.55,-1.4) node[anchor = west] {Non-edge};
\end{scriptsize}
\end{tikzpicture}
\caption{\label{preinterm} Graph $G[\{v_0, v_1, v_2, v_3, u_0, u_1, u_3\}]$.}
\end{center}
\end{figure}

Let $w_0$, $w_1$ and $w_3$ be the fourth neighbors of $u_0$, $v_1$ and $v_3$ respectively. In the following we will no longer use the fact that $C$ is a face. By the girth assumption, $w_0$ is not adjacent to $u_1$ or $u_3$. Suppose $w_0$ is adjacent to $v_1$ or to $v_3$, say $w_0v_1 \in E$. Then by the girth assumption, $w_0v_2 \notin E$. By Lemma~\ref{2vertices3} applied to $v_0v_1w_0u_0$, $w_0$ is a $4^+$-vertex. Let $H^* = G - \{v_0,v_1,v_2,v_3,u_0,u_1,u_3,w_0\}$. Graph $H^*$ has $n-8$ vertices and $m' \le m-19$ edges. Let $F'$ be any induced forest of $H^*$. Adding the vertices $v_0$, $v_1$, $v_3$ and $u_0$ to $F'$ leads to an induced forest of $G$. Hence Observation~\ref{abg} applied to $(\alpha,\beta,\gamma) = (8,19,4)$ leads to a contradiction. So $w_0$ is not adjacent to $v_1$ or $v_3$. By symmetry, $w_0$, $w_1$ and $w_3$ are distinct.

Suppose $w_0v_2 \in E$. Assume that $C$ separates $w_1$ and $w_3$, or that it does not separate $w_1$ and $w_3$ nor $w_0$ and $w_1$. Then either $C$ or $v_0v_1v_2w_0u_0$ separates $w_1$ and $w_3$. Let $H^* = G - \{v_0,v_1,v_2,v_3,u_0,u_1,u_3,w_0\}$. Graph $H^*$ has $n-8$ vertices and $m' \le m-19$ edges. Let $F'$ be any induced forest of $H^*$. Adding the vertices $v_0$, $v_1$, $v_3$ and $u_0$ to $F'$ leads to an induced forest of $G$. Hence Observation~\ref{abg} applied to $(\alpha,\beta,\gamma) = (8,19,4)$ leads to a contradiction. Thus $C$ does not separate $w_1$ and $w_3$ but separates $w_1$ and $w_0$. Let $H^* = G - \{v_0,v_1,v_2,v_3,u_0,u_1,u_3,w_3\}$. Graph $H^*$ has $n-8$ vertices and $m' \le m-19$ edges. Let $F'$ be any induced forest of $H^*$. Adding the vertices $v_0$, $v_1$, $v_3$ and $u_0$ to $F'$ leads to an induced forest of $G$. Hence Observation~\ref{abg} applied to $(\alpha,\beta,\gamma) = (8,19,4)$ leads to a contradiction. So $w_0v_2 \notin E$, and similarly $w_1u_3 \notin E$ and $w_3u_1 \notin E$. 

Thus the only edges that may or may not exist between the vertices we defined are $w_0w_1$, $w_0w_3$ and $w_1w_3$. See Figure~\ref{interm} for a reminder of the edges and vertices we know to this point. Vertex $v_0$ has degree $3$, $v_1$, $v_3$ and $u_0$ are $4$-vertices and $v_2$, $u_1$ and $u_3$ are $4^+$-vertices. Vertices $v_0$, $v_1$, $v_3$ and $u_0$ have all their incident edges represented in Figure~\ref{interm}.

\begin{figure}[h]
\begin{center}
\begin{tikzpicture}[line cap=round,line join=round,>=triangle 45,x=1.0cm,y=1.0cm]
\clip(-2.0,-2.4) rectangle (2.1,1.6);
\draw (-0.8660254037844386,0.5)-- (0.0,-0.0);
\draw (0.0,-0.0)-- (0,-1.0);
\draw (0,-1.0)-- (0.8660254037844386,-0.4999999999999999);
\draw (0.8660254037844386,-0.4999999999999999)-- (0.8660254037844387,0.5);
\draw (0.8660254037844387,0.5)-- (0.0,1.0);
\draw (0.0,1.0)-- (-0.8660254037844386,0.5);
\draw (0.0,-0.0)-- (0.8660254037844387,0.5);
\draw (-0.8660254037844386,0.5)-- (-0.8660254037844386,-0.5);
\draw (-0.8660254037844386,-0.5)-- (0,-1.0);
\draw (-0.8660254037844386,0.5)-- (-1.7320508075688772,1.0);
\draw (0.8660254037844387,0.5)-- (1.7320508075688774,0.9999999999999997);
\draw (0,-1.0)-- (0,-2.0000000000000004);
\begin{scriptsize}
\draw [fill=black,shift={(0,0)}] (0,0) ++(0 pt,2.25pt) -- ++(1.9485571585149868pt,-3.375pt)--++(-3.8971143170299736pt,0 pt) -- ++(1.9485571585149868pt,3.375pt);
\draw[color=black] (0.008914485316013111,0.2735828720567711) node {$v_0$};
\draw [fill=white,shift={(0,1)}] (0,0) ++(1.5 pt,1.5pt) -- ++(-3pt,0pt)--++(0pt,-3 pt) -- ++(3pt,0pt) -- ++(0pt,3pt);
\draw [color=black] (0,1)-- ++(-1.0pt,0 pt) -- ++(2.0pt,0 pt) ++(-1.0pt,-1.0pt) -- ++(0 pt,2.0pt);
\draw[color=black] (0.01894019178265059,1.256102105787242) node {$u_3$};
\draw [fill=black,shift={(-0.8660254037844386,0.5)}] (0,0) ++(1.5 pt,1.5pt) -- ++(-3pt,0pt)--++(0pt,-3 pt) -- ++(3pt,0pt) -- ++(0pt,3pt);
\draw[color=black] (-0.8633219772814474,0.7949196083219188) node {$u_0$};
\draw [fill=white,shift={(-0.8660254037844386,-0.5)}] (0,0) ++(1.5 pt,1.5pt) -- ++(-3pt,0pt)--++(0pt,-3 pt) -- ++(3pt,0pt) -- ++(0pt,3pt);
\draw [color=black] (-0.8660254037844386,-0.5)-- ++(-1.0pt,0 pt) -- ++(2.0pt,0 pt) ++(-1.0pt,-1.0pt) -- ++(0 pt,2.0pt);
\draw[color=black] (-0.8633219772814474,-0.8184479388075878) node {$u_1$};
\draw [fill=black,shift={(0,-1.0)}] (0,0) ++(1.5 pt,1.5pt) -- ++(-3pt,0pt)--++(0pt,-3 pt) -- ++(3pt,0pt) -- ++(0pt,3pt);
\draw[color=black] (0.20937720171548772,-1.297332621394604) node {$v_1$};
\draw [fill=white,shift={(0.8660254037844386,-0.4999999999999999)}] (0,0) ++(1.5 pt,1.5pt) -- ++(-3pt,0pt)--++(0pt,-3 pt) -- ++(3pt,0pt) -- ++(0pt,3pt);
\draw [color=black] (0.8660254037844386,-0.4999999999999999)-- ++(-1.0pt,0 pt) -- ++(2.0pt,0 pt) ++(-1.0pt,-1.0pt) -- ++(0 pt,2.0pt);
\draw[color=black] (0.8911766543801111,-0.8184479388075878) node {$v_2$};
\draw [fill=black,shift={(0.8660254037844387,0.5)}] (0,0) ++(1.5 pt,1.5pt) -- ++(-3pt,0pt)--++(0pt,-3 pt) -- ++(3pt,0pt) -- ++(0pt,3pt);
\draw[color=black] (0.8911766543801111,0.7949196083219188) node {$v_3$};
\draw [fill=white] (-1.7320508075688772,1.0) circle (1.5pt);
\draw[color=black] (-1.6252756687458956,1.2059735734540544) node {$w_0$};
\draw [fill=white] (1.7320508075688774,0.9999999999999997) circle (1.5pt);
\draw[color=black] (1.873695888110584,1.1758964540541421) node {$w_3$};
\draw [fill=white] (0,-2.0000000000000004) circle (1.5pt);
\draw[color=black] (0.20937720171548772,-2.3) node {$w_1$};
\end{scriptsize}
\end{tikzpicture}
\caption{\label{interm}Vertices $v_0$, $v_1$, $v_2$, $v_3$, $u_0$, $u_1$, $u_3$, $w_0$, $w_1$ and $w_3$.}
\end{center}
\end{figure}
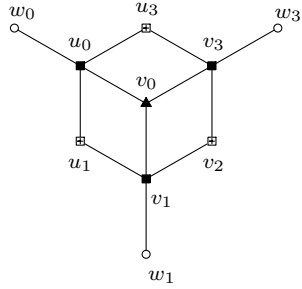

Suppose $w_0w_1 \notin E$, $w_0w_3 \notin E$, and $w_1w_3 \notin E$. Let $H^* = G + x + \{xw_0,xw_1,xw_3\} - \{v_0,v_1,v_2,v_3,u_0,u_1,u_3\}$. Graph $H^*$ has $n-6$ vertices and $m' \le m-14$ edges, and is in ${\cal P}_4$. Let $F'$ be any induced forest of $H^*$. Either $x \in F'$, then the graph induced by $V(F') \cup \{v_0,v_1,v_3,u_0\} \backslash \{x\}$ in $G$ is a forest, or $x \notin F'$, then adding $v_1$, $v_3$ and $u_0$ to $F'$ leads to an induced forest of $G$. Observation~\ref{abg} applied to $(\alpha,\beta,\gamma) = (6,14,3)$ leads to a contradiction. Thus there is at least one edge among $w_0w_1$, $w_0w_3$ and $w_1w_3$. Moreover, since there is no triangle in $G$, there are no more than two of these edges. W.l.o.g. let us assume that $w_0w_1 \notin E$ and $w_0w_3 \in E$. 

Let us now prove some claims that we will use later :

\begin{enumerate}[label=(\alph*)]

\item \label{ma}Suppose that $w_0$ and $w_1$ are $4^+$-vertices, or that one is a $3$-vertex, the other a $4^+$-vertex, and $v_2$, $u_1$ or $u_3$ has degree $5$. Let $H^* = G - \{v_0,v_1,v_2,v_3,u_0,u_1,u_3,w_0,w_1\}$. Graph $H^*$ has $n-9$ vertices and $m' \le m - 24$ edges, and adding $v_0$, $v_1$, $v_3$ and $u_0$ to any induced forest of $H^*$ leads to an induced forest of $G$. Observation~\ref{abg} applied to $(\alpha,\beta,\gamma) = (9,24,4)$ leads to a contradiction.

\item \label{mb} Suppose $w_0$ or $w_3$, say $w_{i_0}$, is a $3$-vertex and either one of the $w_i$ is a $4^+$-vertex, or $w_1w_3 \notin E$. Let $H^* = G - \{v_0,v_1,v_2,v_3,u_0,u_1,u_3,w_0,w_1,w_3\}$. Graph $H^*$ has $n-10$ vertices and $m' \le m - 23$ edges, and adding $v_0$, $v_1$, $v_3$, $u_0$ and $w_{i_0}$ to any induced forest of $H^*$ leads to an induced forest of $G$. Observation~\ref{abg} applied to $(\alpha,\beta,\gamma) = (10,23,5)$ leads to a contradiction.

\item \label{mc} Suppose $w_0$ and $w_3$ are $3$-vertices and $w_1$ and $w_3$ are adjacent. Let $H^* = G - \{v_0,v_1,v_3,u_0,u_1,u_3,w_0,w_1,w_3\}$. Graph $H^*$ has $n-9$ vertices and $m' \le m - 19$ edges, and adding $v_0$, $v_1$, $u_0$, $w_0$ and $w_3$ to any induced forest of $H^*$ leads to an induced forest of $G$ (by planarity, since $w_1w_3 \in E$ and $w_0w_3 \in E$, the cycle $v_0v_1w_1w_3v_3$ separates $v_2$ from $w_0$ in $G$). Observation~\ref{abg} applied to $(\alpha,\beta,\gamma) = (9,19,5)$ leads to a contradiction.
\end{enumerate}

If $w_1w_3 \in E$, then both $w_0$ and $w_3$ are $4^+$-vertices (by \ref{mb} and \ref{mc}), and by symmetry $w_1$ is also a $4^+$-vertex, which is impossible (by \ref{ma}). Hence $w_1w_3 \notin E$.

\begin{figure}[h]
\begin{center}
\begin{tikzpicture}[line cap=round,line join=round,>=triangle 45,x=1.0cm,y=1.0cm]
\clip(-2.0,-2.5) rectangle (2.0,1.7);
\draw (-0.8660254037844386,0.5)-- (0.0,-0.0);
\draw (0.0,-0.0)-- (0,-1.0);
\draw (0,-1.0)-- (0.8660254037844386,-0.4999999999999999);
\draw (0.8660254037844386,-0.4999999999999999)-- (0.8660254037844387,0.5);
\draw (0.8660254037844387,0.5)-- (0.0,1.0);
\draw (0.0,1.0)-- (-0.8660254037844386,0.5);
\draw (0.0,-0.0)-- (0.8660254037844387,0.5);
\draw (-0.8660254037844386,0.5)-- (-0.8660254037844386,-0.5);
\draw (-0.8660254037844386,-0.5)-- (0,-1.0);
\draw (-0.8660254037844386,0.5)-- (-1.7320508075688772,1.0);
\draw (0.8660254037844387,0.5)-- (1.7320508075688774,0.9999999999999997);
\draw (0,-1.0)-- (0,-2.0000000000000004);
\draw [shift={(0,-2.543640645068352)}] plot[domain=1.116167186841501:2.0254254667482923,variable=\t]({1.0*3.9442856161008986*cos(\t r)+-0.0*3.9442856161008986*sin(\t r)},{0.0*3.9442856161008986*cos(\t r)+1.0*3.9442856161008986*sin(\t r)});
\draw [shift={(-2.206307779361453,1.2738123903295004)},dash pattern=on 2pt off 2pt]  plot[domain=5.3054005551992285:6.213772507963347,variable=\t]({1.0*3.9478654465857654*cos(\t r)+-0.0*3.9478654465857654*sin(\t r)},{0.0*3.9478654465857654*cos(\t r)+1.0*3.9478654465857654*sin(\t r)});
\draw [shift={(2.2063077793614534,1.2738123903295004)},dash pattern=on 2pt off 2pt]  plot[domain=3.211005452806033:4.119377405570152,variable=\t]({1.0*3.9478654465857654*cos(\t r)+-0.0*3.9478654465857654*sin(\t r)},{0.0*3.9478654465857654*cos(\t r)+1.0*3.9478654465857654*sin(\t r)});

\begin{scriptsize}
\draw [fill=black,shift={(0,0)}] (0,0) ++(0 pt,2.25pt) -- ++(1.9485571585149868pt,-3.375pt)--++(-3.8971143170299736pt,0 pt) -- ++(1.9485571585149868pt,3.375pt);
\draw[color=black] (0.008914485316013111,0.2735828720567711) node {$v_0$};
\draw [fill=white,shift={(0,1)}] (0,0) ++(1.5 pt,1.5pt) -- ++(-3pt,0pt)--++(0pt,-3 pt) -- ++(3pt,0pt) -- ++(0pt,3pt);
\draw[color=black] (0.01894019178265059,1.256102105787242) node {$u_3$};
\draw [fill=black,shift={(-0.8660254037844386,0.5)}] (0,0) ++(1.5 pt,1.5pt) -- ++(-3pt,0pt)--++(0pt,-3 pt) -- ++(3pt,0pt) -- ++(0pt,3pt);
\draw[color=black] (-0.8633219772814474,0.7949196083219188) node {$u_0$};
\draw [fill=white,shift={(-0.8660254037844386,-0.5)}] (0,0) ++(1.5 pt,1.5pt) -- ++(-3pt,0pt)--++(0pt,-3 pt) -- ++(3pt,0pt) -- ++(0pt,3pt);
\draw[color=black] (-0.8633219772814474,-0.8184479388075878) node {$u_1$};
\draw [fill=black,shift={(0,-1.0)}] (0,0) ++(1.5 pt,1.5pt) -- ++(-3pt,0pt)--++(0pt,-3 pt) -- ++(3pt,0pt) -- ++(0pt,3pt);
\draw[color=black] (0.20937720171548772,-1.297332621394604) node {$v_1$};
\draw [fill=white,shift={(0.8660254037844386,-0.4999999999999999)}] (0,0) ++(1.5 pt,1.5pt) -- ++(-3pt,0pt)--++(0pt,-3 pt) -- ++(3pt,0pt) -- ++(0pt,3pt);
\draw[color=black] (0.8911766543801111,-0.8184479388075878) node {$v_2$};
\draw [fill=black,shift={(0.8660254037844387,0.5)}] (0,0) ++(1.5 pt,1.5pt) -- ++(-3pt,0pt)--++(0pt,-3 pt) -- ++(3pt,0pt) -- ++(0pt,3pt);
\draw[color=black] (0.8911766543801111,0.7949196083219188) node {$v_3$};
\draw [fill=white,shift={(-1.7320508075688772,1.0)}] (0,0) ++(1.5 pt,1.5pt) -- ++(-3pt,0pt)--++(0pt,-3 pt) -- ++(3pt,0pt) -- ++(0pt,3pt);
\draw [color=black] (-1.7320508075688772,1.0)-- ++(-1.0pt,0 pt) -- ++(2.0pt,0 pt) ++(-1.0pt,-1.0pt) -- ++(0 pt,2.0pt);
\draw[color=black] (-1.6252756687458956,1.2059735734540544) node {$w_0$};
\draw [fill=white,shift={(1.7320508075688774,0.9999999999999997)}] (0,0) ++(1.5 pt,1.5pt) -- ++(-3pt,0pt)--++(0pt,-3 pt) -- ++(3pt,0pt) -- ++(0pt,3pt);
\draw [color=black] (1.7320508075688774,0.9999999999999997)-- ++(-1.0pt,0 pt) -- ++(2.0pt,0 pt) ++(-1.0pt,-1.0pt) -- ++(0 pt,2.0pt);
\draw[color=black] (1.773695888110584,1.1758964540541421) node {$w_3$};
\draw [fill=white,shift={(0,-2.0000000000000004)}] (0,0) ++(0 pt,2.25pt) -- ++(1.9485571585149868pt,-3.375pt)--++(-3.8971143170299736pt,0 pt) -- ++(1.9485571585149868pt,3.375pt);
\draw[color=black] (0.20937720171548772,-2.3) node {$w_1$};
\end{scriptsize}
\end{tikzpicture}
\caption{\label{intermezmoi}Vertices $v_0$, $v_1$, $v_2$, $v_3$, $u_0$, $u_1$, $u_3$, $w_0$, $w_1$ and $w_3$.}
\end{center}
\end{figure}

Therefore $w_0$ and $w_3$ are $4^+$-vertices (by \ref{mb}), thus $w_1$ has degree $3$ (by \ref{ma}), and $v_2$, $u_1$ and $u_3$ have degree $4$ (by \ref{ma}) (see Figure~\ref{intermezmoi}). Let $y_0$ and $y_1$ the two neighbors of $w_1$ other than $v_1$. By Lemma~\ref{deg3deg4} they have a common neighbor other than $w_1$, say $t$. So by Lemmas~\ref{deg3deg5} and~\ref{2vertices3} in $w_1y_0ty_1$, $y_0$ and $y_1$ have degree $4$, and by Lemma~\ref{deg3deg4} each one is adjacent either to $v_2$ or to $u_1$. If they are both adjacent to the same one, say $v_2$ w.l.o.g., then either $v_2v_1w_1y_0$ or $v_2v_1w_1y_1$ is a $4$-cycle with a 3-vertex ($w_1$) opposite to a 4-vertex ($v_2$) that has both an edge going outside and one going inside of it, which is impossible by Lemma~\ref{4cycle}. W.l.o.g., say $y_0$ is adjacent to $v_2$ and $y_1$ is adjacent to $u_1$. At this point we know that $v_0$, $v_1$, $v_2$, $v_3$, $u_0$, $u_1$, $w_1$, $y_0$ and $y_1$ are distinct and do not share an edge that we do not already know. See Figure~\ref{intermbis} for a reminder of the edges and vertices we know to this point.

\begin{figure}[h]
\begin{center}
\begin{tikzpicture}[line cap=round,line join=round,>=triangle 45,x=1.0cm,y=1.0cm]
\clip(-2.0,-2.5) rectangle (2.0,1.7);
\draw (-0.8660254037844386,0.5)-- (0.0,-0.0);
\draw (0.0,-0.0)-- (0,-1.0);
\draw (0,-1.0)-- (0.8660254037844386,-0.4999999999999999);
\draw (0.8660254037844386,-0.4999999999999999)-- (0.8660254037844387,0.5);
\draw (0.8660254037844387,0.5)-- (0.0,1.0);
\draw (0.0,1.0)-- (-0.8660254037844386,0.5);
\draw (0.0,-0.0)-- (0.8660254037844387,0.5);
\draw (-0.8660254037844386,0.5)-- (-0.8660254037844386,-0.5);
\draw (-0.8660254037844386,-0.5)-- (0,-1.0);
\draw (-0.8660254037844386,0.5)-- (-1.7320508075688772,1.0);
\draw (0.8660254037844387,0.5)-- (1.7320508075688774,0.9999999999999997);
\draw (0,-1.0)-- (0,-2.0000000000000004);
\draw [shift={(0,-2.543640645068352)}] plot[domain=1.116167186841501:2.0254254667482923,variable=\t]({1.0*3.9442856161008986*cos(\t r)+-0.0*3.9442856161008986*sin(\t r)},{0.0*3.9442856161008986*cos(\t r)+1.0*3.9442856161008986*sin(\t r)});
\draw (-0.8660254037844386,-0.5)-- (-0.8660254037844386,-1.5000000000000002);
\draw (0,-2.0000000000000004)-- (-0.8660254037844386,-1.5000000000000002);
\draw (0,-2.0000000000000004)-- (0.866025403784439,-1.5);
\draw (0.866025403784439,-1.5)-- (0.8660254037844386,-0.4999999999999999);
\begin{scriptsize}
\draw [fill=black,shift={(0,0)}] (0,0) ++(0 pt,2.25pt) -- ++(1.9485571585149868pt,-3.375pt)--++(-3.8971143170299736pt,0 pt) -- ++(1.9485571585149868pt,3.375pt);
\draw[color=black] (0.008914485316013111,0.2735828720567711) node {$v_0$};
\draw [fill=white,shift={(0,1)}] (0,0) ++(1.5 pt,1.5pt) -- ++(-3pt,0pt)--++(0pt,-3 pt) -- ++(3pt,0pt) -- ++(0pt,3pt);
\draw[color=black] (0.01894019178265059,1.256102105787242) node {$u_3$};
\draw [fill=black,shift={(-0.8660254037844386,0.5)}] (0,0) ++(1.5 pt,1.5pt) -- ++(-3pt,0pt)--++(0pt,-3 pt) -- ++(3pt,0pt) -- ++(0pt,3pt);
\draw[color=black] (-0.8633219772814474,0.7949196083219188) node {$u_0$};
\draw [fill=white,shift={(-0.8660254037844386,-0.5)}] (0,0) ++(1.5 pt,1.5pt) -- ++(-3pt,0pt)--++(0pt,-3 pt) -- ++(3pt,0pt) -- ++(0pt,3pt);
\draw[color=black] (-1.1047540490234774,-0.6184479388075878) node {$u_1$};
\draw [fill=black,shift={(0,-1.0)}] (0,0) ++(1.5 pt,1.5pt) -- ++(-3pt,0pt)--++(0pt,-3 pt) -- ++(3pt,0pt) -- ++(0pt,3pt);
\draw[color=black] (0.20937720171548772,-1.297332621394604) node {$v_1$};
\draw [fill=white,shift={(0.8660254037844386,-0.4999999999999999)}] (0,0) ++(1.5 pt,1.5pt) -- ++(-3pt,0pt)--++(0pt,-3 pt) -- ++(3pt,0pt) -- ++(0pt,3pt);
\draw[color=black] (1.104702635808183,-0.6184479388075878) node {$v_2$};
\draw [fill=black,shift={(0.8660254037844387,0.5)}] (0,0) ++(1.5 pt,1.5pt) -- ++(-3pt,0pt)--++(0pt,-3 pt) -- ++(3pt,0pt) -- ++(0pt,3pt);
\draw[color=black] (0.8911766543801111,0.7949196083219188) node {$v_3$};
\draw [fill=white,shift={(-1.7320508075688772,1.0)}] (0,0) ++(1.5 pt,1.5pt) -- ++(-3pt,0pt)--++(0pt,-3 pt) -- ++(3pt,0pt) -- ++(0pt,3pt);
\draw [color=black] (-1.7320508075688772,1.0)-- ++(-1.0pt,0 pt) -- ++(2.0pt,0 pt) ++(-1.0pt,-1.0pt) -- ++(0 pt,2.0pt);
\draw[color=black] (-1.6252756687458956,1.2059735734540544) node {$w_0$};
\draw [fill=white,shift={(1.7320508075688774,0.9999999999999997)}] (0,0) ++(1.5 pt,1.5pt) -- ++(-3pt,0pt)--++(0pt,-3 pt) -- ++(3pt,0pt) -- ++(0pt,3pt);
\draw [color=black] (1.7320508075688774,0.9999999999999997)-- ++(-1.0pt,0 pt) -- ++(2.0pt,0 pt) ++(-1.0pt,-1.0pt) -- ++(0 pt,2.0pt);
\draw[color=black] (1.773695888110584,1.1758964540541421) node {$w_3$};
\draw [fill=black,shift={(0,-2.0000000000000004)}] (0,0) ++(0 pt,2.25pt) -- ++(1.9485571585149868pt,-3.375pt)--++(-3.8971143170299736pt,0 pt) -- ++(1.9485571585149868pt,3.375pt);
\draw[color=black] (0.20937720171548772,-2.3) node {$w_1$};
\draw [fill=white,shift={(-0.8660254037844386,-1.5000000000000002)}] (0,0) ++(1.5 pt,1.5pt) -- ++(-3pt,0pt)--++(0pt,-3 pt) -- ++(3pt,0pt) -- ++(0pt,3pt);
\draw[color=black] (-1.1047540490234774,-1.534922403779555) node {$y_1$};
\draw [fill=white,shift={(0.866025403784439,-1.5)}] (0,0) ++(1.5 pt,1.5pt) -- ++(-3pt,0pt)--++(0pt,-3 pt) -- ++(3pt,0pt) -- ++(0pt,3pt);
\draw[color=black] (1.104702635808183,-1.5907406450248789) node {$y_0$};
\end{scriptsize}
\end{tikzpicture}
\caption{\label{intermbis}Vertices $v_0$, $v_1$, $v_2$, $v_3$, $u_0$, $u_1$, $u_3$, $w_0$, $w_1$, $w_3$, $y_0$ and $y_1$.}
\end{center}
\end{figure}

Let $z$ be the neighbor of $v_2$ different from $v_1$, $v_3$ and $y_0$. The only edges that may or not be among $v_0$, $v_1$, $v_2$, $v_3$, $u_0$, $u_1$, $w_1$, $y_0$, $y_1$ and $z$ are $zy_1$ and $zu_1$, and as $G$ is triangle-free, there is at most one of those edges. Let $H^* = G - \{v_0, v_1, v_2, v_3, u_0, u_1, w_1, y_0, y_1, z\}$. Graph $H^*$ has $n - 10$ vertices and $m' \le m - 23$ edges (recall that $u_1$ cannot be adjacent both to $y_0$ and $y_1$, and thus is not adjacent to $y_0$). Adding to any induced forest of $H^*$ the vertices $v_0$, $v_1$, $v_2$, $u_1$ and $w_1$ leads to an induced forest of $G$, so Observation~\ref{abg} applied to $(\alpha,\beta,\gamma) = (10,23,5)$ leads to a contradiction, completing the proof.
\end{proof}

\begin{lemm} \label{33333}
There is no $5$-face with only $3$-vertices in $G$.
\end{lemm}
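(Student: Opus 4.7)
Suppose for contradiction that $C=v_0v_1v_2v_3v_4$ is a 5-face of $G$ whose five vertices are all 3-vertices, and let $u_i$ denote the third neighbor of $v_i$ (indices mod~$5$).

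The first step of the plan is to derive strong structural constraints on the $u_i$'s. Specifically, I would show: (i) the $u_i$'s are pairwise distinct---$u_i\notin\{v_j\}$ follows from triangle-freeness, $u_i\ne u_{i+1}$ is immediate (else there is a triangle), and $u_i\ne u_{i+2}$ because otherwise $v_iv_{i+1}v_{i+2}u_i$ would be a 4-cycle containing two 3-vertices, contradicting Lemma~\ref{2vertices3}; (ii) $u_iu_{i+1}\notin E$ for all $i$, by the same lemma applied to the 4-cycle $v_iv_{i+1}u_{i+1}u_i$; and (iii) every $u_i$ is a 3-vertex---if $u_i$ were a $4^+$-vertex, Lemma~\ref{deg3deg4} applied to $v_i$ would force $v_{i-1}$ and $v_{i+1}$ to share a common neighbor distinct from $v_i$, which is impossible since $N(v_{i-1})\cap N(v_{i+1})=\{v_i\}$ by (i).

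Next I would let $H^*=G-V(C)$, which has $n-5$ vertices and $m-10$ edges, and take an induced forest $F'$ of $H^*$ of size at least $a(n-5)-b(m-10)$ provided by the minimality of $G$. The goal is to extend $F'$ to an induced forest of $G$ by adding three vertices of $C$; by Observation~\ref{abg} with the triple $(\alpha,\beta,\gamma)=(5,9,3)$ from Table~\ref{abgtab}, this suffices to reach a contradiction. The key combinatorial observation is that for a 3-subset $S\subset V(C)$, the graph $G[F'\cup S]$ fails to be a forest if and only if there exist $i\ne j$ with $v_i,v_j\in S$ lying in the same component of $G[S]$ and $u_i,u_j\in F'$ lying in the same tree of $F'$. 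An explicit enumeration of the ten 3-subsets of $C$ shows that such an $S$ exists for almost every tree-partition of $F'\cap\{u_0,\ldots,u_4\}$; the only bad scenario is when all five $u_i$'s belong to $F'$ and lie in a single common tree.

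The main obstacle is precisely this bad scenario. To overcome it I would switch, in that case, to the auxiliary graph $H^{**}=G-V(C)-\{u_0\}$ which, by~(iii), has $n-6$ vertices and $m-12$ edges. The triple $(\alpha,\beta,\gamma)=(6,12,4)$ is valid for Observation~\ref{abg} since one checks that $6a-12b\le\frac{6}{11}\bigl((3a-10b)+(8a-12b)\bigr)\le\frac{36}{11}<4$ using inequalities~(\ref{3a10b}) and~(\ref{8a12b}). I would then seek an induced forest of $G$ of size $|F''|+4$ for every induced forest $F''$ of $H^{**}$ of the promised size. Once $u_0$ is removed, the offending common tree from the previous paragraph either disappears or splits, and by essentially the same case analysis one can always add four vertices from $V(C)\cup\{u_0\}$---either four $v_i$'s forming a 4-path or three $v_i$'s together with $u_0$---without creating a cycle, using also that each of $u_0$'s two non-$v_0$ neighbors is a 3-vertex (obtained by another application of Lemma~\ref{deg3deg4} to $u_0$, exactly as in step~(iii)). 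The delicate part of the argument is carrying out this secondary case analysis in full to confirm that one of the two extensions always works, regardless of the detailed tree structure of $F''$ around the $u_i$'s.
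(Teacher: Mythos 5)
Your structural preliminaries are correct and even slightly stronger than what the paper actually needs: the $u_i$ are distinct, non-consecutive $u_iu_{i+1}\notin E$, and (via Lemma~\ref{deg3deg4} applied to each $v_i$) every $u_i$ is a $3$-vertex. Your analysis of the primary reduction $H^*=G-V(C)$ with $(\alpha,\beta,\gamma)=(5,9,3)$ is also correct, including the identification of the unique bad scenario: all five $u_i$ in $F'$ and in a single tree. The trouble is the secondary reduction, which does not close the gap.

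Concretely, take $H^{**}=G-V(C)-\{u_0\}$ and let $a,b$ be the two non-$v_0$ neighbours of $u_0$. Nothing in the structure you have established prevents an induced forest $F''$ of $H^{**}$ of the required size from having $u_1,u_2,u_3,u_4,a,b$ all inside a single tree. In that situation (1) $u_0$ cannot be added, because the two edges $u_0a$ and $u_0b$ both land in the same tree and immediately close a cycle; and (2) no two adjacent $v_i,v_j$ with $i,j\neq 0$ can both be added, because $u_i,u_j$ already lie in one tree. The best one can do is $v_0$ together with a non-adjacent pair among $\{v_1,v_2,v_3,v_4\}$, which is only $3$ vertices, whereas your triple $(6,12,4)$ requires $\gamma=4$. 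Your sentence ``once $u_0$ is removed, the offending common tree from the previous paragraph either disappears or splits'' conflates $F'$ with $F''$: the minimality of $G$ hands you a fresh, arbitrary forest of $H^{**}$, not a forest obtained from $F'$ by deleting $u_0$, and there is no reason it should be nicer. (Even if you did insist on using $F'\setminus\{u_0\}$, the accounting then needs $5a-9b\le 2$, which is false at $(b,a)=(\tfrac{7}{44},\tfrac{38}{44})$.)

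The paper sidesteps this entirely by a different construction that makes the bad scenario structurally impossible rather than case-analysing around it. It sets $H^*=G+x+y+\{xu_0,xu_1,yu_2,yu_3,xy\}-C$, a triangle-free planar graph (using $u_iu_{i+1}\notin E$) with $n-3$ vertices and $m-5$ edges, and uses $(\alpha,\beta,\gamma)=(3,5,2)$. Given any induced forest $F'$ of $H^*$, one removes $x,y$ and adds $v_0$ and $v_3$, plus $v_1$ if $x\in F'$ and $v_2$ if $y\in F'$; this gains exactly $2$ vertices in every case. The gadget does the work for you: if $x\in F'$ then $u_0$ and $u_1$ cannot be joined in $F'\setminus\{x,y\}$ (otherwise $u_0-x-u_1$ would close a cycle in $F'$), so $v_0,v_1$ may be added safely, and the edge $xy$ similarly blocks every dangerous $u_i$–$u_j$ path when $v_0,\dots,v_3$ are all added. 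In effect, the auxiliary vertices encode the choice ``include $v_1$ or not'' and ``include $v_2$ or not'' directly into the reduced graph, so minimality itself picks the favourable option and no secondary analysis is needed. Your plan would require either a full and rather delicate secondary enumeration (which, as shown, fails in at least one configuration) or an argument ruling that configuration out; as written it is incomplete.
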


\begin{proof}
Let $C = v_0v_1v_2v_3v_4$ be such a face, and $u_0$, $u_1$, $u_2$, $u_3$, and $u_4$ be the third neighbors of $v_0$, $v_1$, $v_2$, $v_3$, and $v_4$ respectively. The $u_i$ are all distinct due to the girth assumption and Lemma~\ref{sepcycle}. We will consider the indices of the $u_i$ and $v_i$ modulo $5$. There is no edge $u_iu_{i+1}$ for any $i$ due to Lemma~\ref{2vertices3}. Let $H^* = G + x + y + \{xu_0,xu_1,yu_2,yu_3,xy\} - C$. Graph $H^*$ has $n-3$ vertices and $m-5$ edges. Let $F'$ be any induced forest of $H^*$. Let $F$ be the subgraph of $G$ induced by the vertices of $V(F')\backslash\{x,y\}$, plus the vertices $v_0$ and $v_3$, plus $v_1$ if $x \in V(F')$, and plus $v_3$ if $y \in V(F')$. Subgraph $F$ is an induced forest of $G$. Thus Observation~\ref{abg} applied to $(\alpha,\beta,\gamma) = (3,5,2)$ leads to a contradiction completing the proof.
\end{proof}

\begin{lemm} \label{deg3deg3deg4}
There is no 3-vertex adjacent to a $3$-vertex and to a $4$-vertex in $G$.
\end{lemm}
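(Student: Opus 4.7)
The plan is straightforward and leverages two earlier lemmas. Let $v$ be a $3$-vertex adjacent to a $3$-vertex $u$ and a $4$-vertex $w$, and let $x$ denote the third neighbor of $v$. I want to produce a forbidden $4$-cycle in order to invoke Lemma~\ref{2vertices3}.

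First, I apply Lemma~\ref{deg3deg4} to $v$ and its $4^+$-neighbor $w$: the two other neighbors of $v$, namely $u$ and $x$, must share a common neighbor $y$ different from $v$. This immediately yields the $4$-cycle $v\,u\,y\,x$, since the edges $vu$, $uy$, $yx$, $xv$ all lie in $G$. Moreover $u \ne x$ (otherwise $G$ has a multi-edge contradicting simplicity and the girth hypothesis), and the cycle has length exactly $4$ because $G$ is triangle-free.

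Now on this $4$-cycle, both $v$ and $u$ are $3$-vertices, so it contains at least two $3$-vertices. This directly contradicts Lemma~\ref{2vertices3}, which asserts that no such $4$-cycle exists in $G$. Hence the configuration cannot occur, and the lemma follows.

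There is essentially no obstacle here beyond spotting that the hypothesis of Lemma~\ref{deg3deg4} is satisfied and that the resulting common neighbor closes a $4$-cycle containing two of the prescribed $3$-vertices; the work has already been done in Lemmas~\ref{deg3deg4} and~\ref{2vertices3} (the latter being the consolidated corollary of Lemmas~\ref{4cycle}, \ref{3333}, \ref{sepcycle}, and \ref{3344}). No new reducible configuration or discharging is needed for this step.
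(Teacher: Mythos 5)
Your proof is correct and follows exactly the paper's argument: apply Lemma~\ref{deg3deg4} to the $3$-vertex $v$ and its $4$-vertex neighbor $w$ to obtain a common neighbor of $u$ and $x$, yielding a $4$-cycle through $v$ and $u$, which contradicts Lemma~\ref{2vertices3}. The extra remarks about $u\ne x$ and the cycle having length exactly $4$ are fine but not strictly needed since both are immediate from $G$ being simple and triangle-free.
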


\begin{proof} 
Let $v$ be a 3-vertex adjacent to a $3$-vertex $u$ and to a $4$-vertex $w$. Let $x$ be the third neighbor of $v$. By Lemma~\ref{deg3deg4}, $x$ and $u$ have a common neighbor distinct from $v$ which contradicts Lemma~\ref{2vertices3}.
\end{proof}

For every face $f$ of $G$, let $l(f)$ be the length of $f$, and let $c_{4^+}(f)$ be the number of $4^+$-vertices in $f$. For every vertex $v$, let $d(v)$ be the degree of $v$. Let $k$ be the number of faces of $G$, and for every $3 \le d \le 5$ and every $4 \le l$, let $k_l$ be the number of faces of length $l$ and $n_d$ the number of $d$-vertices in $G$.

Each $4$-vertex is in the boundary of at most four faces, and each $5$-vertex is in the boundary of at most five faces. Therefore the sum of the $c_{4^+}(f)$ over all the $4$-faces and $5$-faces is $\sum_{f,4 \le l(f) \le 5} c_{4^+}(f) \le 4n_4+5n_5$.
From Lemmas~\ref{deg3deg5},~\ref{33333} and~\ref{deg3deg3deg4} we can deduce that for each $5$-face $f$ we have $c_{4^+}(f) \ge 2$. Moreover, by Lemmas~\ref{2vertices3} and~\ref{3444}, for each $4$-face $f$, $c_{4^+}(f) \ge 4$. Thus $\sum_{f,l(f) = 4} c_{4^+}(f) + \sum_{f,l(f) = 5} c_{4^+}(f) \ge 4k_4 + 2k_5$. Thus we have the following:

$$ 4n_4+5n_5 \ge 4k_4 + 2k_5$$
By Euler's formula, we have:

\begin{eqnarray*} 
-12 & = & 6m - 6n - 6k  \\
& = & 2\sum_{v \in V(G)}d(v) + \sum_{f \in F(G)}l(f) - 6n - 6k  \\
& = & \sum_{d \ge 3}(2d - 6)n_d + \sum_{l \ge 4}(l-6)k_l  \\
& \ge & 2n_4+4n_5 - 2k_4 - k_5  \\
& \ge & 0 \\
\end{eqnarray*}
This is a contradiction, which ends the proof of Theorem~\ref{genmain}.

\section{Proof of Theorem~\ref{bmain}}\label{bproofmain}

The proof of Theorem~\ref{bmain} follows the same scheme as that of Theorem~\ref{main}. 
We will prove the following more general statement than Theorem~\ref{bmain}:

\begin{theo} \label{bgenmain}
If $a$ and $b$ are positive constants such that equations (\ref{ba})--(\ref{b5a9b}) are verified, then $a(G) \ge an-bm$ for all $G \in {\cal P}_5$.
\end{theo}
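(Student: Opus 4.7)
The plan is to mirror the proof of Theorem~\ref{genmain} exactly. I would let $G=(V,E)$ be a counter-example to Theorem~\ref{bgenmain} of minimum order, with $n=|V|$ and $m=|E|$, and rely on an observation analogous to Observation~\ref{abg}: for a triple $(\alpha,\beta,\gamma)$ of nonnegative integers with $a\alpha-b\beta\le\gamma$, if I can produce a graph $H^\ast\in{\cal P}_5$ with $|V(H^\ast)|=n-\alpha$ and $|E(H^\ast)|\le m-\beta$, together with a way to extend any induced forest of $H^\ast$ to an induced forest of $G$ by adding at least $\gamma$ vertices, then minimality of $G$ yields a contradiction. I would therefore begin by compiling a table of admissible triples derived from the constraints (\ref{ba})--(\ref{b5a9b}), paralleling Table~\ref{abgtab}.

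Next come the structural lemmas. The easy ones carry over almost verbatim: $G$ is 2-edge-connected (same argument as Lemma~\ref{2co}), has no vertex of small degree beyond what is needed (Lemmas of the flavor of~\ref{degle5}--\ref{degge3}), and no adjacent low-degree vertices in certain configurations. The girth hypothesis now makes several of the deletion-and-addition arguments cleaner, because adding an edge $xy$ between two vertices at distance~$2$ never creates a cycle shorter than~$5$ when $x$ and $y$ have no common neighbor other than the removed vertex (the argument of Lemma~\ref{deg3deg4} generalizes, with a bigger $\beta$ gain since each removed vertex is worth at least three edges rather than two; but we must check distance~$\le 3$ to avoid creating a $4$-cycle). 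I would then prove constraints on the structure of $5$-faces (the analogues of Lemmas~\ref{33333}, \ref{3333}, \ref{3344}, \ref{3444}): roughly, a $5$-face cannot have too many $3$-vertices, and if several of its vertices are $3$-vertices they must be placed in restricted ways. Separating $5$-cycles with all vertices of small degree would similarly be forbidden.

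Finally I would run the discharging / Euler argument. Writing $n_d$ for the number of $d$-vertices, $k_l$ for the number of $l$-faces, and using $-20=10m-10n-10k=\sum_{d\ge 3}(5d-10)n_d+\sum_{l\ge 5}(2l-10)k_l$, the inequality to derive from the structural lemmas is that the right-hand side is nonnegative. Concretely, each $5$-face $f$ must contribute enough $4^+$-vertices to its boundary (say $c_{4^+}(f)\ge$ some constant), and this contribution is paid for by the $(5d-10)n_d$ terms with $d\ge 4$, using that a $4$-vertex lies on at most four faces and a $5$-vertex on at most five. The precise inequality required of the structural lemmas is whatever makes
\[
\sum_{d\ge 4}(5d-10)n_d+\sum_{l\ge 6}(2l-10)k_l\;\ge\;(\text{minimum required contribution to }5\text{-faces})
\]
hold; I would reverse-engineer the lemma statements to meet this bound, just as in the proof of Theorem~\ref{genmain}.

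The main obstacle, I expect, is exactly the tuning of these structural lemmas. Some of the cases in the proof of Theorem~\ref{genmain} (especially the long case analysis in Lemma~\ref{3444}) were delicate and required chaining several previous lemmas together along a specific local configuration. In the girth-$5$ setting the same kind of chained analysis will be needed on $5$-faces whose vertices are mostly of degree $3$, and the hard part will be showing that the second neighborhood of such a face either contains enough $4^+$-vertices or else enough edges that deleting a carefully chosen subset lets us extend the induced forest from $H^\ast$ to $G$ via an admissible triple $(\alpha,\beta,\gamma)$. Choosing the right subset to delete, and the right extension, is where the geometry of planarity must be used in full (to argue that certain vertices are separated by a short cycle), and this is where I anticipate the bulk of the work.
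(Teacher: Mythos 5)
Your plan correctly mirrors the paper's strategy: minimal counterexample, an Observation~\ref{abg}-style reduction via triples $(\alpha,\beta,\gamma)$ with a table of admissible values, a chain of reducibility lemmas, and a closing Euler count. But the Euler identity you write down is algebraically wrong: since $\sum_d d\,n_d=2m$ and $\sum_l l\,k_l=2m$, your decomposition $\sum_{d\ge 3}(5d-10)n_d+\sum_{l\ge 5}(2l-10)k_l$ evaluates to $14m-10n-10k$, not $10m-10n-10k$. The paper simply reuses the same coefficients as in the girth-$4$ case, namely $-12=\sum_{d\ge 3}(2d-6)n_d+\sum_{l\ge 5}(l-6)k_l$, and combines it with two ingredients: every vertex of a minimal counterexample has degree at most $4$ (an easy lemma your outline overlooks --- your discharging appeals to $5$-vertices, which cannot occur), and every $5$-face has at least two $4$-vertices. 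Those give $-12\ge 2n_4-k_5\ge 0$ since $\sum_{f:\,l(f)=5}c_4(f)\le 4n_4$ and $\sum_{f:\,l(f)=5}c_4(f)\ge 2k_5$, a contradiction.

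The more serious issue is that you explicitly defer ``the bulk of the work'' to proving the structural constraints on $5$-faces whose boundary is mostly degree-$3$, and that deferred work is precisely where the paper's proof lives. To reach $c_4(f)\ge 2$ for every $5$-face, the paper must (i) rule out separating $5$-cycles with at most one $4$-vertex, (ii) prove a pivotal lemma that on a $5$-face with all degree-$3$ vertices and third-neighbors $u_0,\dots,u_4$ there must exist a vertex adjacent to $u_0,u_1$ or to $u_2,u_3$, and (iii) run a four-case analysis on the second neighborhood of such a face, chaining these lemmas and culminating each branch in an admissible triple such as $(11,23,6)$, $(12,23,7)$, $(13,23,8)$. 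Your outline correctly flags that this is where planarity and separating short cycles must be exploited, but it neither states these lemmas nor selects the triples nor does the case analysis --- so what you have is an accurate sketch of the proof's architecture rather than a proof.
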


\begin{eqnarray}
0 \le a \le 1 \label{ba}\\
0 \le b \\
a - 5b \le 0 \label{ba5b}\\
11a - 23b \le 6 \label{b5a9b}
\end{eqnarray}

\begin{figure}[h]
\begin{center}
\definecolor{cqcqcq}{rgb}{0.4,0.4,0.4}
\begin{tikzpicture}[line cap=round,line join=round,>=triangle 45,x=30.0cm,y=30.0cm]
\clip(0.0,0.82) rectangle (0.3,1.04);
\fill[color=cqcqcq,fill=cqcqcq,fill opacity=0.1] (0.0,-0.0) -- (0.1875,0.9375) -- (0.2174,1.0) -- (1.0,1.0) -- (1.0,0.0) -- cycle;
\draw [->] (0.028742864621177887,0.863654393991357) -- (0.028742864621177887,0.9302741455213643);
\draw [->] (0.028742864621177887,0.863654393991357) -- (0.09536261615118535,0.863654393991357);
\draw (-0.2015721282788287,0.5362016543476625) node[anchor=north west] {$a$};
\draw (-0.14323197552215577,0.334298968827021) node[anchor=north west] {$b$};
\draw [domain=0.0:0.3] plot(\x,{(-1.0-0.0*\x)/-1.0});
\draw [domain=0.0:0.3] plot(\x,{(--0.0-1.0*\x)/-0.2});
\draw (0.09142794831552796,0.8613358574204568) node[anchor=north west] {$b$};
\draw (0.014169521995545215,0.9427543528499769) node[anchor=north west] {$a$};
\draw (0.19542967605396627,0.9409714660887465) node[anchor=north west] {$(\frac{3}{16},\frac{15}{16})$};
\draw (0.22,0.995) node[anchor=north west] {$(\frac{5}{23}, 1)$};
\draw (0.0213010690404667,1.028) node[anchor=north west] {\small{$a = 1$}};
\draw (0.18,0.8726274735749158) node[anchor=north west] {\small{$a = 5b$}};
\draw (0.08,0.91) node[anchor=north west] {\small{$11a - 23b = 6$}};
\draw [domain=0.0:0.3] plot(\x,{(-0.016312500000000008-0.0625*\x)/-0.02990000000000001});
\begin{scriptsize}
\draw [fill=black] (0.0,-0.0) circle (1.5pt);
\draw [fill=black] (1.0,0.0) circle (1.5pt);
\draw [fill=black] (0.1875,0.9375) circle (1.5pt);
\draw [fill=black] (1.0,1.0) circle (1.5pt);
\draw [fill=black] (0.2174,1.0) circle (1.5pt);
\end{scriptsize}
\end{tikzpicture}
\end{center}
        \caption{\label{polybis}The top-left part of the polygon of the constraints on $a$ and $b$.}
\end{figure}
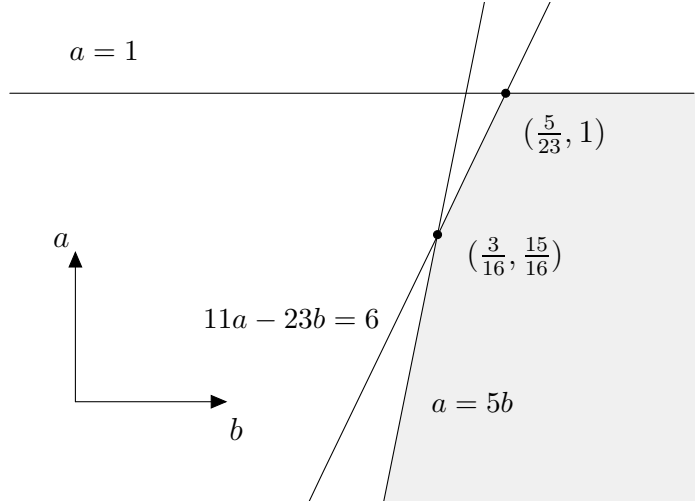

This series of inequalities defines a polygon represented in Figure~\ref{polybis}, and for a graph in ${\cal P}_5$ of given order $n$ and size $m$, the highest lower bound will be given by maximizing $an-bm$ for $a$ and $b$ in this polygon. This maximum will be achieved at a vertex of the polygon. Moreover, by Euler's formula, every planar graph of girth at least $5$, order $n \ge 4$ and size $m$ satisfies $0 \le m \le \frac{5n-10}{3}$. Then for $n \ge 4$ the maximum will always be achieved at the intersection of $11a - 23b = 6$ and $a = 1$. The corresponding intersection is $(b,a) = (\frac{5}{23},1)$, represented in Figure~\ref{polybis}.

Let $G = (V,E)$ be a counter-example to Theorem~\ref{bgenmain} of minimum order. Let $n = |V|$ and $m = |E|$.
We will use the scheme presented in Observation~\ref{babg} for most of our lemmas.

\begin{obs} \label{babg}

Let $\alpha$, $\beta$, $\gamma$ be integers satisfying $\alpha \ge 1$, $\beta \ge 0$, $\gamma \ge 0$ and $a\alpha-b\beta \le \gamma$.

Let $H^* \in {\cal P}_5$ be a graph with $|V(H^*)| = n - \alpha$ and $|E(H^*)| \le m - \beta$.

By minimality of $G$, $H^*$ admits an induced forest of order at least $a(n-\alpha) - b(m - \beta)$.

For all induced forest $F^*$ of $H^*$ of order at least $a(n-\alpha) - b(m - \beta)$, if there is an induced forest $F$ of $G$ of order at least $|V(F^*)| + \gamma$, then we get a contradiction: as $a\alpha - b\beta \le \gamma$, we have $|V(F)| \ge an - bm$.
\end{obs}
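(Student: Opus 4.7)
The observation is essentially a bookkeeping lemma that packages the minimality argument once and for all, so the proof is short and amounts to chaining two inequalities. The plan is to (i) invoke the minimality of $G$ to obtain a forest in $H^*$ of the asserted size, and (ii) perform the arithmetic verification that lifting such a forest to $G$ produces a contradiction.

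First I would note that since $\alpha \ge 1$, we have $|V(H^*)| = n - \alpha < n$. Because $H^* \in \mathcal{P}_5$ and $G$ is a counter-example to Theorem~\ref{bgenmain} of \emph{minimum} order, $H^*$ itself is not a counter-example. Consequently $H^*$ admits an induced forest of order at least $a|V(H^*)| - b|E(H^*)|$, and using $|E(H^*)| \le m - \beta$ together with $b \ge 0$ this is in turn at least $a(n-\alpha) - b(m-\beta)$. This establishes the existence of the forest $F^*$ of the claimed order.

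Next I would carry out the key arithmetic. Assume we have produced some induced forest $F$ of $G$ with $|V(F)| \ge |V(F^*)| + \gamma$. Combining with the previous bound gives
\[
|V(F)| \;\ge\; a(n-\alpha) - b(m-\beta) + \gamma \;=\; (an - bm) + \bigl(\gamma - a\alpha + b\beta\bigr).
\]
The hypothesis $a\alpha - b\beta \le \gamma$ is exactly the statement that the parenthesized remainder is nonnegative, so $|V(F)| \ge an - bm$. This contradicts the assumption that $G$ is a counter-example to Theorem~\ref{bgenmain}, finishing the proof.

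There is no real obstacle in proving the observation itself; it is a tautology given the minimality of $G$ and the fact that $H^* \in \mathcal{P}_5$. The genuine difficulty, which this observation is designed to hide, is that in each subsequent structural lemma one must exhibit an appropriate triple $(\alpha,\beta,\gamma)$, construct $H^*$ (typically by deleting a small configuration and possibly adding edges or a dummy vertex while preserving membership in $\mathcal{P}_5$), and give a rule to extend any induced forest $F^*$ of $H^*$ to an induced forest $F$ of $G$ with at least $\gamma$ additional vertices. The observation is what justifies, uniformly, that each such local reduction suffices.
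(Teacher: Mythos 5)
Your proof is correct and matches the paper's reasoning exactly: the paper embeds the same two steps (invoke minimality of $G$ to obtain a large forest in $H^*$, then the arithmetic chain using $b\ge 0$ and $a\alpha - b\beta \le \gamma$) directly in the statement of the observation without a separate proof environment. Your explicit remark that $\alpha \ge 1$ guarantees $|V(H^*)| < n$, so that minimality applies, is a correct and useful clarification of a point the paper leaves implicit.
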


Table~\ref{babgtab} contains the values of $(\alpha, \beta, \gamma)$ that will be used throughout this section. For each one, the inequality $a\alpha - b\beta \le \gamma$ is a consequence of the constraints (\ref{ba})--(\ref{b5a9b}).

\begin{table}[h]
\begin{center}
\begin{tabular}{|l|l|l|l|}
        \hline
        $\alpha$ & $\beta$ & $\gamma$ & proof \\
        \hline
        1 & 5 & 0 & (\ref{ba5b}) \\
        2 & 5 & 1 & (\ref{ba}) + (\ref{ba5b}) \\
        3 & 5 & 2 & 2(\ref{ba}) + (\ref{ba5b}) \\
        5 & 10 & 3 & 3(\ref{ba}) + 2(\ref{ba5b}) \\
        1 & 0 & 1 & (\ref{ba}) \\
        6 & 14 & 3 & ((\ref{ba5b}) + (\ref{b5a9b}))/2 \\
        6 & 10 & 4 & 4(\ref{ba}) + 2(\ref{ba5b}) \\
        7 & 14 & 4 & (\ref{ba}) + ((\ref{ba5b}) + (\ref{b5a9b}))/2 \\
        7 & 10 & 5 & 5(\ref{ba}) + 2(\ref{ba5b}) \\
        10 & 15 & 7 & 7(\ref{ba}) + 3(\ref{ba5b})\\
        8 & 14 & 5 & 2(\ref{ba}) + ((\ref{ba5b}) + (\ref{b5a9b}))/2 \\
        10 & 20 & 6 & 6(\ref{ba}) + 4(\ref{ba5b}) \\
        11 & 19 & 7 & 4(\ref{ba}) + (3(\ref{ba5b}) + (\ref{b5a9b}))/2 \\
        12 & 23 & 7 & (\ref{ba}) + (\ref{b5a9b}) \\
        8 & 19 & 4 & 2(\ref{ba}) + (3(\ref{ba5b}) + (\ref{b5a9b}))/2 \\
        9 & 15 & 6 & 6(\ref{ba}) + 3(\ref{ba5b}) \\
        11 & 23 & 6 & (\ref{b5a9b}) \\
        13 & 23 & 8 & 2(\ref{ba}) + (\ref{b5a9b}) \\
        \hline
\end{tabular}
   \caption{\label{babgtab} The various triples ($\alpha$,$\beta$,$\gamma$) and the combinations of inequalities which imply $a\alpha - b\beta \le \gamma$.}
\end{center}
\end{table}
\bigskip

We will now prove a series of lemmas on the structure of $G$.

\begin{lemm} \label{b2co}
Graph $G$ is 2-edge-connected.
\end{lemm}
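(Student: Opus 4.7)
The plan is to mimic the proof of Lemma~\ref{2co} almost verbatim: 2-edge-connectivity follows from a generic ``glue two smaller solutions'' argument that is insensitive to the girth condition, because both planarity and girth at least $5$ are preserved under vertex deletion, so any induced subgraph of $G$ lies again in $\mathcal{P}_5$.

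First I would suppose for contradiction that $G$ is not 2-edge-connected. Then $V(G)$ admits a partition into two nonempty parts $V_1, V_2$ with at most one edge of $G$ having one endpoint in each (taking $V_1, V_2$ to be the two sides of a disconnection, or the two sides obtained after deleting a bridge). Write $G_i = G[V_i]$, $n_i = |V_i|$, $m_i = |E(G_i)|$, so that $n = n_1 + n_2$ and $m \ge m_1 + m_2$. Each $G_i$ belongs to $\mathcal{P}_5$ and has strictly fewer vertices than $G$, so by the minimality of the counter-example $G$, each $G_i$ admits an induced forest $F_i$ with $|V(F_i)| \ge an_i - bm_i$.

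Next I would verify that $F := G[V(F_1) \cup V(F_2)]$ is an induced forest of $G$. Its edge set is $E(F_1) \cup E(F_2)$ plus possibly the unique cross edge (if any) between $V_1$ and $V_2$, when both its endpoints happen to lie in $V(F_1) \cup V(F_2)$. Since $F_1$ and $F_2$ are vertex-disjoint forests, adding a single bridging edge cannot close a cycle, so $F$ is still a forest. Counting,
\[
|V(F)| \;\ge\; (an_1 - bm_1) + (an_2 - bm_2) \;=\; a(n_1+n_2) - b(m_1+m_2) \;\ge\; an - bm,
\]
contradicting the assumption that $G$ is a counter-example to Theorem~\ref{bgenmain}.

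There is no real obstacle here; the only thing to double-check is the inequality $m \ge m_1 + m_2$ (which holds because the only edges of $G$ not counted in $m_1 + m_2$ are the at most one cross edge) and the hereditary nature of $\mathcal{P}_5$ under induced subgraphs, both of which are immediate.
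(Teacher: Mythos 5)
Your proof is correct and matches the paper's argument: the paper simply refers back to the proof of Lemma~\ref{2co}, which is the same ``split along a bridge or cut, apply minimality to each side, glue the two induced forests'' argument you give, with the (correct) observation that induced subgraphs remain in $\mathcal{P}_5$.
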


\begin{proof}
See the proof of Lemma~\ref{2co}.
\end{proof}

\begin{lemm} \label{bdegle4}
Every vertex in $G$ has degree at most $4$.
\end{lemm}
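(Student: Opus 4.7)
The proof will mirror the argument for Lemma~\ref{degle5} in the triangle-free setting, but now using the row $(\alpha,\beta,\gamma)=(1,5,0)$ of Table~\ref{babgtab} (which is justified by inequality (\ref{ba5b}), i.e.\ $a - 5b \le 0$) instead of $(1,6,0)$.

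Suppose for contradiction that $v\in V(G)$ is a $5^+$-vertex. I would set $H^\ast = G - v$. Then $H^\ast$ is still planar and still has girth at least $5$ (removing a vertex can only destroy cycles), so $H^\ast \in \mathcal{P}_5$. It has $n-1$ vertices and, because $v$ has degree at least $5$, at most $m-5$ edges. The plan is then to invoke Observation~\ref{babg} with $(\alpha,\beta,\gamma)=(1,5,0)$: by minimality of $G$, $H^\ast$ admits an induced forest $F^\ast$ of order at least $a(n-1) - b(m-5)$, and taking $F = F^\ast$ (no vertices added back) gives an induced forest of $G$ of order at least $a(n-1)-b(m-5) = an - bm - (a-5b) \ge an - bm$, the last inequality holding by (\ref{ba5b}). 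This contradicts the assumption that $G$ is a counter-example.

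There is essentially no obstacle here: once Observation~\ref{babg} and the table entry $(1,5,0)$ are in place, the argument is a one-line application. The only thing to check is that the girth condition is preserved under vertex deletion and that the planar graph $G - v$ indeed belongs to $\mathcal{P}_5$, which is immediate. The proof is therefore expected to be as short as the proofs of Lemmas~\ref{2co}, \ref{degle5}, and \ref{b2co}, relying entirely on the minimality of $G$ and the first entry of Table~\ref{babgtab}.
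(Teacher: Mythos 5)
Your proof is correct and follows exactly the paper's argument: delete the $5^+$-vertex $v$, observe $G-v\in\mathcal{P}_5$ with at most $m-5$ edges, and apply Observation~\ref{babg} with $(\alpha,\beta,\gamma)=(1,5,0)$ taking $F=F^*$. Nothing to add.
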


\begin{proof}
By contradiction, suppose $v \in V(G)$ has degree at least $5$. Observation~\ref{babg} applied to $H^* = G - v$, $(\alpha,\beta,\gamma) = (1,5,0)$ and $F = F'$ leads to a contradiction.
\end{proof}

\begin{lemm} \label{bdeg3deg4}
If $v$ is a $3$-vertex adjacent to a $4$-vertex $w$ in $G$, and if $x$ and $y$ are the two other neighbors of $v$, then there are two other vertices $x'$ and $y'$ such that $vxx'y'y$ is a cycle.
\end{lemm}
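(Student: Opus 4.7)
The approach is proof by contradiction: assume that no such $x', y'$ exist, and build an auxiliary graph $H^* \in {\cal P}_5$ to which Observation~\ref{babg} applies with the triple $(\alpha,\beta,\gamma)=(2,5,1)$ from Table~\ref{babgtab}. The construction mimics the triangle-free case (Lemma~\ref{deg3deg4}) but must be more careful about $4$-cycles because the girth is now $5$.

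First, since $G$ has girth at least $5$ and $v$ is adjacent to each of $x$, $y$, $w$, no two of $\{x,y,w\}$ can be adjacent (any such edge would close a triangle through $v$); in particular $x,y,w$ are three distinct vertices. I would then set
\[
H^* = G + xy - \{v, w\}
\]
and verify that $H^* \in {\cal P}_5$. Any cycle of length at most $4$ produced by the modification must use the new edge $xy$. A triangle would require a common neighbor $z$ of $x$ and $y$ in $G - \{v,w\}$; but then $xvyz$ would be a $4$-cycle of $G$, contradicting the girth hypothesis. A $4$-cycle would require a path $xx'y'y$ of length $3$ in $G - \{v,w\}$ with $x',y' \notin \{x,y\}$, and this is exactly the configuration whose non-existence is our contradiction hypothesis.

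Next I would count: removing $v$ deletes the three edges $vx, vy, vw$ and removing $w$ deletes three further edges (the fourth edge of $w$, namely $vw$, has already been counted), while adding $xy$ contributes one edge; so $|V(H^*)| = n-2$ and $|E(H^*)| = m-5$. By the minimality of $G$, $H^*$ admits an induced forest $F^*$ of order at least $a(n-2)-b(m-5)$. Setting $F = G[V(F^*) \cup \{v\}]$, I would check that $F$ is an induced forest of $G$: if $xy \in E(F^*)$ (which forces $x,y \in V(F^*)$), then $F$ is obtained from $F^*$ by subdividing the edge $xy$ into the path $xvy$; otherwise at most one of $vx, vy$ appears in $F$, so $v$ becomes a leaf or an isolated vertex. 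In either case $F$ remains a forest of order $|V(F^*)|+1$, and Observation~\ref{babg} with $(\alpha,\beta,\gamma)=(2,5,1)$ yields the contradiction $a(G) \ge an - bm$.

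The only genuinely delicate point is verifying that $H^* \in {\cal P}_5$, which is precisely where the contradiction hypothesis is used to preclude every potential new $4$-cycle through $xy$; once that is in place, the vertex/edge count and the forest-extension step are routine.
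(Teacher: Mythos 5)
Your proof is correct and takes essentially the same approach as the paper's: contradiction via the auxiliary graph $H^* = G + xy - \{v,w\}$, the girth verification you spell out (and which the paper states more tersely), the vertex/edge count, and Observation~\ref{babg} with $(\alpha,\beta,\gamma)=(2,5,1)$.
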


\begin{proof}
Suppose that there is no cycle as in the statement of the lemma. Let $H^* = G + xy - \{w,v\}$. Graph $H^*$ has $n-2$ vertices and $m' \le m - 5$ edges. As there are no $x'$ and $y'$ as in the lemma, adding the edge $xy$ does not create any $4^-$-cycle in $H^*$, and thus $H^*\in {\cal P}_5$. Let $F'$ be any induced forest of $H^*$. Adding $v$ to $F'$ leads to a forest of $G$. Observation~\ref{babg} applied to $(\alpha,\beta,\gamma) = (2,5,1)$ completes the proof.
\end{proof}

\begin{lemm} \label{bdeg2deg4}
There is no $2$-vertex adjacent to a $4$-vertex in $G$.
\end{lemm}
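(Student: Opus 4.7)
The plan is to mimic exactly the strategy used for the analogous Lemma~\ref{deg2deg4} in the triangle-free setting, via the reduction framework of Observation~\ref{babg}. I would argue by contradiction: assume there exists a $2$-vertex $v$ adjacent to a $4$-vertex $w$ in $G$, and consider the reduced graph $H^* = G - \{v,w\}$.

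First I would verify the parameter count. Removing $v$ and $w$ eliminates $2$ vertices; the edges removed are exactly the edges incident to $v$ or $w$, which number $d(v) + d(w) - 1 = 2 + 4 - 1 = 5$ (subtracting one for the shared edge $vw$). Hence $|V(H^*)| = n-2$ and $|E(H^*)| \le m-5$. Since $H^*$ is a subgraph of $G \in {\cal P}_5$, it also belongs to ${\cal P}_5$, so by minimality of $G$ it admits an induced forest $F^*$ of order at least $a(n-2) - b(m-5)$.

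Next I would extend $F^*$ to an induced forest of $G$. Let $u$ denote the second neighbor of $v$ (distinct from $w$). Consider $F = G[V(F^*) \cup \{v\}]$. In $G$ the only neighbors of $v$ are $u$ and $w$, and $w \notin V(H^*)$, so $v$ has at most one neighbor in $F$ (namely $u$, if $u \in V(F^*)$). A vertex of degree at most $1$ in $F$ cannot lie on a cycle, so $F$ is an induced forest of $G$ of order $|V(F^*)| + 1$.

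Finally I would invoke Observation~\ref{babg} with the triple $(\alpha,\beta,\gamma) = (2,5,1)$, which appears in Table~\ref{babgtab} as a consequence of $(\ref{ba}) + (\ref{ba5b})$. Since $|V(F)| \ge a(n-2) - b(m-5) + 1 \ge an - bm$, this contradicts the choice of $G$ as a counter-example and completes the proof. There is no real obstacle here — the argument is essentially identical to that of Lemma~\ref{deg2deg4}; the only thing one must check carefully is that $H^* \in {\cal P}_5$, which is immediate since removing vertices cannot decrease the girth.
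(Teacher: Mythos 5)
Your proof is correct and follows exactly the paper's approach: you delete $H^* = G - \{v,w\}$, count $5$ removed edges, add $v$ back to any induced forest of $H^*$, and invoke Observation~\ref{babg} with $(\alpha,\beta,\gamma) = (2,5,1)$. The only addition is the (correct) detail that $v$ has at most one neighbor in $F$ so it cannot create a cycle, which the paper leaves implicit.
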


\begin{proof}
Let $v$ be a $2$-vertex and $w$ a $4$-vertex adjacent to $v$. Let $H^* = G - \{v,w\}$. Graph $H^*$ has $n-2$ vertices and $m' = m-5$ edges. Let $F'$ be any induced forest of $H^*$. Adding $v$ to $F'$ leads to an induced forest of $G$. Observation~\ref{babg} applied to $(\alpha,\beta,\gamma) = (2,5,1)$ completes the proof.
\end{proof}

\begin{lemm} \label{bdeg2deg3deg2}
There is no $3$-vertex adjacent to two $2$-vertices in $G$.
\end{lemm}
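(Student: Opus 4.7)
The plan is to mimic the proof of Lemma~\ref{deg2deg3deg2} almost verbatim, simply substituting Observation~\ref{babg} for Observation~\ref{abg}. Assume by contradiction that there is such a configuration: let $v$ be a $3$-vertex whose two $2$-neighbors are $u$ and $w$, let $x$ be the third neighbor of $v$, and let $u'$ and $w'$ be the neighbors of $u$ and $w$ other than $v$.

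First I would record that $u,v,w,u',w'$ are five distinct vertices. Indeed, by Lemma~\ref{b2co} neither $u$ nor $w$ has degree $1$, so $u'$ and $w'$ exist. If $u'\in\{v,w\}$ or $w'\in\{u,v\}$, or if $u'=w'$, then $G$ would contain a cycle of length at most $4$ (a triangle through $u,v,w$ or a $4$-cycle $uvwu'=w'$), contradicting the girth-$5$ assumption. Consequently the five edges $uu',uv,vw,vx,ww'$ are pairwise distinct.

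Then I would set $H^*=G-\{u,v,w\}$. By the distinctness above, $H^*$ has exactly $n-3$ vertices and $m-5$ edges, and deleting vertices preserves planarity and does not decrease girth, so $H^*\in\mathcal{P}_5$. By minimality of $G$, pick any induced forest $F'$ of $H^*$ of order at least $a(n-3)-b(m-5)$. Finally, add $u$ and $w$ back: since $uw\notin E(G)$ (otherwise $uvw$ would be a triangle) and each of $u,w$ has at most one neighbor in $V(F')$ (namely $u'$ and $w'$), the induced subgraph $G[V(F')\cup\{u,w\}]$ attaches to $F'$ only two pendant vertices, and so is still a forest. This produces an induced forest $F$ of $G$ of order $|V(F')|+2$.

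Applying Observation~\ref{babg} with $(\alpha,\beta,\gamma)=(3,5,2)$ — whose admissibility $3a-5b\le 2$ follows from $2(\ref{ba})+(\ref{ba5b})$, as recorded in Table~\ref{babgtab} — gives $|V(F)|\ge an-bm$, contradicting that $G$ is a counter-example. There is no real obstacle: everything is routine and the only slightly delicate point is the edge count, which is exactly where the girth-$5$ hypothesis is used to guarantee that no two of the five edges incident to $\{u,v,w\}$ coincide.
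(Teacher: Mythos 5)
Your proof is correct and follows essentially the same route as the paper: delete $\{u,v,w\}$, count $n-3$ vertices and $m-5$ edges (using girth~$5$ to guarantee five distinct incident edges), observe that $u$ and $w$ reattach as pendants, and apply Observation~\ref{babg} with $(\alpha,\beta,\gamma)=(3,5,2)$. The extra justification you give for why the five edges are distinct is a detail the paper leaves implicit, but it is the same argument.
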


\begin{proof}
Let $v$ be a $3$-vertex adjacent to two $2$-vertices $u$ and $w$. Let $H^* = G - \{u,v,w\}$. Graph $H^*$ has $n-3$ vertices and $m' = m-5$ edges. Let $F'$ be any induced forest of $H^*$. Adding $u$ and $w$ to $F'$ leads to an induced forest of $G$. Observation~\ref{babg} applied to $(\alpha,\beta,\gamma) = (3,5,2)$ completes the proof.
\end{proof}

\begin{lemm} \label{bsepcyclefirst}
There is no separating $5$-cycles with only $3$-vertices in $G$.
\end{lemm}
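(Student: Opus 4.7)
My plan is to mimic the strategy of Lemma~\ref{sepcyclefirst} adapted to a $5$-cycle. Let $C=v_0v_1v_2v_3v_4$ be the separating $5$-cycle (indices modulo $5$), and let $u_i$ be the unique neighbour of $v_i$ outside $C$. Because $G$ has girth at least $5$, the five $u_i$ are distinct and no $u_i$ is adjacent to $v_j$ for $j\neq i$ (any such coincidence or chord would create a $3$- or $4$-cycle through $C$). Set $H^*:=G-V(C)$; then $|V(H^*)|=n-5$ and $|E(H^*)|=m-10$, and $H^*\in{\cal P}_5$. The triple $(\alpha,\beta,\gamma)=(5,10,3)$ appears in Table~\ref{babgtab}, so it will suffice to extend any induced forest of $H^*$ by three of the $v_i$ while keeping it an induced forest of $G$.

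Next I need a structural analysis of where the $u_i$ sit. Each $u_i$ lies in either the interior or the exterior of $C$. Since $C$ is separating, both regions are nonempty; and since $G$ is $2$-edge-connected by Lemma~\ref{b2co}, each region must be reached by at least two of the edges $v_iu_i$ (otherwise the unique such edge would be a bridge, because no edge can cross $C$ by planarity). Hence, up to swapping sides, exactly three $u_i$ are interior and two are exterior. The cyclic sequence of interior/exterior labels around $C$ is then, up to rotation, either AAABB or AABAB. In both patterns the triple $T=\{v_0,v_2,v_4\}$ works: among its members only $v_4$ and $v_0$ are consecutive on $C$, and in each of the two patterns $u_4$ and $u_0$ lie on opposite sides of $C$ (directly in AAABB; trivially in AABAB, where every consecutive pair except $v_0v_1$ has this property).

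Now I claim $F:=G[V(F^*)\cup T]$ is an induced forest, where $F^*$ is the induced forest of $H^*$ supplied by minimality. The only edges of $F$ that are not edges of $F^*$ are the cycle-edge $v_0v_4$ and possibly the tails $v_lu_l$ for $l\in T$ with $u_l\in V(F^*)$. The vertex $v_2$ has at most one neighbour in $F$ (namely $u_2$, if present), so it cannot lie on any cycle. Any other cycle of $F$ must therefore traverse $v_0v_4$ and close through an $F^*$-path from $u_0$ to $u_4$; but $u_0$ and $u_4$ are on opposite sides of the separating cycle $C$, so they lie in different connected components of $H^*=G-V(C)$, and no such path can exist. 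Applying Observation~\ref{babg} with $(\alpha,\beta,\gamma)=(5,10,3)$ gives $|V(F)|\ge an-bm$, contradicting the choice of $G$.

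The only non-routine step is the planarity/connectivity argument that forces a $3/2$ split of the $u_i$ and supplies a consecutive pair on $C$ whose external neighbours sit on opposite sides; once that is established, the forest check reduces to a one-line degree observation, and the arithmetic is taken care of by the table entry $(5,10,3)$.
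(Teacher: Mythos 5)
Your proof is correct and follows essentially the same route as the paper: remove $C$, add a triple of cycle vertices consisting of a consecutive pair $(v_4,v_0)$ whose external neighbours lie on opposite sides of $C$ plus a third vertex of low degree in the resulting subgraph, and invoke the table entry $(\alpha,\beta,\gamma)=(5,10,3)$. The only difference is that you spell out the $2$-edge-connectivity and necklace-pattern argument that the paper compresses into a ``w.l.o.g.'' choice of $v_0,v_1,v_3$.
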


\begin{proof}
Let $C = v_0v_1v_2v_3v_4$ be such a cycle. W.l.o.g. $v_0$ has his third neighbor in the interior of $C$ and $v_1$ in the exterior of it. Let $H^* = G - C$. Graph $H^*$ has $n-5$ vertices and $m' = m-10$ edges. Adding $v_0$, $v_1$ and $v_3$ to any induced forest of $H^*$ leads to an induced forest of $G$. Observation~\ref{babg} applied to $(\alpha,\beta,\gamma) = (5,10,3)$ leads to a contradiction.
\end{proof}

\begin{lemm} \label{bdegge3}
Every vertex in $G$ has degree at least $3$.
\end{lemm}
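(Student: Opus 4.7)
The plan is to follow the two-case scheme used for Lemma~\ref{degge3} in the triangle-free proof, adapted to girth~5, where the impossibility of $3$- and $4$-cycles dramatically simplifies the subcasework on common neighborhoods. Suppose for contradiction that $G$ contains a $2$-vertex $v$. By Lemma~\ref{b2co} (2-edge-connectedness), $v$ has two distinct neighbors, and by Lemma~\ref{bdeg2deg4} each of them has degree at most~$3$. If $v$ has a $2$-vertex neighbor $u$ and a $3$-vertex neighbor $w$, I would take $H^* = G - \{u, v, w\}$ (with $n - 3$ vertices and $m - 5$ edges) and add $u, v$ back to any induced forest of $H^*$, closing the case via Observation~\ref{babg} with $(3, 5, 2)$.

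The substantial case is when both neighbors $u, w$ of $v$ are $3$-vertices. The girth-5 assumption forces $u$ and $w$ to share no neighbor besides $v$; write $a, b$ for the remaining neighbors of $u$ and $c, d$ for those of $w$. I would split on the existence of a $5$-cycle through the path $uvw$, which occurs exactly when there is some edge between $\{a, b\}$ and $\{c, d\}$. If no such edge exists, take $H^* = G + uw - v$: no triangle can appear, since $u$ and $w$ share no neighbor in $G - v$; no $4$-cycle can appear, since this would require a $3$-path between $u$ and $w$ in $G - v$, i.e. an edge between $\{a, b\}$ and $\{c, d\}$. Thus $H^* \in {\cal P}_5$, and extending any of its induced forests by $v$ (the new edge $uw$ is merely subdivided) yields an induced forest of $G$, so Observation~\ref{babg} applied with $(1, 1, 1)$ concludes this subcase.

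The main obstacle is the remaining subcase in which there is a $5$-cycle $C = uvwca$ through the path $uvw$, with $ac \in E$ say. A direct removal of the vertices of $C$ only drops $9$ edges and gives a $(5, 9)$-reduction that is not sufficient against the polygon of constraints. To overcome this I would run a secondary case analysis on the degrees of $a, b, c, d$: Lemma~\ref{bdeg3deg4} applied at $u$ (respectively $w$) forces that whenever one of the neighbors of $u$ (respectively $w$) is a $4$-vertex, an additional edge between $\{a, b\}$ and $\{c, d\}$ must be present, which severely restricts the remaining configurations. Lemma~\ref{bsepcyclefirst} is then used to exclude separating embeddings of $C$ in the degenerate subcases. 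In each branch one extends the removed set to include some of $b, d$ and, if necessary, their second-neighbors, matching one of the larger triples of Table~\ref{babgtab} such as $(7, 14, 4)$, $(8, 19, 4)$, $(10, 23, 5)$, or $(13, 23, 8)$. This casework parallels the proof of Lemma~\ref{3444} in the triangle-free section and carries most of the technical weight of the argument.

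Once both cases are settled, every $2$-vertex of $G$ has only $2$-vertex neighbors. Since $G$ is connected (Lemma~\ref{b2co}), either $G$ has no $2$-vertex (contradicting the assumption) or $G$ is a cycle $C_n$. In the cycle case, girth~$5$ forces $n \ge 5$ and $m = n$, giving $a(G) = n - 1$. The polygon constraint $a \le 5b$ implies $a - b \le 4a/5 \le 4/5$, whence $an - bm = (a - b) n \le 4n/5 \le n - 1$ whenever $n \ge 5$, contradicting the assumption that $G$ is a counter-example and completing the proof.
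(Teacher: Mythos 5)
Your scaffolding is correct: the split into ``$v$ has a $2$-vertex and a $3$-vertex neighbor'' versus ``$v$ has two $3$-vertex neighbors'', the further split on existence of a $5$-cycle through $uvw$, and the cycle endgame (using $a\le 5b$, $a\le 1$ to get $5(a-b)\le 4$) all match the paper. The first case (triple $(3,5,2)$) and the no-$5$-cycle case are handled correctly, though the paper uses the triple $(1,0,1)$ there, not $(1,1,1)$, which is not even listed in Table~\ref{babgtab} (you appear to have carried it over from the girth-$4$ table; it still satisfies $a\alpha-b\beta\le\gamma$, but it's not the listed combination).

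The genuine gap is the case you label ``the substantial case'': there you do not actually carry out the argument, and the sketch you give does not match what is required. First, your statement that girth~$5$ ``dramatically simplifies the subcasework on common neighborhoods'' is backwards: the girth-$5$ version of this lemma is considerably longer and more delicate than the girth-$4$ one. Second, the claim that the casework ``parallels the proof of Lemma~\ref{3444}'' is misleading; the paper's argument here is specific to this configuration. What the paper actually does, once the $5$-cycle $uvwxy$ exists, is to split on the degrees of $x$ and $y$ (both of which are $3^+$-vertices by Lemma~\ref{bdeg2deg3deg2}): if one has degree $3$ and the other $4$ it uses $(5,10,3)$; if both have degree $4$ it applies Lemma~\ref{bdeg3deg4} to produce a second $5$-cycle $uvwx'y'$ and uses $(6,14,3)$; and if both have degree $3$ it introduces the third neighbors $u',w',x',y'$, eliminates $2$-vertices among them via $(6,10,4)$, eliminates $4$-vertices among $u',y'$ via $(7,14,4)$, proves $u'x'\notin E$ and $w'y'\notin E$ via Lemma~\ref{bsepcyclefirst} and $(7,10,5)$, forces the existence of auxiliary vertices $z$ and $z'$ adjacent to $\{u',y'\}$ and $\{x',y'\}$ (using $(3,5,2)$ on modified graphs), and finally closes with $(10,15,7)$ or $(8,14,5)$ depending on the degree of $z$. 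None of this appears in your sketch, and the triples you name as candidates --- $(8,19,4)$, $(10,23,5)$, $(13,23,8)$ --- are not the ones needed here; in fact $(10,23,5)$ does not even occur in the girth-$5$ table. As written, the proposal leaves the hardest part of the lemma unproved.
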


\begin{proof}
Let $v$ be a $2$-vertex in $G$. 

Suppose that $v$ is adjacent to a $2$-vertex $u$ and a $3$-vertex $w$. Let $H^* = G - \{u,v,w\}$. Graph $H^*$ has $n-3$ vertices and $m' = m - 5$ edges. Let $F'$ be any induced forest of $H^*$. Adding $u$ and $v$ to $F'$ leads to an induced forest of $G$. Observation~\ref{babg} applied to $(\alpha,\beta,\gamma) = (3,5,2)$ leads to a contradiction.

Suppose that $v$ is adjacent to two $3$-vertices $u$ and $w$. Consider two cases according to the presence or not of $5$-cycles containing $uvw$.

\begin{itemize}
        \item Suppose there is no $5$-cycle containing $uvw$. Let $H^* = G + uw - v$. Graph $H^*$ has $n-1$ vertices and $m-1$ edges. As there is no $5$-cycle containing $uvw$, adding the edge $uw$ does not create any cycle of length $3$ or $4$ in $H^*$, thus $H^* \in {\cal P}_5$. Let $F'$ be any induced forest of $H^*$. Adding $v$ to $F'$ leads to an induced forest of $G$. Observation~\ref{babg} applied to $(\alpha,\beta,\gamma) = (1,0,1)$ leads to a contradiction.

        \item Suppose there is a $5$-cycle containing $uvw$, say $uvwxy$. By Lemma~\ref{bdeg2deg3deg2}, both $x$ and $y$ are $3^+$-vertices. 

Suppose $x$ or $y$, say $x$, has degree $3$, and the other one has degree $4$. Let $H^* = G - \{u,v,w,x,y\}$. Graph $H^*$ has $n-5$ vertices and, since there is no chord in the $5$-cycle, $m' = m-10$ edges. Let $F'$ be any induced forest of $H^*$. Adding $u$, $v$ and $x$ to $F'$ leads to an induced forest of $G$. Observation~\ref{babg} applied to $(\alpha,\beta,\gamma) = (5,10,3)$ leads to a contradiction.

Suppose both $x$ and $y$ have degree $3$. Let $u'$, $w'$, $x'$, and $y'$ be the third neighbors of $u$, $w$, $x$ and $y$ respectively. They are all distinct by the girth assumption. By Lemma~\ref{bdeg2deg3deg2}, $u'$ and $w'$ are $3^+$-vertices. Suppose $x'$ or $y'$, say $x'$, has degree $2$. Let $H^* = G - \{u,v,w,x,y,x'\}$. Graph $H'$ has $n-6$ vertices and $m' = m-10$ edges. Adding $u$, $v$, $x$ and $x'$ to any induced forest of $H^*$ leads to an induced forest of $G$. Observation~\ref{babg} applied to $(\alpha,\beta,\gamma) = (6,10,4)$ leads to a contradiction.

Hence $u'$, $w'$, $x'$ and $y'$ are $3^+$-vertices. Suppose $u'$ or $y'$ is a $4$-vertex. By the girth assumption, $u'y' \notin E$. Let $H^* = G - \{u,v,w,x,y,u',y'\}$. Graph $H'$ has $n-7$ vertices and $m' \le m - 14$ edges. Adding $u$, $v$, $w$ and $y$ to any induced forest of $H^*$ leads to an induced forest of $G$. Observation~\ref{babg} applied to $(\alpha,\beta,\gamma) = (7,14,4)$ leads to a contradiction. Therefore $u'$, $w'$, $x'$ and $y'$ are $3$-vertices.

Let us now show that $u'x' \notin E$ (and by symmetry $w'y' \notin E$). Suppose by contradiction that $u'x' \in E$. By Lemma~\ref{bsepcyclefirst}, the cycle $uyxx'u'$ bounds a face, hence the cycle $uvwxx'u'$ separates $y'$ from the third neighbor of $x'$. Let $H^* = H - \{u,v,w,x,y,u',x'\}$. Graph $H^*$ has $n-7$ vertices and $m' \le m - 10$ edges. Adding $u$, $v$, $x$, $y$ and $x'$ to any induced forest of $H^*$ leads to an induced forest of $G$. Observation~\ref{babg} applied to $(\alpha,\beta,\gamma) = (7,10,5)$ leads to a contradiction.

Suppose that there is no vertex adjacent to both $u'$ and $y'$. Let $H^* = G - \{u,v,w\} + u'y$. Graph $H^*$ has $n-3$ vertices and $n - 5$ edges, and has girth at least $5$ since $u'x' \notin E$ and there is no vertex adjacent to $u'$ and $y'$. Adding $u$ and $v$ to any induced forest of $H^*$ leads to an induced forest of $G$. Observation~\ref{babg} applied to $(\alpha,\beta,\gamma) = (3,5,2)$ leads to a contradiction. Hence there is a vertex $z$ adjacent to $u'$ and $y'$.

Suppose that there is no vertex adjacent to $x'$ and $y'$. Let $H^* = G - \{v,w,x\} + x'y$. Graph $H^*$ has $n-3$ vertices and $n - 5$ edges, and has girth at least $5$ since $u'x' \notin E$ and there is no vertex adjacent to $x'$ and $y'$. Adding $x$ and $v$ to any induced forest of $H^*$ leads to an induced forest of $G$. Observation~\ref{babg} applied to $(\alpha,\beta,\gamma) = (3,5,2)$ leads to a contradiction.
Hence there is a vertex $z'$ adjacent to $x'$ and $y'$.

 Suppose $z$ is a $2$-vertex. Vertices $z$ and $z'$ are distinct, and non-adjacent. Let $H^* = G - \{u,v,w,x,y,u',x',y',z,z'\}$. Graph $H^*$ has $n-10$ vertices and $n - 15$ edges. Adding $u$, $v$, $x$, $u'$, $x'$, $y'$ and $z$ to any induced forest of $H^*$ leads to an induced forest of $G$. Observation~\ref{babg} applied to $(\alpha,\beta,\gamma) = (10,15,7)$ leads to a contradiction.

Therefore $z$ is a $3^+$-vertex. Let $H^* = G - \{u,v,w,x,y,u',y',z\}$. Graph $H^*$ has $n-8$ vertices and $n - 14$ edges ($w' \ne z$, since $w'y' \notin E$). Adding $u$, $v$, $x$, $u'$, and $y'$ to any induced forest of $H^*$ leads to an induced forest of $G$. Observation~\ref{babg} applied to $(\alpha,\beta,\gamma) = (8,14,5)$ leads to a contradiction.

Therefore $x$ and $y$ have degree $4$. By Lemma~\ref{bdeg3deg4}, there is an other $5$-cycle containing $uvw$, and as $G$ has girth at least $5$, there are $x\rq{}$ and $y\rq{}$ distinct from all the vertices defined previously such that $uvwx\rq{}y\rq{}$ is a cycle. By symmetry, $x\rq{}$ and $y\rq{}$ are $4$-vertices. Let $H^* = G - \{u,v,w,x,y,x\rq{}\}$. Graph $H^*$ has $n-6$ vertices and $m' \le m-14$ edges. Let $F'$ be any induced forest of $H^*$. Adding $u$, $v$ and $w$ to $F'$ leads to an induced forest of $G$. Observation~\ref{babg} applied to $(\alpha,\beta,\gamma) = (6,14,3)$ leads to a contradiction.

\end{itemize}

Therefore by Lemmas~\ref{b2co},~\ref{bdegle4}, and~\ref{bdeg2deg4}, every $2$-vertex is only adjacent to $2$-vertices, so either $G$ does not have any $2$-vertex, or it is $2$-regular. If $G$ is $2$-regular, then $G$ is a $n$-cycle and thus $m = n$. Since $G \in {\cal P}_5$, we have $n \ge 5$. It is clear that $G$ has an induced forest of size $n-1$. Recall that $a \le 5b$ and $a \le 1$; this gives that $5(a-b) \le 4$. Since $n \ge 5$, we can deduce that $an-bm = (a-b)n \le n-1$. This contradicts the fact that $G$ is a counter-example. Therefore, $G$ has minimum degree at least $3$. This completes the proof.
\end{proof}

\begin{lemm} \label{b4link4}
Let $v_0v_1v_2v_3v_4$ be a $5$-cycle in $G$ such that $v_0$ is a $4$-vertex and the other $v_i$ are $3$-vertices. The third neighbors of $v_1$ and $v_2$ are $3$-vertices.
\end{lemm}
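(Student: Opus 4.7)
The plan is to argue by contradiction: suppose the third neighbor $u$ of $v_1$ is a $4$-vertex (the case of $v_2$ is essentially symmetric and uses the same parameters). By Lemmas~\ref{bdegge3} and~\ref{bdegle4}, every third neighbor must be a $3$- or $4$-vertex, so ruling out the $4$-vertex case suffices. I would apply the standard reduction of Observation~\ref{babg} with $H^* = G - \{v_0, v_1, v_2, v_3, v_4, u\}$.

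First I would verify the parameters of the reduction. The girth-$5$ hypothesis is crucial: it forbids $u$ from being adjacent to any $v_i$ with $i \neq 1$ (any such adjacency produces a $3$- or $4$-cycle) and rules out every chord of the $5$-cycle. Hence the degree sum $4+3+3+3+3+4 = 20$ on the removed vertices splits into $6$ internal edges (the cycle together with $v_1u$) and $8$ outgoing edges, for a total of $14$ deleted edges. Thus $|V(H^*)| = n-6$, $|E(H^*)| = m-14$, and $H^* \in {\cal P}_5$.

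Given any induced forest $F^*$ of $H^*$, I would reinsert the triple $\{v_1, v_2, v_4\}$ and claim that $F = G[V(F^*) \cup \{v_1, v_2, v_4\}]$ is still a forest, which boosts the forest size by $3$. The only edge among these three vertices is $v_1v_2$; vertex $v_1$ has no neighbor in $F^*$ since $v_0$ and $u$ are deleted; vertices $v_2$ and $v_4$ each have at most one neighbor in $F^*$ (their respective third neighbors $u_2$ and $u_4$). Adding $v_4$, then $v_2$, then $v_1$ in that order, each new vertex contributes at most one new edge to the current acyclic graph, so no cycle can form. Applying Observation~\ref{babg} with $(\alpha, \beta, \gamma) = (6, 14, 3)$, read off from Table~\ref{babgtab}, yields the contradiction.

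The main subtlety I expect lies in selecting the correct triple to reinsert. A tempting alternative such as $\{v_2, v_3, v_4\}$ is a path whose endpoints both have external connections, which could close a cycle through $F^*$ if $u_2$ and $u_4$ happened to lie in the same tree; the asymmetric choice $\{v_1, v_2, v_4\}$ dodges this precisely because $v_1$ has no external neighbor and so acts only as a safe leaf. For the $v_2$ case the same triple still works: removing the third neighbor of $v_2$ cuts $v_2$'s external link while leaving $v_1$ with at most one external edge to $u_1$, and the same edge count and parameters apply.
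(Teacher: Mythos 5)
Your proposal is correct and matches the paper's own argument: both remove $C$ together with the offending $4$-vertex third neighbor $u_{i_0}$ (for $i_0\in\{1,2\}$), giving $(n-6, m-14)$, both reinsert $\{v_1,v_2,v_4\}$, and both apply Observation~\ref{babg} with $(\alpha,\beta,\gamma)=(6,14,3)$. Your discussion of why this particular triple is safe and why, e.g., $\{v_2,v_3,v_4\}$ would not work is a nice addition; the only small wrinkle is that for the $u_2$ case the ``each new vertex contributes at most one new edge'' phrasing needs a different insertion order (or the observation that $v_2$ is an isolated vertex when $v_1$ is added), but you already note the structural reason this works.
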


\begin{proof}
Let $v_0v_1v_2v_3v_4$ be a $5$-cycle in $G$ such that $v_0$ is a $4$-vertex and the other $v_i$ are $3$-vertices. Let $u_i$ be the third neighbor of $v_i$ for $i \in \{1,2,3,4\}$. Suppose $u_1$ or $u_2$, say $u_{i_0}$, is a $4$-vertex. 
Let $H^* = G - C - u_{i_0}$. Graph $H^*$ has $n-6$ vertices and $m\rq{} = m-14$ edges. Adding $v_1$, $v_2$ and $v_4$ to any induced forest of $H^*$ leads to an induced forest of $G$. Observation~\ref{babg} applied to $(\alpha,\beta,\gamma) = (6,14,3)$ completes the proof.
\end{proof}

\begin{lemm} \label{bsepcycle}
There is no separating $5$-cycles with at most one $4$-vertex in $G$.
\end{lemm}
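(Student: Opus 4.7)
The plan is to follow the general scheme of Lemma~\ref{bsepcyclefirst}, adapting it to the presence of a single 4-vertex on the separating cycle. By Lemma~\ref{bsepcyclefirst}, any separating 5-cycle must contain at least one 4-vertex, so under our hypothesis it contains exactly one; up to relabeling, $C=v_0v_1v_2v_3v_4$ with $v_0$ the 4-vertex and $v_1,\dots,v_4$ all 3-vertices. Lemma~\ref{b4link4} applied to $C$ and to its reversed traversal forces the third neighbors $u_1,u_2,u_3,u_4$ of $v_1,\dots,v_4$ to be 3-vertices. Write $w_1,w_2$ for the two remaining neighbors of $v_0$. The girth-5 assumption forbids $w_1w_2\in E$ and any coincidence among $\{w_1,w_2,u_1,u_2,u_3,u_4\}$, since a coincidence $w_j=u_i$ would close a 3- or 4-cycle through $v_0$ together with part of $C$.

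The main reduction is $H^*=G-V(C)\in\mathcal{P}_5$, which has $|V(H^*)|=n-5$ and $|E(H^*)|=m-11$: five cycle edges together with six pairwise distinct edges from $V(C)$ to $\{w_1,w_2,u_1,u_2,u_3,u_4\}$. The plan is to apply Observation~\ref{babg} with $(\alpha,\beta,\gamma)=(5,11,3)$; the inequality $5a-11b\le 3$ follows from a nonnegative combination of (\ref{ba5b}) and (\ref{b5a9b}). The residual task is to show that for every induced forest $F^*$ of $H^*$, three vertices of $V(C)$ can be added so that the resulting induced subgraph of $G$ is still a forest.

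The case split is on whether $w_1$ and $w_2$ lie on the same side of $C$ or on opposite sides. In the opposite-sides case, because $C$ is separating, $w_1$ and $w_2$ belong to different components of $H^*$; hence adjoining $v_0$ to any $F^*$ cannot close a cycle through its extra neighbors, and two further cycle vertices can be selected using the side assignments of the $u_i$'s in $F^*$. In the same-side case, Lemma~\ref{b2co} guarantees at least two $u_i$'s on the opposite side; planarity excludes edges $u_iu_j$ crossing $C$; and Lemma~\ref{bdeg3deg4} applied to $v_1$ and to $v_4$ produces either an edge $u_1u_3$ or $u_2u_4$, or else a common neighbor of $(u_1,u_2)$ or $(u_3,u_4)$ on the appropriate side of $C$. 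A case analysis on the $F^*$-component structure of these six landmark vertices then exhibits a safely addable triple in each subcase.

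The chief difficulty is the extremal configuration in the same-side case: when $w_1,w_2$ and all the same-side $u_i$'s collapse into a single $F^*$-component, while the opposite-side $u_i$'s collapse into another single $F^*$-component. Here a direct check shows that every 3-subset of $V(C)$ closes a short cycle through either the inside component or the outside component, so the bare reduction $H^*=G-V(C)$ with $(\alpha,\beta,\gamma)=(5,11,3)$ does not suffice. Breaking this rigidity is the crux of the argument: one invokes the common-neighbor vertices delivered by Lemma~\ref{bdeg3deg4}, which either contradict the extremal $F^*$-structure directly via planarity and the girth-5 assumption, or else permit an enlarged reduction obtained by deleting $V(C)$ together with one well-chosen landmark (such as $w_1$ or a strategically placed $u_i$), whose edge budget matches a stronger triple from Table~\ref{babgtab} like $(6,14,3)$ or $(7,14,4)$, yielding the desired contradiction.
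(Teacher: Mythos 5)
Your opening moves match the paper's: use Lemma~\ref{bsepcyclefirst} to pin the unique $4$-vertex as $v_0$, use Lemma~\ref{b4link4} (traversed in both directions) to conclude all the $u_i$ are $3$-vertices, and observe that the girth assumption keeps $\{w_1,w_2,u_1,\dots,u_4\}$ distinct. Your inequality $5a-11b\le 3$ does indeed follow as a nonnegative combination of (\ref{ba5b}) and (\ref{b5a9b}) (coefficients $3/16$ and $7/16$), though since $m'=m-11\le m-10$, the paper's listed triple $(5,10,3)$ already covers this.

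From there, however, the argument has a genuine gap, and you essentially flag it yourself. Your plan is to use the single reduction $H^*=G-V(C)$ across the board, splitting cases on whether $w_1,w_2$ lie on the same side of $C$ or opposite sides, and to patch the one ``rigid'' configuration by either a contradiction from Lemma~\ref{bdeg3deg4} or an ``enlarged reduction'' deleting $V(C)$ plus one landmark. None of this is carried out: the case analysis promising ``a safely addable triple in each subcase'' is not exhibited, and the last paragraph explicitly concedes that the bare reduction fails and only sketches a direction for the fix. Moreover, the sketched fix is unlikely to go through as written. The paper's proof does \emph{not} split on the position of $w_1,w_2$; it splits on which $u_i$'s $C$ separates, and uses three structurally different reductions: $G-C$ with $(5,10,3)$ when some consecutive pair $u_i,u_{i+1}$ is separated; $G-C-\{u_1,u_2,u_3,w,w'\}$ with $(10,20,6)$ when $\{u_1,u_2\}$ and $\{u_3,u_4\}$ land on opposite sides (here $w$ is the common neighbor from Lemma~\ref{bdeg3deg4} and $w'$ is the third neighbor of $u_1$); and, crucially, $G-\{v_1,v_2,v_3,v_4,u_1,u_2,w\}$ with $(7,14,4)$ when all the $u_i$ lie on one side --- a reduction that \emph{retains} $v_0$. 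Your proposed ``$V(C)$ plus one well-chosen landmark'' reduction (six vertices, a triple like $(6,14,3)$) is not what resolves the hard case, and you give no forest-extension argument for it. A further unaddressed point: in your ``opposite-sides'' case you conclude too quickly that adding $v_0$ plus two more cycle vertices is always safe; once $v_0$ is added, the edges $v_0v_1$ or $v_0v_4$ together with the paths through $u_i$ and $w_j$ can close cycles that your sketch does not rule out, and the $2$-edge-connectedness constraint on which configurations can even occur is never used to prune them. As written, this is a proof outline with the central step missing rather than a complete proof.
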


\begin{proof}
Let $C = v_0v_1v_2v_3v_4$ be such a cycle. By Lemma~\ref{bsepcyclefirst}, $C$ has exactly one $4$-vertex, say $v_0$. Let $u_i$ be the third neighbor of $v_i$ for $i \in \{1,2,3,4\}$. By the girth assumption, all the $u_i$ are distinct. By Lemma~\ref{b4link4}, all the $u_i$ have degree $3$.

Suppose $C$ separates $u_1$ and $u_2$. Let $H^* = G - C$. Graph $H^*$ has $n-5$ vertices and $m' \le m-10$ edges, and adding $v_1$, $v_2$ and $v_4$ to any induced forest of $H^*$ leads to an induced forest of $G$. Observation~\ref{babg} applied to $(\alpha,\beta,\gamma) = (5,10,3)$ leads to a contradiction.

So $C$ does not separate $u_1$ and $u_2$, and by symmetry it does not separate $u_3$ and $u_4$ either.

Suppose $C$ separates some of the $u_i$. Say $u_1$ and $u_2$ are in the interior of $C$ w.l.o.g., and $u_3$ and $u_4$ are in the exterior of $C$. By Lemma~\ref{bdeg3deg4} there is a vertex $w$ such that $u_1v_1v_2u_2w$ is a cycle. Since $u_1$, $v_1$, $v_2$ and $u_2$ have degree $3$, and $v_0$ has degree $4$, $w$ has degree $3$ by Lemma~\ref{b4link4}. Vertex $w$ cannot be adjacent to $v_0$, $v_1$ or $v_2$ by the girth assumption, and it cannot be adjacent to $v_3$, $v_4$, $u_3$ or $u_4$ by planarity. Let $w'$ be the third neighbor of $u_1$. It is also non-adjacent to all the vertices defined previously (except for $u_1$) by the girth assumption and planarity of $G$. Let $H^* = G - C - \{u_1,u_2,u_3,w,w'\}$. Graph $H^*$ has $n-10$ vertices and $m' \le m-20$ edges, and adding $v_1$, $v_2$, $v_3$, $v_4$, $u_1$ and $u_2$ to any induced forest of $H^*$ leads to an induced forest of $G$. Observation~\ref{babg} applied to $(\alpha,\beta,\gamma) = (10,20,6)$ leads to a contradiction.

Therefore $C$ does not separate any of the $u_i$, say the $u_i$ are in
the exterior of $C$ up to changing the plane embedding. Then as $G$ is
$2$-edge-connected by Lemma~\ref{b2co}, the two neighbors of $v_0$
distinct from $v_1$ and $v_4$ are in the interior of $C$. By
Lemma~\ref{bdeg3deg4}, either $u_1u_3 \in E$, or there is a vertex $w$
such that $u_1v_1v_2u_2w$ is a cycle. If $u_1u_3 \in E$, then the cycle $v_1v_2v_3u_3u_1$ is separating with only $3$-vertices, contradicting Lemma~\ref{bsepcyclefirst}. Thus $u_1u_3 \notin E$ (and $u_2u_4 \notin E$ by symmetry), and there is a vertex $w$ such that $u_1v_1v_2u_2w$ is a cycle. Since $u_1$, $v_1$, $v_2$ and $u_2$ have degree $3$, and $v_0$ has degree $4$, by Lemma~\ref{b4link4} $w$ has degree $3$. If $w = u_4$, then $u_2u_4 \in E$, which is impossible; hence $w$ is not adjacent to $v_4$. It is not adjacent to the other $v_i$ by girth assumption. Let $H^* = G - \{v_1,v_2,v_3,v_4,u_1,u_2,w\}$. Graph $H^*$ has $n-7$ vertices and $m' \le m-14$ edges. Let $F'$ be any induced forest of $H^*$. Adding $u_1$, $u_2$, $v_2$, and $v_4$ to $F'$ leads to an induced forest of $G$. Observation~\ref{babg} applied to $(\alpha,\beta,\gamma) = (7,14,4)$ completes the proof.
\end{proof}

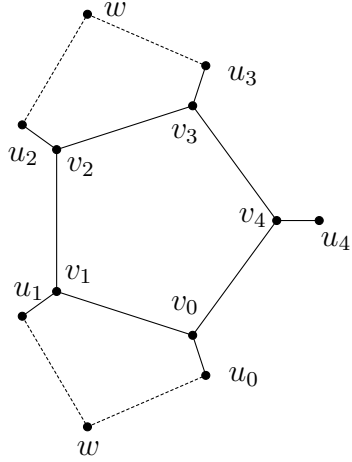
\begin{figure}[h]
\begin{center}
\begin{tikzpicture}[line cap=round,line join=round,>=triangle 45,x=1.6cm,y=1.6cm]
\clip(-1.5,-2.3) rectangle (1.7,2.3);
\draw (1.0,-0.0)-- (0.3090169943749475,0.9510565162951535);
\draw (0.3090169943749475,0.9510565162951535)-- (-0.8090169943749473,0.5877852522924731);
\draw (-0.8090169943749473,0.5877852522924731)-- (-0.8090169943749475,-0.5877852522924731);
\draw (0.30901699437494745,-0.9510565162951536)-- (1.0,-0.0);
\draw (0.30901699437494745,-0.9510565162951536)-- (-0.8090169943749475,-0.5877852522924731);
\draw (1.3492549017242996,-0.0)-- (1.0,-0.0);
\draw (0.3090169943749475,0.9510565162951535)-- (0.41694269437650827,1.2832176664280721);
\draw (-1.091570145238658,0.7930721328168736)-- (-0.8090169943749473,0.5877852522924731);
\draw (-1.091570145238658,-0.7930721328168735)-- (-0.8090169943749475,-0.5877852522924731);
\draw (0.4169426943765081,-1.2832176664280721)-- (0.30901699437494745,-0.9510565162951536);
\draw [dash pattern=on 1pt off 1pt] (-1.091570145238658,-0.7930721328168735)-- (-0.5555264139055502,-1.709734498543117);
\draw [dash pattern=on 1pt off 1pt] (-0.5555264139055502,-1.709734498543117)-- (0.4169426943765081,-1.2832176664280721);
\draw [dash pattern=on 1pt off 1pt] (-1.091570145238658,0.7930721328168736)-- (-0.55552641390555,1.709734498543117);
\draw [dash pattern=on 1pt off 1pt] (-0.55552641390555,1.709734498543117)-- (0.41694269437650827,1.2832176664280721);
\draw (0.041909843601526,-0.5337510051834266) node[anchor=north west] {$v_0$};
\draw (-0.8328104613888772,-0.27751980473169224) node[anchor=north west] {$v_1$};
\draw (-0.8151393441163438,0.6148716175312444) node[anchor=north west] {$v_2$};
\draw (0.041909843601526,0.8887739352555121) node[anchor=north west] {$v_3$};
\draw (0.5985500376863281,0.19076480299044285) node[anchor=north west] {$v_4$};
\draw (-1.3,0.7) node[anchor=north west] {$u_2$};
\draw (-1.2541084209286124,-0.41888874291195943) node[anchor=north west] {$u_1$};
\draw (0.5130033063247273,-1.1257334338132954) node[anchor=north west] {$u_0$};
\draw (1.2728613490436635,-0.003617487007424538) node[anchor=north west] {$u_4$};
\draw (0.5041677476884607,1.365894101613914) node[anchor=north west] {$u_3$};
\draw (-0.5119214954822098,1.9) node[anchor=north west] {$w$};
\draw (-0.7328104613888772,-1.7353869797156976) node[anchor=north west] {$w$};
\begin{scriptsize}
\draw [fill=black] (0.3090169943749475,0.9510565162951535) circle (1.5pt);
\draw [fill=black] (1.0,-0.0) circle (1.5pt);
\draw [fill=black] (-0.8090169943749473,0.5877852522924731) circle (1.5pt);
\draw [fill=black] (-0.8090169943749475,-0.5877852522924731) circle (1.5pt);
\draw [fill=black] (0.30901699437494745,-0.9510565162951536) circle (1.5pt);
\draw [fill=black] (-1.091570145238658,0.7930721328168736) circle (1.5pt);
\draw [fill=black] (0.41694269437650827,1.2832176664280721) circle (1.5pt);
\draw [fill=black] (1.3492549017242996,-0.0) circle (1.5pt);
\draw [fill=black] (0.4169426943765081,-1.2832176664280721) circle (1.5pt);
\draw [fill=black] (-1.091570145238658,-0.7930721328168735) circle (1.5pt);
\draw [fill=black] (-0.55552641390555,1.709734498543117) circle (1.5pt);
\draw [fill=black] (-0.5555264139055502,-1.709734498543117) circle (1.5pt);
\end{scriptsize}
\end{tikzpicture}
\end{center}
\caption{The construction of Lemma~\ref{b3facesadj}. At least one of the two $w$ represented exists.\label{bfigureplus}}
\end{figure}

\begin{lemm} \label{b3facesadj}
Let $C = v_0v_1v_2v_3v_4$ be a $5$-cycle in $G$ with only $3$-vertices, and $u_i$ be the third neighbor of $v_i$ for $i \in \{0,1,2,3,4\}$. Then there is a vertex $w$ adjacent either to $u_0$ and $u_1$ or to $u_2$ and $u_3$.
\end{lemm}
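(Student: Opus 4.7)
The plan is to proceed by contradiction. Assume that neither $\{u_0,u_1\}$ nor $\{u_2,u_3\}$ admits a common neighbor in $G$. I would first record the easy consequences of the girth assumption on $G$: the five vertices $u_0,\ldots,u_4$ are pairwise distinct and distinct from every $v_j$, and $u_iu_{i+1}\notin E$ for all $i \bmod 5$, since otherwise $v_iv_{i+1}u_{i+1}u_i$ would be a $4$-cycle.

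Mimicking the construction from Lemma~\ref{33333}, I would introduce two new vertices $x$ and $y$ and set
\[
H^* \;=\; G + x + y + \{xu_0,\,xu_1,\,yu_2,\,yu_3,\,xy\} - V(C),
\]
so that $|V(H^*)|=n-3$ and $|E(H^*)|=m-5$. The central technical step is to verify that $H^*\in{\cal P}_5$. A $3$- or $4$-cycle through $x$ (resp.\ $y$) alone would require $u_0u_1\in E$ (resp.\ $u_2u_3\in E$), ruled out by girth, or a common neighbor of $u_0,u_1$ (resp.\ of $u_2,u_3$), excluded by hypothesis. A $4$-cycle through the edge $xy$ would have the form $xu_iu_jyx$ with $i\in\{0,1\}$ and $j\in\{2,3\}$, so would require one of the four edges $u_0u_2$, $u_0u_3$, $u_1u_2$, $u_1u_3$. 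The edge $u_1u_2$ is already excluded, while each of the three remaining possibilities must be ruled out under the contradiction hypothesis: each such edge creates a $5$-cycle made of three consecutive $3$-vertices of $C$ together with the two relevant $u_i$'s, whose separating behaviour, controlled by Lemmas~\ref{bsepcyclefirst} and~\ref{bsepcycle}, combined with the forced $3$-vertex structure of Lemma~\ref{b4link4}, forces a common neighbor for one of the two excluded pairs.

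Once $H^*\in{\cal P}_5$ is established, I would apply Observation~\ref{babg} with $(\alpha,\beta,\gamma)=(3,5,2)$. For any induced forest $F'$ of $H^*$, define $F$ by taking $V(F')\setminus\{x,y\}$ together with $\{v_0,v_2\}$ if neither $x$ nor $y$ is in $F'$, with $\{v_0,v_1,v_3\}$ if $x\in F'$ and $y\notin F'$, with $\{v_0,v_2,v_3\}$ if $x\notin F'$ and $y\in F'$, and with $\{v_0,v_1,v_2,v_3\}$ if both $x$ and $y$ are in $F'$. In every subcase $|V(F)|=|V(F')|+2$, which matches $\gamma=2$. The forest property follows by observing that any cycle in $G[V(F)]$ would, after replacing the paths $u_0v_0v_1u_1$ and $u_2v_2v_3u_3$ by the $2$-paths $u_0xu_1$ and $u_2yu_3$ through $x$ and $y$ (and using the edge $xy$ when both $x,y\in F'$), pull back to a cycle in $F'$, contradicting that $F'$ is a forest.

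The main obstacle is the verification that $H^*$ has girth at least $5$: excluding the three potential chord edges $u_0u_2$, $u_0u_3$, $u_1u_3$ under the contradiction hypothesis. The girth of $G$ alone does not forbid any of these edges, so the proof must combine the lemmas on separating $5$-cycles with few $4$-vertices (Lemma~\ref{bsepcycle}) and on the forced degree pattern around a $5$-cycle with a single $4$-vertex (Lemma~\ref{b4link4}) in a delicate case analysis producing, in each configuration, the very common neighbor that the contradiction hypothesis forbids.
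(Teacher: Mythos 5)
Your overall plan coincides with the paper's: first rule out all chords $u_iu_j$, then pass to the auxiliary graph $H^* = G - V(C) + \{x,y\} + \{xu_0,xu_1,yu_2,yu_3,xy\}$ and apply Observation~\ref{babg} with $(\alpha,\beta,\gamma)=(3,5,2)$. The final forest reconstruction you describe is also essentially the paper's. However, the crucial step you identify as ``the main obstacle'' --- excluding the three possible chords $u_0u_2$, $u_0u_3$, $u_1u_3$ --- contains a genuine gap, and the mechanism you propose does not work.

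You claim that if, say, $u_0u_2 \in E$, then the separating $5$-cycle $D = v_0v_1v_2u_2u_0$ (separating because $C$ is a face, which itself requires first invoking Lemma~\ref{bsepcyclefirst}) ``combined with the forced $3$-vertex structure of Lemma~\ref{b4link4}, forces a common neighbor for one of the two excluded pairs.'' But Lemma~\ref{bsepcycle} applied to $D$ forces \emph{both} $u_0$ and $u_2$ to be $4$-vertices, since $v_0,v_1,v_2$ are $3$-vertices and a separating $5$-cycle needs at least two $4$-vertices. So $D$ has two $4$-vertices, and Lemma~\ref{b4link4} --- which is about a $5$-cycle with \emph{exactly one} $4$-vertex --- does not apply. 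Moreover, applying Lemma~\ref{bdeg3deg4} to any $v_i$ adjacent to $u_0$ or $u_2$ is vacuous here: the required $5$-cycle $v_ixx'y'y$ is witnessed by $C$ itself, so no new common neighbor is produced. Thus no common neighbor of $\{u_0,u_1\}$ or $\{u_2,u_3\}$ is forced, and the claimed contradiction does not materialize. The paper handles this case not by deriving a common neighbor but by a \emph{separate direct reduction}: since $u_0,u_2$ have degree $4$, it sets $H^* = G - C - \{u_0,u_2\}$ ($n-7$ vertices, at least $14$ fewer edges) and adds $v_0,v_1,v_2,v_3$ back to any induced forest of $H^*$; planarity (the separating cycle $D$ separates $u_1$ from $u_3,u_4$) guarantees the result is acyclic, and Observation~\ref{babg} with $(\alpha,\beta,\gamma)=(7,14,4)$ gives the contradiction. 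Without this (or some replacement for it), your proof of $H^*\in{\cal P}_5$ is incomplete.
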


\begin{proof}
Let $C = v_0v_1v_2v_3v_4$ be a $5$-cycle with only vertices of degree $3$ in $G$, and let $u_i$ be the third neighbor of $v_i$ for $i \in \{0,1,2,3,4\}$. See Figure~\ref{bfigureplus} for an illustration of the statement of the lemma. By Lemma~\ref{bsepcyclefirst}, $C$ is the boundary of a face.

Let us first show that no two $u_i$ can be adjacent. Suppose two of the $u_i$ are adjacent. By the girth assumption, w.l.o.g. $u_0u_2 \in E$. Then by Lemma~\ref{bsepcycle}, $u_0$ and $u_2$ have degree $4$. Let $H^* = G - C - \{u_0, u_2\}$. Graph $H^*$ has $n-7$ vertices and $m\rq{} \le m-14$ edges. Let $F'$ be any induced forest of $H^*$. Adding $v_0$, $v_1$, $v_2$ and $v_3$ to $F\rq{}$ leads to an induced forest of $G$ by planarity. Observation~\ref{babg} applied to $(\alpha,\beta,\gamma) = (7,14,4)$ leads to a contradiction.

Suppose by contradiction that there is no vertex $w$ adjacent either to $u_0$ and $u_1$, or to $u_2$ and $u_3$. Let $H^* = G - C + \{x,y\} + \{u_0x,u_1x,u_2y,u_3y,xy\}$. Graph $H^*$ is of girth at least $5$ by hypothesis and because the $u_i$ are not adjacent. Graph $H^*$ has $n-3$ vertices and $m\rq{} = m - 5$ edges. Let $F'$ be any induced forest of $H^*$. Removing $x$ and $y$, adding $v_0$ and $v_3$, plus $v_1$ if $x \in F'$, and $v_2$ if $y \in F'$ to $F\rq{}$ leads to an induced forest of $G$. Observation~\ref{babg} applied to $(\alpha,\beta,\gamma) = (3,5,2)$ completes the proof.
\end{proof}

\begin{lemm} \label{b43333}
There is no $5$-face with exactly one $4$-vertex in $G$.
\end{lemm}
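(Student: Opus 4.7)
The proof proceeds by contradiction. Suppose $C=v_0v_1v_2v_3v_4$ is a $5$-face of $G$ in which $v_0$ is the unique $4$-vertex and $v_1,\ldots,v_4$ are $3$-vertices; denote the third neighbour of $v_i$ by $u_i$ for $i\in\{1,2,3,4\}$, and let $p,q$ be the two remaining neighbours of $v_0$.

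First I would pin down the local degrees. Two applications of Lemma~\ref{b4link4} (to $C$ and to $C$ read in reverse) show that all of $u_1,u_2,u_3,u_4$ are $3$-vertices; they are pairwise distinct and disjoint from $V(C)$ by the girth assumption.

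Next I would extract hidden common neighbours. Lemma~\ref{bdeg3deg4} applied to the $3$-vertex $v_1$ and its $4$-vertex neighbour $v_0$ yields a $5$-cycle of the form $v_1v_2x'y'u_1$ with $x'\in\{v_3,u_2\}$. The branch $x'=v_3$ forces $y'=u_3$ (since $y'=v_4$ would give the contradiction $u_1=u_4$), and the resulting $5$-cycle $u_1v_1v_2v_3u_3$ would use only $3$-vertices. By Lemma~\ref{bsepcyclefirst} this cycle must then bound a face, but at $v_2$ the face $C$ already occupies the unique angle between the consecutive edges $v_1v_2$ and $v_2v_3$ (recall $v_2$ has degree $3$), so no other face can contain both these edges consecutively; contradiction. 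Hence $x'=u_2$, which gives a common neighbour $w_{12}$ of $u_1$ and $u_2$. The symmetric argument at $v_4$ rules out $u_2u_4\in E(G)$ and produces a common neighbour $w_{34}$ of $u_3$ and $u_4$.

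I would then pin down the surrounding face structure. The $5$-cycles $C_{12}=w_{12}u_1v_1v_2u_2$ and $C_{34}=w_{34}u_3v_3v_4u_4$ each contain at most one $4$-vertex, so by Lemma~\ref{bsepcycle} (and Lemma~\ref{bsepcyclefirst} when $w_{12}$ or $w_{34}$ is a $3$-vertex) they are non-separating, hence bound faces of $G$. Consequently the face across $v_1v_2$ from $C$ is $C_{12}$, the face across $v_3v_4$ is $C_{34}$, and the face $f$ across $v_2v_3$ has boundary beginning $u_2'u_2v_2v_3u_3u_3'\ldots$ for appropriate third neighbours $u_2',u_3'$ of $u_2,u_3$.

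The remainder of the proof is a case analysis according to (i) whether $w_{12}=w_{34}$, (ii) the length of the face $f$ (in particular, whether $f$ has length $5$ and thus pins down a common neighbour of $u_2$ and $u_3$), and (iii) any further coincidences among the $u_i'$. In each case I would choose a deletion set $S\subseteq V(G)$ of the form $V(C)\cup S'$ with $S'\subseteq\{u_1,u_2,u_3,u_4,w_{12},w_{34},u_1',u_2',u_3',u_4'\}$, set $H^*=G-S$, and verify bounds $|V(H^*)|=n-\alpha$ and $|E(H^*)|\le m-\beta$ for a triple $(\alpha,\beta,\gamma)$ drawn from Table~\ref{babgtab}. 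For every induced forest $F^*$ of $H^*$ I would then produce $T\subseteq S$ with $|T|=\gamma$ so that $G[V(F^*)\cup T]$ is still a forest, using planarity and the face structure above to rule out the paths in $F^*$ that would otherwise create cycles (any such path between a pair in $\{p,q,u_1',u_2',u_3',u_4',w_{12},w_{34}\}$ would have to cross one of the known face boundaries $C$, $C_{12}$, $C_{34}$, or $f$). Observation~\ref{babg} then yields the contradiction.

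The principal obstacle is this final case analysis. The triples in Table~\ref{babgtab} are tight in the sense that $a\alpha-b\beta$ is only barely bounded by $\gamma$, so each subcase requires a bespoke choice of $S$ and $T$ that simultaneously makes $\beta$ large enough and leaves $T$ addable-back. The degenerate case $w_{12}=w_{34}$, in which this common vertex becomes a $4$-vertex adjacent to all four $u_i$, and the case where $f$ has length exactly $5$ and hence forces a further coincidence $u_2'=u_3'$, are the most delicate; in each of them planarity has to be invoked carefully to preclude the cycle-creating paths in $F^*$.
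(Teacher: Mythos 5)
Your opening steps are consistent with the paper: using Lemma~\ref{b4link4} to force all $u_i$ to have degree $3$, applying Lemma~\ref{bdeg3deg4} at $v_1$ and $v_4$ to extract common neighbours $w_{12}$ of $u_1,u_2$ and $w_{34}$ of $u_3,u_4$, and observing that the cycles $w_{12}u_1v_1v_2u_2$ and $w_{34}u_3v_3v_4u_4$ must bound faces. Your angle-at-$v_2$ argument ruling out the $u_1u_3$ chord is a valid, slightly different route to the paper's ``separating $5$-cycle with five $3$-vertices'' contradiction.

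The genuine gap is that the proof simply stops there. You state ``the remainder of the proof is a case analysis'' and ``the principal obstacle is this final case analysis,'' and then do not carry it out. That remainder is the entire substance of the lemma, and the paper does not close it the way you anticipate. Several of the branches you flag never arise: the paper immediately pins down that $w_{12}$ (and by symmetry $w_{34}$) has degree exactly $3$ by a second application of Lemma~\ref{b4link4} to the cycle $v_1v_2u_2w_{12}u_1$ (if $w_{12}$ were the unique $4$-vertex of that cycle, the third neighbour of $v_1$, namely $v_0$, would have to be a $3$-vertex), so the degenerate case $w_{12}=w_{34}$ is excluded automatically, since such a vertex would be adjacent to $u_1,u_2,u_3,u_4$. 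Nor does the paper branch on the length of the face across $v_2v_3$ or on coincidences among $u_2',u_3'$. Instead, it proves a linear chain of non-adjacencies---$u_1u_4\notin E$, $w_{12}u_4\notin E$, $w_{12}w_{34}\notin E$, and then, introducing the third neighbour $x$ of $u_1$, $xu_4\notin E$, $xw_{34}\notin E$, $xu_3\notin E$---each by a bespoke application of Observation~\ref{babg} with triples $(7,14,4)$ and $(11,19,7)$, and concludes with the single deletion $H^*=G-C-\{u_1,u_2,u_3,u_4,w_{12},w_{34},x\}$ and $(\alpha,\beta,\gamma)=(12,23,7)$. Without those intermediate non-adjacency lemmas you cannot even verify the edge count $|E(H^*)|\le m-\beta$ for the final deletion, nor show that the vertices you add back induce a forest, so the core of the argument is missing rather than merely deferred.
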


\begin{proof}
Let $C = v_0v_1v_2v_3v_4$ be such a face, with $v_0$ the $4$-vertex, and let $u_i$ be the third neighbor of $v_i$ for $i \in \{1,2,3,4\}$. By Lemma~\ref{b4link4}, the $u_i$ have degree $3$. The $u_i$ are all distinct and not adjacent to $v_0$ by the girth assumption. By Lemma~\ref{bdeg3deg4}, either $u_1u_3 \in E$, or there is a vertex adjacent to both $u_1$ and $u_2$. However in the former case, the cycle $u_1v_1v_2v_3u_3$ is a separating cycle with five vertices of degree $3$, contradicting Lemma~\ref{bsepcyclefirst}. Hence $u_1u_3 \notin E$ and $u_2u_4 \notin E$ by symmetry. We also have $u_1u_4 \notin E$ by Lemma~\ref{bsepcycle} applied to $u_1v_1v_0v_4u_4$. Let $w$ be the vertex adjacent to both $u_1$ and $u_2$. By Lemma~\ref{b4link4}, $w$ has degree $3$. By the girth assumption, $wv_0 \notin E$ and $wv_3 \notin E$. By Lemma~\ref{bsepcyclefirst}, $v_1v_2u_2wu_1$ is the boundary of a face. Moreover, $wv_4 \notin E$ and $wu_3 \notin E$ by applying Lemma~\ref{bsepcycle} to the cycle $wv_4v_3v_2u_2$ and $wu_3v_3v_2u_2$ respectively. By symmetry, let $w\rq{}$ ($\ne w$) be the vertex adjacent to $u_3$ and $u_4$. Vertex $w$ has degree $3$, $w\rq{}v_0 \notin E$, $w\rq{}v_1 \notin E$, $w\rq{}v_2 \notin E$, $w\rq{}u_2 \notin E$ and $u_4v_4v_3u_3w\rq{}$ is the boundary of a face.

Observe now that $wu_4 \notin E$ and $w\rq{}u_1 \notin E$ (by symmetry). By contradiction assume $wu_4 \in E$. Consider $H^* = G - \{v_0, v_1, v_2, v_4, u_1, u_4, w\}$ which has $n-7$ vertices and $m' \le m-14$ edges. Adding the vertices $w$, $u_1$, $v_1$ and $v_4$ to any induced forest of $H^*$ leads to an induced forest of $G$. Observation~\ref{babg} applied to $(\alpha,\beta,\gamma) = (7,14,4)$ completes the proof.

Observe now that $ww' \notin E$. Otherwise, consider $H^* = G - \{v_0, v_1, v_4, u_1,\\ u_4, w, w'\}$ which has $n-7$ vertices and $m' \le m-14$ edges. Adding the vertices $u_1$, $v_1$, $v_4$ and $w'$ to any induced forest of $H^*$ leads to an induced forest of $G$. Observation~\ref{babg} applied to $(\alpha,\beta,\gamma) = (7,14,4)$ completes the proof.

See Figure~\ref{bfig43333} for a summary of the edges between the vertices $v_0$, $v_1$, $v_2$, $v_3$, $v_4$, $u_1$, $u_2$, $u_3$, $u_4$, $w$ and $w'$.

\begin{figure}[h]
\begin{center}
\begin{tikzpicture}[line cap=round,line join=round,>=triangle 45,x=1.5cm,y=1.5cm]
\clip(-2.2,-2.6) rectangle (2.4,2.6);
\draw (1.0,-0.0)-- (0.3090169943749475,0.9510565162951535);
\draw (0.3090169943749475,0.9510565162951535)-- (-0.8090169943749473,0.5877852522924731);
\draw (-0.8090169943749473,0.5877852522924731)-- (-0.8090169943749475,-0.5877852522924731);
\draw (0.30901699437494745,-0.9510565162951536)-- (1.0,-0.0);
\draw (0.30901699437494745,-0.9510565162951536)-- (-0.8090169943749475,-0.5877852522924731);
\draw (0.3090169943749475,0.9510565162951535)-- (0.633532429705718,1.9498123291719263);
\draw (-1.658609433944873,1.2050502911118488)-- (-0.8090169943749473,0.5877852522924731);
\draw (-1.658609433944873,-1.2050502911118486)-- (-0.8090169943749475,-0.5877852522924731);
\draw (0.6335324297057179,-1.9498123291719265)-- (0.30901699437494745,-0.9510565162951536);
\draw (-1.658609433944873,-1.2050502911118486)-- (-0.8142568281748588,-2.506024835106301);
\draw (-0.8142568281748588,-2.506024835106301)-- (0.6335324297057179,-1.9498123291719265);
\draw (-1.658609433944873,1.2050502911118488)-- (-0.8142568281748587,2.5060248351063015);
\draw (-0.8142568281748587,2.5060248351063015)-- (0.633532429705718,1.9498123291719263);
\draw (0.5543722445049945,0.20843592026297625) node[anchor=north west] {$v_0$};
\draw (0.0242387263289926,0.9771295216181791) node[anchor=north west] {$v_1$};
\draw (-0.8416460200251439,0.685556086621378) node[anchor=north west] {$v_2$};
\draw (-0.8416460200251439,-0.27751980473169224) node[anchor=north west] {$v_3$};
\draw (0.0242387263289926,-0.5690932397284933) node[anchor=north west] {$v_4$};
\draw (-1.8842419391046145,1.2068540461611132) node[anchor=north west] {$u_2$};
\draw (0.6338922722313949,2.063903233878983) node[anchor=north west] {$u_1$};
\draw (0.5897144790500613,-1.7353869797156976) node[anchor=north west] {$u_4$};
\draw (-2.016775318648615,-0.8783377919978278) node[anchor=north west] {$u_3$};
\draw (-0.930001606387811,2.4880100484197847) node[anchor=north west] {$w$};
\draw (-0.9830149582054111,-2.0888093251663657) node[anchor=north west] {$w'$};
\draw (1.0,-0.0)-- (2.0128031684147327,0.389557266264276);
\draw (1.0,-0.0)-- (2.0128031684147327,-0.389557266264276);
\begin{scriptsize}
\draw [fill=black] (0.3090169943749475,0.9510565162951535) circle (1.5pt);
\draw [fill=black] (1.0,-0.0) circle (1.5pt);
\draw [fill=black] (-0.8090169943749473,0.5877852522924731) circle (1.5pt);
\draw [fill=black] (-0.8090169943749475,-0.5877852522924731) circle (1.5pt);
\draw [fill=black] (0.30901699437494745,-0.9510565162951536) circle (1.5pt);
\draw [fill=black] (-1.658609433944873,1.2050502911118488) circle (1.5pt);
\draw [fill=black] (0.633532429705718,1.9498123291719263) circle (1.5pt);
\draw [fill=black] (0.6335324297057179,-1.9498123291719265) circle (1.5pt);
\draw [fill=black] (-1.658609433944873,-1.2050502911118486) circle (1.5pt);
\draw [fill=black] (-0.8142568281748587,2.5060248351063015) circle (1.5pt);
\draw [fill=black] (-0.8142568281748588,-2.506024835106301) circle (1.5pt);
\draw [fill=black] (2.0128031684147327,0.389557266264276) circle (1.5pt);
\draw [fill=black] (2.0128031684147327,-0.389557266264276) circle (1.5pt);
\end{scriptsize}
\end{tikzpicture}
\end{center}
\caption{The vertices $v_0$, $v_1$, $v_2$, $v_3$, $v_4$, $u_1$, $u_2$, $u_3$, $u_4$, $w$ and $w'$, and the edges between these vertices. All the vertices except for $v_0$ are $3$-vertices.\label{bfig43333}}
\end{figure}
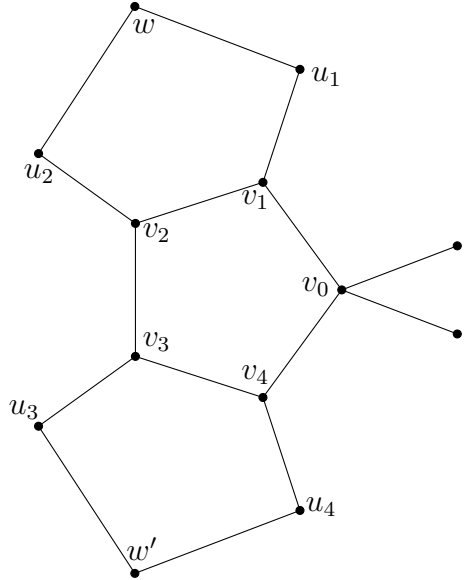

Let $x$ be the third neighbor of $u_1$ ($x$ is distinct from all previously defined vertices). By the girth assumption $xw \notin E$, $xu_2 \notin E$ and $xv_0 \notin E$.

Observe that $xu_4 \notin E$ and $xw' \notin E$. Otherwise consider $H^* = G - \{v_1, v_2, v_3, v_4, u_1, u_2, u_3, u_4, w, w', x\}$, which has $n-11$ vertices and $m' \le m-19$ edges. Adding the vertices $v_1$, $v_2$, $v_3$, $u_1$, $u_4$, $w$ and $w'$ to any induced forest of $H^*$ leads to an induced forest of $G$. Observation~\ref{babg} applied to $(\alpha,\beta,\gamma) = (11,19,7)$ completes the proof. 

Similarly, $xu_3 \notin E$ (just add $u_3$ to $F'$ instead of $w'$).

Finally, let $H^* = G - C - \{u_1, u_2, u_3, u_4, w, w', x\}$. Graph $H^*$ has $n-12$ vertices and $m' \le m-23$ edges. Adding $v_1$, $v_2$, $v_3$, $v_4$, $u_1$, $w$ and $w'$ to any induced forest of $H^*$ leads to an induced forest of $G$. Observation~\ref{babg} applied to $(\alpha,\beta,\gamma) = (12,23,7)$ completes the proof.
\end{proof}

\begin{lemm} \label{b33333linked444}
There is no $5$-face $v_0v_1v_2v_3v_4$ in $G$ such that all the $v_i$ are $3$-vertices, and three of the $v_i$ have a $4$-vertex as their third neighbor.
\end{lemm}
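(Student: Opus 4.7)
The plan is to mirror the scheme of Lemma~\ref{b43333}: use Lemma~\ref{b3facesadj} to extract an auxiliary vertex $w$, reduce to a short case analysis on the positions of the three 4-vertex $u_i$'s relative to $w$, and in each case apply Observation~\ref{babg} with a triple of the form $(8,19,4)$, $(10,20,6)$, $(11,23,6)$, or $(12,23,7)$ from Table~\ref{babgtab}.

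First I would label the face $C = v_0v_1v_2v_3v_4$ and denote by $u_i$ the third neighbor of $v_i$. By Lemma~\ref{b3facesadj}, there is a vertex $w$ adjacent to some consecutive pair $u_j,u_{j+1}$, and rotating $C$ cyclically we may assume $w\sim u_0,u_1$. Independently, since any three positions in a cyclic set of five contain a consecutive pair, the three 4-vertex $u_i$'s include a cyclically consecutive pair. Up to the reflection $i\mapsto 1-i$ (indices modulo $5$), which fixes the pair $\{u_0,u_1\}$, the 4-vertex triple lies in one of six equivalence classes, represented by $\{0,1,2\}$, $\{0,1,3\}$, $\{0,2,3\}$, $\{0,2,4\}$, $\{0,3,4\}$, and $\{2,3,4\}$ (the last one being the case where Lemma~\ref{b3facesadj} is really being used with $w\sim u_2,u_3$ after rotation).

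In each subcase, let $S\subseteq V(G)$ consist of $V(C)$ together with the three 4-vertex $u_i$'s, optionally augmented by $w$ or one further external vertex in the more delicate cases. A degree count shows that $S$ has total degree $15+3\cdot 4+2\cdot 3=33$ in $G$. The girth condition forbids $u_iu_{i+1}$ edges and, via triangle/4-cycle exclusion, $u_iv_{i'}$ for $i'\neq i$, so edges internal to $S$ number only $5+3=8$ plus a bounded number of diagonal $u_iu_{i+2}$ edges, the latter being sharply constrained by Lemma~\ref{bsepcycle} and planarity. Hence at least $19$ edges are incident to $S$ when $|S|=8$; enlarging $S$ by $w$ (or by a further vertex forced by the subcase) raises this to $23$ for $|S|=10$ or $11$. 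The corresponding triple from Table~\ref{babgtab} is then applied, with the add-back set consisting of four consecutive vertices of $C$ together with, when $\gamma$ demands it, one or two $u_j$'s (with $j$ a 3-vertex index) or $w$ itself.

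The main obstacle is to verify that the add-back set combined with an arbitrary induced forest $F'$ of $H^*=G-S$ forms an induced forest of $G$. Added $v_i$'s whose third neighbor $u_i$ lies in $S$ are entirely disconnected from $V(H^*)$ and cause no trouble; but when $u_i\notin S$ (so $u_i$ is one of the two 3-vertex indices that remained in $H^*$), the edge $v_iu_i$ is re-introduced and could close a cycle through $F'$ connecting two such added $v_i$'s. This is handled by dropping a suitable $v_k$ from the add-back set: if the two 3-vertex indices are cyclically consecutive, take $v_k$ to be one of them, otherwise any $v_k$ works. When we must also add back a $u_j$ or $w$ to reach $\gamma=6$ or $7$, one invokes planarity and Lemmas~\ref{bsepcycle} and~\ref{b43333} to rule out short paths in $F'$ joining the neighbors of the re-added 4-vertex, as in the final paragraphs of the proof of Lemma~\ref{b43333}. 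I expect the subcase $\{0,1,3\}$ to be the most delicate, since it mixes a consecutive pair of 4-vertices at $u_0,u_1$ with an isolated 4-vertex at $u_3$, and will likely require a secondary use of Lemma~\ref{b3facesadj} to produce a common neighbor of $u_2,u_3$ playing the role of a second auxiliary vertex.
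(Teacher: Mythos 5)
Your proposal takes a genuinely different and considerably more complicated route than the paper, and it contains a gap at the very point where you wave your hands. You never actually rule out the diagonal edges $u_iu_{i+2}$: you observe that they are ``sharply constrained by Lemma~\ref{bsepcycle} and planarity,'' but Lemma~\ref{bsepcycle} only tells you that if $u_iu_{i+2}\in E$ then both endpoints have degree $4$; it does not exclude the edge. If such an edge exists with both endpoints among your removed $u_i$'s, the edge count drops: the sum of degrees over $S=V(C)\cup\{$three $4$-vertex $u_i$'s$\}$ is $15+12=27$, so with $8$ internal edges you get $27-16=11$ outgoing edges and $19$ total, but each additional internal edge $u_iu_{i+2}$ knocks the total down by one (to $18$, then $17$), and the triple $(8,19,4)$ no longer applies. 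The paper dispatches this cleanly as a preliminary step: if $u_iu_j\in E$ then (by Lemma~\ref{bsepcycle}) both are $4$-vertices, and removing $\{v_0,v_1,v_2,v_3,u_0,u_2\}$ with add-back $\{v_0,v_1,v_2\}$ gives a $(6,14,3)$ contradiction, where the add-back is acyclic precisely because the separating $5$-cycle $v_0v_1v_2u_2u_0$ is entirely deleted and so isolates $v_4$ from $u_1$ in $H^*$. Without this preliminary step your edge count is unsound.

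Beyond the gap, most of your machinery is unnecessary. The auxiliary vertex $w$ from Lemma~\ref{b3facesadj} plays no role in the paper's proof of this lemma (you are likely anticipating its heavy use in Lemma~\ref{b33333}, which concerns the harder all-$u_i$-degree-$3$ case). Once $u_iu_j\notin E$ for all $i,j$, the paper simply removes $C$ together with the three $4$-vertex $u_i$'s, obtaining $m'\le m-19$, and adds back the three $v_i$'s whose $u_i$ was removed plus any one further $v_j$: four of the five cycle vertices always induce a path, and exactly one of them (the $v_j$ whose $u_j$ survives in $H^*$) has an edge to $H^*$, so no cycle can close regardless of which $v_k$ is dropped. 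This makes your six-class case analysis, the ``dropping a suitable $v_k$'' heuristic, and the distinction between consecutive and non-consecutive $3$-vertex indices all superfluous, and your degree tally $15+3\cdot 4+2\cdot 3=33$ (which seems to count a ten-vertex $S$) is inconsistent with the eight-vertex $S$ you defined.
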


\begin{proof}
Let $C = v_0v_1v_2v_3v_4$ be such a face, and let $u_i$ be the third neighbor of $v_i$ for $i \in \{0,1,2,3,4\}$. 

Suppose two of the $u_i$ are adjacent. By the girth assumption the corresponding $v_i$ are not adjacent. W.l.o.g., say $u_0$ and $u_2$ are adjacent. Then since $C$ is a face, $v_0v_1v_2u_2u_0$ is separating, and thus by Lemma~\ref{bsepcycle}, $u_0$ and $u_2$ have degree $4$. Let $H^* = G - \{v_0,v_1,v_2,v_3,u_0,u_2\}$. Graph $H^*$ has $n-6$ vertices and $m' \le m-14$ edges. Adding $v_0$, $v_1$ and $v_2$ to any induced forest of $H^*$ leads to an induced forest of $G$. Observation~\ref{babg} applied to $(\alpha,\beta,\gamma) = (6,14,3)$ leads to a contradiction. Therefore no two $u_i$ are adjacent.

Let $H^*$ obtained from $G$ where we remove $C$ and three $u_i$ of degree $4$. Graph $H^*$ has $n-8$ vertices and $m' \le m-19$ edges. Let $F'$ be any induced forest of $H^*$. Adding the three $v_i$ that correspond to the $u_i$ we removed, plus another $v_i$ to $F'$ leads to an induced forest of $G$. Observation~\ref{babg} applied to $(\alpha,\beta,\gamma) = (8,19,4)$ completes the proof.
\end{proof}

\begin{lemm}\label{bnoadj}
If there are two $5$-cycles $C = v_0v_1v_2v_3v_4$ and $C' = v_0v_1u_2u_3u_4$ sharing an edge $v_0v_1$ in $G$ with only $3$-vertices, then for all $x \in \{u_2, u_3, u_4\}$, $xv_3 \notin E$. Moreover, for all $x \in \{u_2, u_3, u_4\}$, $x$ and $v_3$ do not share a common neighbor.
\end{lemm}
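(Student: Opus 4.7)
The proof follows the template of the preceding structural lemmas of Section~\ref{bproofmain}: assume the claim fails, delete a well-chosen subset of vertices to obtain a smaller graph $H^* \in {\cal P}_5$, and apply Observation~\ref{babg} with a triple from Table~\ref{babgtab} to contradict the minimality of $G$. Each of the two conclusions is treated by case analysis on $x \in \{u_2, u_3, u_4\}$.

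For the non-adjacency $xv_3 \notin E$, the subcases $x = u_2$ and $x = u_4$ are immediate: the hypothetical edges would close the $4$-cycles $u_2 v_1 v_2 v_3$ and $u_4 v_0 v_4 v_3$ respectively, contradicting the girth-$5$ hypothesis. For $x = u_3$, suppose $u_3 v_3 \in E$ and set $H^* = G - \{v_0, v_1, v_2, v_3, v_4, u_2, u_3, u_4\}$. Since all eight removed vertices are $3$-vertices, the removed subgraph contains exactly $10$ internal edges (the nine edges of $C \cup C'$ together with $u_3 v_3$) and $4$ external edges, one from each of $v_2, v_4, u_2, u_4$; note that $v_3$ and $u_3$ now have all three of their neighbors in the removed set. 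Adding $S = \{v_0, v_1, v_3, u_3, u_4\}$ to any induced forest of $H^*$ preserves the forest property, because $G[S]$ is the path $v_1{-}v_0{-}u_4{-}u_3{-}v_3$ and the only edge from $S$ to $V(H^*)$ is the one incident to $u_4$, which just bridges $S$ to one component of the forest. Observation~\ref{babg} with $(\alpha, \beta, \gamma) = (8, 14, 5)$ closes this case.

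For the common-neighbor conclusion, suppose $y \ne x, v_3$ is adjacent to both $x$ and $v_3$. A short case analysis rules out $y \in \{v_0, \ldots, u_4\}$: every such identification either creates a triangle or $4$-cycle via the edges of $C \cup C'$, or reduces to the direct-adjacency case from Part~1 (e.g.\ $y = u_3$ when $x = u_2$ would force $u_3 v_3 \in E$). Hence $y$ is a new vertex serving simultaneously as the third neighbor of both $v_3$ and $x$. One then deletes $\{v_0, \ldots, u_4\} \cup \{y\}$, possibly together with one or two further third-neighbor vertices to make the edge count line up, and picks a subset $S$ of the removed vertices whose induced subgraph is a forest each of whose components has at most one external edge in $H^*$; a suitable triple from Table~\ref{babgtab} (e.g.\ $(9, 15, 6)$, $(10, 20, 6)$, or $(11, 23, 6)$) then closes the case. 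The subcase $x = u_4$ is symmetric to $x = u_2$ across the edge $v_0 v_1$, and the subcase $x = u_3$ is analogous, with $y$ serving as the common third neighbor of $u_3$ and $v_3$.

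The principal delicacy lies in Part~2: one must carefully enumerate all potential identifications of $y$ with previously named vertices and rule each out, and then pick the added subset $S$ so that its attachments to $V(H^*)$ cannot combine with an adversarial induced forest $F'$ to create a cycle. Planarity of the embedding around $C \cup C'$, together with Lemmas~\ref{bdeg3deg4}--\ref{b33333linked444}, sharply constrains the local structure and ensures that a valid choice of $S$ exists.
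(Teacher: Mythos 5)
Your Part~1 is correct and takes a genuinely different route from the paper. Where the paper dispatches $u_3v_3 \notin E$ by invoking Lemma~\ref{bsepcyclefirst} (if $u_3v_3 \in E$, then because $C$ and $C'$ bound faces, one of the $5$-cycles $v_1v_2v_3u_3u_2$ or $v_0v_4v_3u_3u_4$ must be a separating $5$-cycle of $3$-vertices), you instead delete the $8$ vertices of $C \cup C' $ directly. Your edge count of $14$ is right (the girth hypothesis rules out all chords of $C \cup C'$ other than $u_3v_3$), and your added set $\{v_0,v_1,v_3,u_3,u_4\}$ does form a path whose only attachment to $H^*$ is through $u_4$, so the forest property is preserved unconditionally; $(8,14,5)$ is in Table~\ref{babgtab}. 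This is a clean alternative.

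Part~2 has a gap. For $x = u_2$ (and symmetrically $u_4$) you miss the short argument the paper uses: if $y$ is a common neighbor of $u_2$ and $v_3$, then $yu_2v_1v_2v_3$ is a $5$-cycle with at most one $4$-vertex (namely $y$), and because $C$ and $C'$ bound faces, $v_2$'s third neighbor lies strictly inside this $5$-cycle while $v_4$'s and $u_4$'s third neighbors lie outside, so it is separating, contradicting Lemma~\ref{bsepcycle}. Your proposed replacement via deleting $\{v_0,\dots,u_4,y\}$ and applying a triple from the table is left unconstructed. For $x = u_3$ you name the right triple $(9,15,6)$, but the sufficient criterion you state for choosing $S$ --- ``a forest each of whose components has at most one external edge in $H^*$'' --- is not achievable: with $\{v_0,\dots,u_4,w\}$ deleted, only $v_0,v_1,v_3,u_3$ have zero external edges, and any way of completing these to a $6$-vertex induced forest (the paper's choice $v_0,v_1,v_2,v_3,u_3,u_4$ being the natural one) either creates a $5$-cycle of $C$ or $C'$, or gives a single component with two attachments (through $v_2$ and $u_4$). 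The actual reason the addition preserves the forest is a planar separation argument: the boundary cycles $wv_3v_2v_1u_2u_3w$ and $wv_3v_4v_0u_4u_3w$ place $v_2$'s and $u_4$'s third neighbors in distinct regions whose interfaces consist entirely of deleted or added vertices, so no path in $H^*$ can join the two attachment points. You gesture at planarity at the end, but the stated criterion contradicts it and the verification is never carried out, so Part~2 as written does not establish the lemma.
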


\begin{proof}
Let $C = v_0v_1v_2v_3v_4$, $C' = v_0v_1u_2u_3u_4$, and $x \in \{u_2, u_3, u_4\}$. Cycles $C$ and $C\rq{}$ are the boundaries of faces by Lemma~\ref{bsepcycle}. If $x$ is either $u_2$ or $u_4$, then we can conclude by the girth assumption and Lemma~\ref{bsepcycle}.

Consider now the case $x = u_3$. By Lemma~\ref{bsepcyclefirst}, $v_3u_3\notin E$. Finally assume that there is a vertex $w$ adjacent to both $v_3$ and $u_3$. Let $H^* = G-(C\cup C') - w$. Graph $H^*$ has $n-9$ vertices and $m' \le m-15$ edges. Adding $v_0$, $v_1$, $v_2$, $v_3$, $u_3$ and $u_4$ to any induced forest of $H^*$ leads to an induced forest of $G$. Observation~\ref{babg} applied to $(\alpha,\beta,\gamma) = (9,15,6)$ completes the proof.
\end{proof}

\begin{figure}[h]
\begin{center}
\begin{tikzpicture}[line cap=round,line join=round,>=triangle 45,x=2.0cm,y=2.0cm]
\clip(-3.3,-1.7) rectangle (1.7,1.7);
\draw (1.0,-0.0)-- (0.3090169943749475,0.9510565162951535);
\draw (0.3090169943749475,0.9510565162951535)-- (-0.8090169943749473,0.5877852522924731);
\draw (-0.8090169943749473,0.5877852522924731)-- (-0.8090169943749475,-0.5877852522924731);
\draw (0.30901699437494745,-0.9510565162951536)-- (1.0,-0.0);
\draw (0.30901699437494745,-0.9510565162951536)-- (-0.8090169943749475,-0.5877852522924731);
\draw (1.3492549017242996,-0.0)-- (1.0,-0.0);
\draw (-0.7769882545702107,-0.35984868745915884) node[anchor=north west] {$v_0$};
\draw (-0.7416460200251439,0.5678849693488446) node[anchor=north west] {$v_1$};
\draw (0.0595809608740594,0.8859650802544457) node[anchor=north west] {$v_2$};
\draw (0.6073855963225948,0.12610703753550965) node[anchor=north west] {$v_3$};
\draw (0.041909843601526,-0.6514221224559599) node[anchor=north west] {$v_4$};
\draw (-2.618033988749895,0)-- (-1.9270509831248428,0.9510565162951538);
\draw (-1.9270509831248428,-0.9510565162951534)-- (-2.618033988749895,0);
\draw (-2.9672888904741948,0)-- (-2.618033988749895,0);
\draw (-0.8090169943749473,0.5877852522924731)-- (-1.9270509831248428,0.9510565162951538);
\draw (-0.8090169943749475,-0.5877852522924731)-- (-1.9270509831248428,-0.9510565162951534);
\draw (0.3090169943749475,0.9510565162951535)-- (0.41694269437650827,1.2832176664280721);
\draw (0.30901699437494745,-0.9510565162951536)-- (0.4169426943765081,-1.2832176664280721);
\draw (-2.034976683126403,1.2832176664280723)-- (-1.9270509831248428,0.9510565162951538);
\draw (-1.9270509831248428,-0.9510565162951534)-- (-2.0349766831264033,-1.283217666428072);
\draw (-1.969788670466215,0.8859650802544457) node[anchor=north west] {$u_2$};
\draw (-2.526428864551017,0.12610703753550965) node[anchor=north west] {$u_3$};
\draw (-2.0,-0.6) node[anchor=north west] {$u_4$};
\begin{scriptsize}
\draw [fill=black] (0.3090169943749475,0.9510565162951535) circle (1.5pt);
\draw [fill=black] (1.0,-0.0) circle (1.5pt);
\draw [fill=black] (-0.8090169943749473,0.5877852522924731) circle (1.5pt);
\draw [fill=black] (-0.8090169943749475,-0.5877852522924731) circle (1.5pt);
\draw [fill=black] (0.30901699437494745,-0.9510565162951536) circle (1.5pt);
\draw [fill=black] (0.41694269437650827,1.2832176664280721) circle (1.5pt);
\draw [fill=black] (1.3492549017242996,-0.0) circle (1.5pt);
\draw [fill=black] (0.4169426943765081,-1.2832176664280721) circle (1.5pt);
\draw [fill=black] (-2.618033988749895,0) circle (1.5pt);
\draw [fill=black] (-1.9270509831248428,0.9510565162951538) circle (1.5pt);
\draw [fill=black] (-1.9270509831248428,-0.9510565162951534) circle (1.5pt);
\draw [fill=black] (-2.618033988749895,0) circle (1.5pt);
\draw [fill=black] (-2.9672888904741948,0) circle (1.5pt);
\draw [fill=black] (-2.618033988749895,0) circle (1.5pt);
\draw [fill=black] (-2.034976683126403,1.2832176664280723) circle (1.5pt);
\draw [fill=black] (-2.0349766831264033,-1.283217666428072) circle (1.5pt);
\end{scriptsize}
\end{tikzpicture}
\end{center}
\caption{The construction of Lemma~\ref{bnoadj}. All the edges between the vertices $v_0$, $v_1$, $v_2$, $v_3$, $v_4$, $u_2$, $u_3$ and $u_4$ are represented.\label{bfigureplus}}
\end{figure}

\begin{lemm} \label{b33333}
There is no $5$-face in $G$ with only $3$-vertices.
\end{lemm}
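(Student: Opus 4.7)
The plan is to suppose for contradiction that $C=v_0v_1v_2v_3v_4$ is a $5$-face all of whose vertices are $3$-vertices, and let $u_i$ denote the third neighbor of $v_i$ (indices modulo $5$). First I would use the girth hypothesis together with Lemmas \ref{bsepcyclefirst} and \ref{bsepcycle} to verify that the $u_i$ are pairwise distinct and pairwise non-adjacent, and Lemma \ref{b33333linked444} to ensure that at most two of them are $4$-vertices. Lemma \ref{b3facesadj}, applied after a cyclic relabeling of $C$, then provides a vertex $w$ adjacent to both $u_0$ and $u_1$; by girth, $w$ is distinct from every $v_i$ and every $u_i$. The auxiliary $5$-cycle $C'=v_0v_1u_1wu_0$ is forced to bound a face by Lemma \ref{bsepcyclefirst}, and Lemma \ref{b43333} combined with $v_0,v_1$ being $3$-vertices forces either $u_0,u_1,w$ all to be $3$-vertices, or at least two of them to be $4$-vertices. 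Lemma \ref{bnoadj} applied to the pair of faces $(C,C')$ then ensures that $v_3$ is neither adjacent to, nor shares a common neighbor with, any of $u_0,u_1,w$; in particular $u_3\notin\{u_0,u_1,w\}$ and $u_3$ is non-adjacent to them, which cleanly separates the portion of $G$ attached to $C'$ from the portion on the $v_3$-side of $C$.

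The next step is to apply Observation \ref{babg} to the deletion $A=\{v_0,\dots,v_4,u_0,\dots,u_4,w\}$. The subgraph $G[A]$ contains exactly $12$ internal edges (the five edges of $C$, the five edges $v_iu_i$, and the two edges $wu_0$, $wu_1$), while each of the six vertices $u_0,u_1,u_2,u_3,u_4,w$ contributes between $1$ and $3$ external edges depending on its degree. Whenever at least one of those six vertices has degree $4$, the total edge count incident with $A$ already reaches the threshold of the triple $(11,23,6)$ in Table \ref{babgtab}, and an induced forest of $G$ is then recovered by adding back to any induced forest $F^*$ of $G-A$ four of the $v_i$'s together with a suitable extra vertex of $A$; the planar isolation coming from the faces $C$ and $C'$, together with the separation statement above, rules out any new cycle.

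The main obstacle will be the sub-case in which all six vertices $u_0,u_1,u_2,u_3,u_4,w$ are $3$-vertices, since the naive count then gives only $12+9=21$ incident edges, two short of the required $23$. My plan to close this gap is to apply Lemma \ref{b3facesadj} a second time to the face $C'$, using a cyclic relabeling of $C'$ under which neither of the two alternatives of the lemma is trivially realised by a vertex already in $A$ (in the naive labeling, one alternative is satisfied gratuitously by $v_3$ as a common neighbor of $v_2$ and $v_4$, and a rotation of $C'$ is needed to avoid this). The lemma then forces a genuinely new vertex $y$ outside $A$ with two edges into the set of third neighbors of the non-$v$ vertices of $C'$; enlarging the deletion by $y$ and, if necessary, one of those third neighbors, supplies the missing edges, and the triple $(12,23,7)$ or $(13,23,8)$ from Table \ref{babgtab} then completes the contradiction.
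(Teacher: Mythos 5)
Your overall strategy (exploit Lemma~\ref{b3facesadj} to produce a vertex $w$ adjacent to two consecutive $u_i$'s, use that this creates a second $5$-face $C'$ sharing the edge $v_0v_1$ with $C$, then delete $C\cup\{u_0,\dots,u_4,w\}$ and invoke Observation~\ref{babg}) is the right one, and it matches the spirit of the paper's proof. However, there are two concrete gaps that make the argument incomplete.

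First, the edge-count claim is wrong. With $A=\{v_0,\dots,v_4,u_0,\dots,u_4,w\}$ and all adjacencies in $G[A]$ being the $12$ you list, the number of deleted edges is $12$ plus the external degrees of $u_0,\dots,u_4,w$. If all six have degree $3$, this is $12+(1+1+2+2+2+1)=21$, as you say. But if exactly one of them has degree $4$ the count rises only to $22$, still one short of the $\beta=23$ required for the triple $(11,23,6)$. So your assertion that ``whenever at least one of those six vertices has degree $4$, the total edge count already reaches the threshold'' is false; you need at least two of them to be $4$-vertices. This is exactly why the paper's proof is organized as a case analysis on the number and position of $4$-vertices among the $u_i$, and why it uses deletion sets of varying size ($11$, $12$, or $13$ vertices) paired with the triples $(11,23,6)$, $(11,19,7)$, $(12,23,7)$, and $(13,23,8)$: the cases with few $4$-vertices genuinely require deleting more vertices to harvest enough edges, and there is no single set of $11$ vertices that works in every case.

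Second, several of the details you invoke to close the argument are not established. Declaring $G[A]$ to have \emph{exactly} $12$ edges requires ruling out $wu_2$, $wu_3$, $wu_4$, and $wv_3$; Lemma~\ref{bnoadj} only dispatches $wv_3$ and $wu_3$, and you would still have to handle $wu_2$ and $wu_4$ (the paper does this explicitly in Lemma~\ref{b43333}'s proof, and Lemma~\ref{b33333}'s proof tracks these adjacencies with care). The claim that $C'$ ``is forced to bound a face by Lemma~\ref{bsepcyclefirst}'' is premature: that lemma applies only once you know all of $u_0,u_1,w$ are $3$-vertices, which you are in the middle of establishing. The construction of the induced forest is also underspecified: you say you add back ``four of the $v_i$'s together with a suitable extra vertex'', which is five vertices when $\gamma=6$; which six vertices to re-insert, and the verification that no cycle is created, is where most of the work in the paper's proof lies. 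Finally, the fall-back plan of re-applying Lemma~\ref{b3facesadj} to $C'$ and relabeling to dodge the ``gratuitous'' solution $v_3$ is a genuinely promising idea, but it does not touch the exactly-one-$4$-vertex case that your first step also misses, and the asserted $(12,23,7)$/$(13,23,8)$ upgrades are not worked out. In short, the skeleton is right and close to the paper's, but the degree-$4$ arithmetic and the per-case verifications are missing, and these are precisely where the difficulty of the lemma resides.
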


\begin{proof}
Let $C = v_0v_1v_2v_3v_4$ be such a face, and let $u_i$ be the third neighbors of $v_i$ for $i \in \{0,1,2,3,4\}$. By Lemma~\ref{b33333linked444}, no more than two of the $u_i$ are $4$-vertices.

By the girth assumption, all the $u_i$ are distinct and two $u_i$ whose corresponding $v_i$ are adjacent are not adjacent. 

We prove now that there is no edge between the $u_i$. W.l.o.g. suppose $u_0u_2 \in E$. By Lemma~\ref{bsepcycle}, $u_0$ and $u_2$ are $4$-vertices. Let $H^* = G - C - \{u_0,u_2\}$. Graph $H^*$ has $n-7$ vertices and $m' \le m-14$ edges. Let $F'$ be any induced forest of $H^*$. Adding $v_0$, $v_1$, $v_2$ and $v_3$ to $F'$ leads to an induced forest of $G$. Observation~\ref{babg} applied to $(\alpha,\beta,\gamma) = (7,14,4)$ leads to a contradiction.

We now consider four cases:
\begin{itemize}
\item
Suppose two $u_i$ have degree $4$, and the corresponding $v_i$ are adjacent. W.l.o.g. $u_0$ and $u_1$ have degree $4$. 

Let us first assume that there is a vertex $w$ adjacent to $u_2$ and $u_3$. Vertex $w$ has degree $3$ by Lemmas~\ref{bsepcycle} and~\ref{b43333} (in particular $w \ne u_0$). Vertex $w$ is not adjacent to any of the $v_i$ or $u_i$ except for $u_2$ and $u_3$ by Lemma~\ref{bnoadj}.
Let $H^* = G - C - \{u_0,u_1,u_2,u_3,u_4,w\}$. Graph $H^*$ has $n-11$ vertices and $m' \le m-23$ edges. Adding $v_0$, $v_1$, $v_2$, $v_4$, $u_2$ and $u_3$ to any induced forest of $H^*$ leads to an induced forest of $G$. Observation~\ref{babg} applied to $(\alpha,\beta,\gamma) = (11,23,6)$ leads to a contradiction.

So there is no vertex $w$ adjacent to $u_2$ and $u_3$, and by symmetry there is no vertex $w$ adjacent to $u_3$ and $u_4$. By Lemma~\ref{b3facesadj} there is a vertex $w'$ adjacent to $u_4$ and $u_0$. By Lemmas~\ref{bsepcycle} and~\ref{b43333}, $w'$ has degree $4$. By Lemma~\ref{bdeg3deg4}, since there is no edge among the $u_i$ and by the girth assumption, there is a vertex $w$ adjacent to $u_3$ and $u_4$, a contradiction.

\item
Suppose two $u_i$ have degree $4$, and the corresponding $v_i$ are not adjacent. W.l.o.g. $u_0$ and $u_2$ have degree $4$. Then by Lemma~\ref{b3facesadj} there is a vertex $w'$ adjacent either to $u_0$ and $u_4$ or to $u_2$ and $u_3$. W.l.o.g. $w'$ is adjacent to $u_2$ and $u_3$. By Lemmas~\ref{bsepcycle} and~\ref{b43333}, $w'$ has degree $4$. By Lemma~\ref{bdeg3deg4}, since there is no edge among the $u_i$ and by the girth assumption, there is a vertex $w$ adjacent to $u_3$ and $u_4$. Vertex $w$ has degree $3$ by Lemmas~\ref{bsepcycle} and~\ref{b43333}. Vertex $w$ is not adjacent to any of the $v_i$ or $u_i$ except $u_3$ and $u_4$ by Lemma~\ref{bnoadj}. 
Let $H^* = G - C - \{u_0,u_1,u_2,u_3,u_4,w\}$. Graph $H^*$ has $n-11$ vertices and $m' \le m-23$ edges. Adding $v_0$, $v_1$, $v_2$, $v_4$, $u_3$ and $u_4$ to any induced forest of $H^*$ leads to an induced forest of $G$. Observation~\ref{babg} applied to $(\alpha,\beta,\gamma) = (11,23,6)$ leads to a contradiction.

\item
Suppose exactly one $u_i$ has degree $4$, $u_0$ w.l.o.g., and $u_0$ is adjacent to a vertex $w$ that is adjacent to either $u_1$ or $u_4$, say $u_1$. Vertex $w$ has degree $4$ by Lemmas~\ref{bsepcycle} and~\ref{b43333}. By Lemma~\ref{bdeg3deg4}, since there is no edge among the $u_i$ and by the girth assumption, there is a vertex $w'$ adjacent to $u_1$ and $u_2$. Moreover $w'$ has degree $3$ by Lemmas~\ref{bsepcycle} and~\ref{b43333}. Vertex $w'$ is not adjacent to any of the $v_i$ or $u_i$ except for $u_1$ and $u_2$ by Lemma~\ref{bnoadj}. By Lemma~\ref{b3facesadj}, there is a vertex $w''$ adjacent either to $u_2$ and $u_3$ or to $u_0$ and $u_4$.

Suppose $w''$ is adjacent to $u_2$ and $u_3$. By Lemmas~\ref{bsepcycle} and~\ref{b43333}, $w''$ has degree $3$, and $w''$ is not adjacent to any of the $v_i$ or $u_i$ except $u_2$ and $u_3$ by Lemma~\ref{bnoadj}. By the girth assumption, $w'w'' \notin E$ and $ww' \notin E$. By Lemmas~\ref{bsepcycle} and~\ref{b43333}, $ww'' \notin E$. By Lemma~\ref{bnoadj} applied to $v_2u_2w''u_3v_3$ and $v_1u_1w'u_2v_2$, $wu_3 \notin E$. Let $H^* = G - C - \{u_0,u_1,u_2,u_3,w,w',w''\}$. Graph $H^*$ has $n-12$ vertices and $m' \le m-23$ edges. Adding $v_0$, $v_2$, $v_4$, $u_1$, $u_2$, $u_3$, and $w'$ to any induced forest of $H^*$ leads to an induced forest of $G$. Observation~\ref{babg} applied to $(\alpha,\beta,\gamma) = (12,23,7)$ leads to a contradiction.

Thus $w''$ is adjacent to $u_0$ and $u_4$. By the same arguments as above, $w''$ being the symmetrical of $w$, $w''$ has degree $4$ and there is a $3$-vertex $w'''$ adjacent to $u_3$ and $u_4$, and not to any other of the $u_i$ and $v_i$. 

Suppose $w'w''' \in E$. Let $H^* = G - C - \{u_1,u_2,u_3,u_4,w',w'''\}$. Graph $H^*$ has $n-11$ vertices and $m' \le m-19$ edges. Adding $v_1$, $v_2$, $v_3$, $v_4$, $u_3$, $u_4$ and $w'$ to any induced forest of $H^*$ leads to an induced forest of $G$. Observation~\ref{babg} applied to $(\alpha,\beta,\gamma) = (11,19,7)$ leads to a contradiction.

Thus $w'w''' \notin E$. Recall that $w'$ and $w'''$ are not adjacent to any of the $v_i$ or $u_i$ except for $u_1$ and $u_2$, and $u_3$ and $u_4$ respectively. Let $H^* = G - C - \{u_0,u_1,u_2,u_3,u_4,w',w'''\}$. Graph $H^*$ has $n-12$ vertices and $m' \le m-23$ edges. Adding $v_0$, $v_1$, $v_2$, $v_3$, $u_3$, $u_4$ and $w'$ to any induced forest of $H^*$ leads to an induced forest of $G$. Observation~\ref{babg} applied to $(\alpha,\beta,\gamma) = (12,23,7)$ leads to a contradiction.

\item
Thus either all the $u_i$ have degree $3$, or $u_0$ has degree $4$ and there is no $w$ adjacent to $u_0$ and either to $u_1$ or to $u_4$. In both cases $u_1$, $u_2$, $u_3$ and $u_4$ have degree $3$, and, w.l.o.g., by Lemma~\ref{b3facesadj} there are vertices $w_1$, $w_2$ and $w_3$ adjacent to $u_1$ and $u_2$, to $u_2$ and $u_3$ and to $u_3$ and $u_4$ respectively. For all $j \in \{1,2,3\}$, by Lemmas~\ref{bsepcycle} and~\ref{b43333}, $w_{j}$ has degree $3$, and by Lemma~\ref{bnoadj}, $w_{j}$ is not adjacent to any of the $u_i$ and $v_i$ except for $u_{j}$ and $u_{j+1}$. We have $w_1w_2 \notin E$ and $w_2w_3 \notin E$ by the girth assumption, and $w_1w_3 \notin E$ by Lemma~\ref{bsepcyclefirst}. Let $H^* = G - C - \{u_0,u_1,u_2,u_3,u_4,w_1,w_2,w_3\}$. Graph $H^*$ has $n-13$ vertices and $m' \le m-23$ edges. Adding $v_0$, $v_1$, $v_2$, $v_3$, $u_1$, $u_2$, $u_3$ and $u_4$ to any induced forest of $H^*$ leads to an induced forest of $G$. Observation~\ref{babg} applied to $(\alpha,\beta,\gamma) = (13,23,8)$ completes the proof.
\end{itemize}
\end{proof}

Each $4$-vertex is in the boundary of at most four faces. Therefore the sum of the $c_{4}(f)$ over all the $5$-faces is $\sum_{f,l(f) = 5} c_{4}(f) \le 4n_4$.
From Lemmas~\ref{b43333} and~\ref{b33333} we can deduce that for each $5$-face $f$ we have $c_{4}(f) \ge 2$.  Thus $\sum_{f,l(f) = 5} c_{4}(f) \ge 2k_5$. Thus we have the following:

$$4n_4 \ge 2k_5$$
By Euler's formula, we have:

\begin{eqnarray*} 
-12 & = & 6m - 6n - 6k  \\
& = & 2\sum_{v \in V(G)}d(v) + \sum_{f \in F(G)}l(f) - 6n - 6k \\
& = & \sum_{d \ge 3}(2d - 6)n_d + \sum_{l \ge 5}(l-6)k_l  \\
& \ge & 2n_4 - k_5  \\
& \ge & 0 \\
\end{eqnarray*}
This is a contradiction, which ends the proof of Theorem~\ref{bgenmain}.

\bibliographystyle{plain}
\bibliography{biblio} {}

\end{document}